\let\uml\"
\newtheorem{theorem}{Theorem}[section]
\newtheorem{lemma}[theorem]{Lemma}
\newtheorem{proposition}[theorem]{Proposition}
\newtheorem{corollary}[theorem]{Corollary}
\newtheorem{definition}[theorem]{Definition}
\newtheorem{remark}[theorem]{Remark}
\newcommand{\cl}[1]{\mathcal{#1}}
\newcommand{\bb}[1]{\mathbb{#1}}
\begin{document}

\title[Transfer of quantum game strategies]
{Transfer of quantum game strategies}

\author[G. Hoefer]{Gage Hoefer}
\address{Department of Mathematics\\ Dartmouth College\\ Hanover NH \\ 03755 \\ USA}
\email{gage.hoefer@dartmouth.edu}

\subjclass[2020]{47L07, 47L25, 81P16, 81R50}

\keywords{Non-local game, quantum no-signalling correlations, operator systems, game algebra.}



\begin{abstract} 
We develop a method for the transfer of perfect strategies between various classes of two-player, one round cooperative non-local games with quantum inputs and outputs using the simulation paradigm in quantum information theory. We show that such a transfer is possible when canonically associated operator spaces for each game are quantum homomorphic or isomorphic, as defined in \cite{gt_two}. We examine a new class of QNS correlations, needed for the transfer of strategies between games, and characterize them in terms of states on tensor products of canonical operator systems. We define jointly tracial correlations and show they correspond to traces acting on tensor products of canonical ${\rm C}^{*}$-algebras associated with individual game parties. We then make an inquiry into the initial application of such results to the study of concurrent quantum games. 
\end{abstract}

\maketitle

\tableofcontents


\section{Introduction}\label{s_intro}

In the past few decades, non-local games have been studied under a variety of names (such as Bell inequalities) and across the disciplines of physics, mathematics, and computer science; they have significant connections to areas as diverse as noncommutative geometry, quantum complexity theory, entanglement theory, and operator algebras. The latter provides a particularly fruitful framework for approaching questions of non-locality in quantum systems, as the input-output behavior of measurements on bipartite quantum systems can be encoded through noncommutative operator algebras and their state spaces. The study of such bipartite systems can therefore be translated into the study of associated operator algebras, and so numerous tools from functional analysis can be applied to help obtain answers for more physically motivated questions. 

A non-local game is (formally speaking) a tuple $\mathbb{G} = (X, Y, A, B, \lambda)$ of finite sets $X, Y, A, B$ and a function $\lambda: X\times Y\times A\times B\rightarrow \{0, 1\}$. The game is played cooperatively by two players, against a referee. Our two players--- call them Alice and Bob--- are separated spatially, and are not allowed to communicate during the game. For our purposes, the game takes place over a single round; during a round, the referee samples question pairs $(x, y) \in X\times Y$, and sends question $x$ to Alice and question $y$ to Bob. Alice and Bob must respond with answers $a \in A$ and $b \in B$, respectively. The two win the round if and only if $\lambda$ evaluates to $1$ on their question-answer pairs; that is, they win if and only if $\lambda(x, y, a, b) = 1$. 

While our two players are not allowed to communicate during the game, to improve their chances of success they can coordinate their answers according to a predetermined strategy. Our players may have access to a shared quantum entangled state, and measurements on this state by each player can improve their chance of winning by coordinating their answers to the referee in such a way that would not be possible classically (see \cite{bell}: the CHSH inequality and its use in the proof of the celebrated “Bell’s Theorem” during the 1960’s, or \cite{bct}). Specific classes of strategies--- known as correlations--- based on the use (or non-use) of shared classical and quantum resources between the two players are of particular interest in describing their behavior; such correlations are formalized using classical information channels, when considered as conditional probability distributions. Different mathematical models corresponding to each strategy type describe the outcomes of these experiments: local (corresponding to the use of classical resources), quantum (corresponding to finite-dimensional entangled resources), quantum approximate (corresponding to liminal entangled resources), quantum commuting (which arises from the commuting model of quantum mechanics), and general no-signalling (which does not necessarily rely on the use of a shared resource, but as a probabilistic strategy must still satisfy the basic constraints of the game), are the main correlation classes of interest. These are denoted as $\cl{C}_{\rm loc}, \cl{C}_{\rm q}, \cl{C}_{\rm qa}, \cl{C}_{\rm qc}$ and $\cl{C}_{\rm ns}$, respectively. 

One particular connection between the study of quantum information theory, non-locality, and operator algebras driving much of the recent development in these areas is the equivalence of Tsirelson's problem in quantum physics, and Connes' embedding problem (or CEP) in von Neumann algebra theory; this equivalence was established in \cite{fritz, jnppsw, ozawa}. The subsequent investigation of this equivalence led to the resolution of many other important questions including a refutation to the strong Tsirelson problem in \cite{slofstra_one} (see also \cite{dpp}), and a negative answer to the CEP in \cite{mipre}. Non-local games lay at the base of all of these approaches, and questions involving non-local games are the main motivation for the current work.

While classical non-local games are fruitful objects of study, as combinatorial objects with finite sets of inputs and outputs they ultimately have inherent restrictions; in an attempt to surpass some of these limitations, more attention has been given to \textit{quantum games}. These are non-local games where the inputs and outputs are allowed to be quantum states, or sometimes mixtures of classical and quantum states. In this setting, question and answer sets $X, Y, A$ and $B$ are replaced by spaces $\bb{C}^{|X|}, \bb{C}^{|Y|}, \bb{C}^{|A|}$ and $\bb{C}^{|B|}$, and strategies are implemented using quantum channels $\Phi: M_{X}\otimes M_{Y}\rightarrow M_{A}\otimes M_{B}$ instead of classical channels $\mathcal{N}: X\times Y\rightarrow A\times B$. The lack of communication between players is enforced by strictly requiring the use of \textit{quantum no-signalling (QNS)} channels, as introduced in \cite{dw}. Furthermore, the hierarchy of classical correlations is replaced by their quantum analogues, first introduced in \cite{tt} (see also \cite{bks}). The rules of the game can be generalized to the quantum context by replacing $\lambda$ with a $0$-preserving and join-preserving map between the projection lattices $\cl{P}_{XY}$ and $\cl{P}_{AB}$ of $M_{X}\otimes M_{Y}$ and $M_{A}\otimes M_{B}$, respectively. Such games have been increasingly studied over the past few years (see \cite{synch_bhtt, bhtt, cjpp, rv, tt} for a non-exhaustive list). As legitimate generalizations of classical non-local games, the hope is that by ``enlarging" the space of possible inputs and outputs (when compared to finite sets), a wealth of new examples are provided that might help shed more light on some of the previously mentioned questions in all areas. 

The purpose of the present paper is to generalize the results of \cite{gt_one} to the context of quantum non-local games; in that work, generalized homomorphisms and isomorphisms between classical non-local games were introduced. The existence of a homomorphism or isomorphism of type ${\rm t}$ from game $\bb{G}_{1}$ to game $\bb{G}_{2}$ lead to a relation between optimal game values, when playing with strategies of type ${\rm t}$; specifically, an inequality of values in the former case, and equality in the latter case. We wanted to obtain similar results for quantum games, and identify necessary conditions for when two quantum games are similar in this sense. In order to identify when two quantum games are homomorphic (respectively, isomorphic) of some type ${\rm t}$, we looked at several classes of quantum games which have canonically associated \textit{quantum hypergraphs}. These are subspaces of linear operators acting between finite-dimensional spaces which in some sense encode the properties or rules of the game. Using the notion of ${\rm t}$-homomorphism and ${\rm t}$-isomorphism of quantum hypergraphs introduced in \cite{gt_two}, we thus had a way to characterize when our quantum games were similar. 

Quantum game homomorphisms of a given type ${\rm t}$ were defined using QNS correlations of the same type, subject to additional constraints. Such conditions were added in order to allow the transport of perfect strategies of type ${\rm t}$ from the first game to perfect strategies of type ${\rm t}$ for the second. As in \cite{gt_one, gt_two}, ${\rm t}$-isomorphisms require the use of QNS bicorrelations, which were first defined in \cite{bhtt}. This process of strategy transfer employs the simulation paradigm for quantum channels (see \cite{dw}). According to the paradigm, if we start with a quantum channel $\mathcal{E}: M_{X_{1}}\rightarrow M_{A_{1}}$ from alphabet $X_{1}$ to alphabet $A_{1}$, using assistance from no-signalling resources over $(X_{2}, A_{1}, X_{1}, A_{2})$ (i.e., a QNS channel $\Gamma: M_{X_{2}}\otimes M_{A_{1}}\rightarrow M_{X_{1}}\otimes M_{A_{2}}$) we can construct a new quantum channel $\Gamma[\cl{E}]: M_{X_{2}}\rightarrow M_{A_{2}}$ dependent on $\cl{E}$ and $\Gamma$ from alphabet $X_{2}$ to $A_{2}$. A similar approach for classical channels was used in \cite{gt_one}, and one of the main focuses of this work is to show how it extends to the quantum case. 

We now describe the organization of the paper in more detail. In Section \ref{s_prelim}, we set notation and recall the definition of the main no-signalling correlation types for both classical and quantum channels, and introduce the simulation setup for quantum channels. Section \ref{s_ssom} contains the definitions of the different types of a class of stochastic (and bistochastic) operator matrices, and how these operator matrices will be used to define necessary subclasses of QNS correlations (which we call strongly quantum no-signalling) over the quadruple $(X_{2}\times Y_{2})\times (A_{1}\times B_{1})\times (X_{1}\times Y_{1})\times (A_{2}\times B_{2})$. In Section \ref{strat_transport}, the strongly quantum no-signalling correlations are used in the simulation paradigm for quantum channels to establish strategy transport for quantum games; this is achieved in Theorem \ref{strat_transp} and Theorem \ref{bi_strat_transp} for the QNS correlation and bicorrelation cases, respectively. We note that the definitions of Sections \ref{s_ssom} and \ref{sqns_correlations_section} generalize those from \cite[Section 4, 5]{gt_one}; thus, we recover many of the results (in the classical case) from \cite{gt_one} in Sections \ref{s_ssom}-\ref{strat_transport}.

Section \ref{s_perfstrats} contains some of the main results of the paper, where we restrict our attention to the transfer of perfect strategies for various types of quantum games. This application employs SQNS correlations as strategies  for the ${\rm t}$-homomorphism and isomorphism games between quantum non-local games, when utilizing the framework of generalized homomorphisms of quantum hypergraphs as introduced in \cite[Section 3]{gt_two}. A characterization of when perfect strategy transfer is possible between games when the first leg is classical while the second is quantum, along with when both are implication games are included (see \cite{tt} for relevant introductions).

The last section features the other main focus of the paper, wherein we investigate our new QNS correlations leading up to an operator system characterization for each subclass (similar to those obtained in \cite{lmprsstw, tt} and \cite[Section 7]{gt_one}). Using these results, we also focus on the transfer of strategies between \textit{concurrent} quantum games--- a class of quantum game first introduced in \cite{synch_bhtt} and further developed in \cite{bhtt, tt}. As a proposed quantization of the class of synchronous games, and analogous to the adaptation of the synchronicity condition for classical game homomorphisms which necessitated the definition of \textit{jointly synchronous} correlations in \cite[Section 8]{gt_one}, we define \textit{jointly tracial} SQNS correlations in this section. A tracial characterization of jointly tracial correlations is contained in Theorem \ref{joint_tracial_strategies_thm} (in the same vein as the characterizations established in \cite{synch_bhtt}, \cite{bhtt} and \cite{hmps}), and Theorem \ref{simulated_tracial} shows that these are the right subclass of SQNS correlations to use for transferring tracial correlations. Thus, the transfer of strategies for quantum games developed within specializes to the concurrent case. Finally, initial connections between the transfer of concurrent/tracial strategies and traces on canonically associated $*$-algebras and ${\rm C}^{*}$-algebras for concurrent games are discussed.

\subsection*{Acknowledgements}
I would like to thank my advisor Ivan G. Todorov for the many fruitful discussions on earlier drafts of the paper. I would also like to thank Lyudmila Turowska for helpful comments on some results in the last section, and Alexandros Chatzinikolaou for identifying the gap in the proof of \cite[Lemma 5.8]{gt_one}, necessitating the change in definition. Finally, I would like to thank the referee(s) for their detailed reading and remarks.


\section{Preliminaries}\label{s_prelim}
In this section we set notation, and include the necessary preliminaries on quantum no-signalling correlations to be used throughout the rest of the paper. For a finite set $X$, we let $\bb{C}^{X} = \oplus_{x \in X}\bb{C}$ and write $(e_{x})_{x \in X}$ for the canonical orthonormal basis of $\bb{C}^{X}$. Similarly, if $H$ is a Hilbert space we set $H^{X} = \oplus_{x \in X}H$. If $X$ is countable (not necessarily finite), let $\ell_{2}^{X}$ denote the Hilbert space of square summable sequences over $\bb{C}$; in the specific case that $X$ is finite, we will sometimes write $\ell_{2}^{X} = \bb{C}^{X}$. For finite $X, Y$, we will often abbreviate the Cartesian product $X\times Y$ as $XY$ for ease of use in notation. We denote by $M_{X}$ the algebra of all complex matrices of size $X\times X$, and by $\cl{D}_{X}$ its subalgebra of all diagonal matrices. We write $\epsilon_{xx'}, x, x' \in X$ for the canonical matrix units in $M_{X}$, denote by ${\rm Tr}$ the trace functional on $M_{X}$, and set $\langle S, T\rangle = {\rm Tr}(ST^{*})$, where the adjoint is with respect to the canonical orthonormal basis. We let $\Delta_{X}: M_{X}\rightarrow \cl{D}_{X}$ denote the canonical conditional expectation on the full matrix algebra. Set
\begin{gather*}
	J_{X} := \sum\limits_{x, x'}\epsilon_{xx'}\otimes \epsilon_{xx'}, \;\;\;\; J_{X}^{\rm cl} := \sum\limits_{x \in X}\epsilon_{xx}\otimes \epsilon_{xx},
\end{gather*}
\noindent and $\tilde{J}_{X} := \frac{1}{|X|}J_{X}$. If $\mathfrak{m}_{X} =\frac{1}{\sqrt{|X|}}\sum\limits_{x \in X}e_{x}\otimes e_{x}$ is the maximally entangled unit vector in $\bb{C}^{X}\otimes \bb{C}^{X}$, then $\tilde{J}_{X} = \mathfrak{m}_{X}\mathfrak{m}_{X}^{*}$ is the corresponding rank-one projection. Note that $J_{X}^{\rm cl} = \Delta_{XX}(J_{X})$, and thus we may think of $J_{X}^{\rm cl}$ as the ``classical" part of the (unnormalized) state $J_{X}$.

For a Hilbert space $H$, let $\cl{B}(H)$ be the ${\rm C}^{*}$-algebra of all bounded linear operators on $H$, and denote by $I_{H}$ the identity operator on $H$. An ${\it operator \; system}$ in $\cl{B}(H)$ is a selfadjoint linear subspace $\cl{S} \subseteq \cl{B}(H)$ such that $I_{H} \in \cl{S}$. If $\cl{A}$ is a ${\rm C}^{*}$-algebra, we denote by $\cl{A}^{\rm op}$ its \textit{opposite} ${\rm C}^{*}$-algebra. As a set, $\cl{A}^{\rm op}$ can be identified with $\cl{A}$, with $\cl{A}^{\rm op} = \{a^{\rm op}: \; a \in \cl{A}\}$; they both have the same additive, norm, and involutive structure--- their only difference is in their multiplicative structure, as we set $a^{\rm op}b^{\rm op} = (ba)^{\rm op}$, for $a^{\rm op}, b^{\rm op} \in \cl{A}^{\rm op}$ in the opposite algebra. 

Let $X$ and $A$ be finite sets. A ${\it classical \; information \; channel}$ from $X$ to $A$ is a positive trace preserving linear map $\cl{N}: \cl{D}_{X}\rightarrow \cl{D}_{A}$. If $\cl{N}$ is an information channel, setting $p(\cdot|x) = \cl{N}(\epsilon_{xx})$ for each $x \in X$ it is easy to see that $\cl{N}$ is completely determined by its corresponding family of conditional probability distributions $\{(p(a|x))_{a \in A}: \; x \in X\}$.

A ${\it quantum \; information \; channel}$ from $X$ to $A$ is a completely positive trace preserving linear map $\Phi: M_{X}\rightarrow M_{A}$. A quantum channel will be called $(X, A)$-${\it classical}$ if $\Phi = \Delta_{A}\circ \Phi \circ \Delta_{X}$. Any classical channel $\cl{N}: \cl{D}_{X}\rightarrow \cl{D}_{A}$, has a corresponding $(X, A)$-classical (quantum) channel $\Phi_{\cl{N}}: M_{X}\rightarrow M_{A}$ given by $\Phi_{\cl{N}} = \cl{N}\circ \Delta_{X}$. Conversely, any quantum channel $\Phi: M_{X}\rightarrow M_{A}$ induces a classical channel $\cl{N}_{\Phi}: \cl{D}_{X}\rightarrow \cl{D}_{A}$ given by $\Delta_{A} \circ \Phi|_{\cl{D}_{X}}$. Finally, if $\cl{E}: \cl{D}_{X}\rightarrow M_{A}$ is a (classical-to-quantum) channel, set $\Gamma_{\cl{E}} = \cl{E}\circ \Delta_{X}$, so $\Gamma_{\cl{E}}$ is a quantum channel from $M_{X}$ to $M_{A}$. 

In the remainder of this section, we recall the basic types of quantum and classical no-signalling 
correlations that will be used throughout the paper, along with establishing the simulation paradigm arising from quantum information theory.
Let $X, Y, A$ and $B$ be finite sets. 
A \textit{quantum no-signalling (QNS) correlation} \cite{dw} is a quantum channel 
$\Gamma: M_{XY}\rightarrow M_{AB}$ such that
\begin{gather}
	{\rm Tr}_{A}\Gamma(\rho_{X}\otimes \rho_{Y}) = 0 \text{ whenever } \rho_{X} \in M_{X} \text{ and } {\rm Tr}(\rho_{X}) = 0,
\end{gather}
\noindent and
\begin{gather}
	{\rm Tr}_{B}\Gamma(\rho_{X}\otimes \rho_{Y}) = 0 \text{ whenever } \rho_{Y} \in M_{Y} \text{ and } {\rm Tr}(\rho_{Y}) = 0.
\end{gather}
We set 
$$\Gamma(aa',bb'|xx',yy') = \langle \Gamma(\epsilon_{x,x'} \otimes \epsilon_{y,y'}),\epsilon_{a,a'} \otimes \epsilon_{b,b'}
\rangle;$$
thus, 
$(\Gamma(aa',bb'|xx',yy'))_{x,x',a,a'}^{y,y',b,b'}$ is the Choi matrix of $\Gamma$ (see e.g. \cite{pa}).

A \emph{stochastic operator matrix} acting on a Hilbert space $H$ 
is a positive block operator matrix $E = (E_{x, x', a, a'})_{x, x', a, a'} \in M_{XA}(\cl{B}(H))$ such that 
${\rm Tr}_{A}(E) = I_{X} \otimes I_{H}$. Stochastic operator matrix $E$ is \textit{bistochastic} \cite{bhtt} if $X = A$ and we have ${\rm Tr}_{X}(E) = I_{A}\otimes I_{H}$. For a stochastic operator matrix $E$ over $(X, A)$, set
\begin{gather*}
	E_{a, a'} = (E_{x, x', a, a'})_{x, x' \in X} \in M_{X}\otimes \cl{B}(H).
\end{gather*}
\noindent As discussed in \cite[Section 3]{tt}, stochastic operator matrices $E$ are the Choi matrices of unital completely positive maps $\Phi_{E}: M_{A}\rightarrow M_{X}\otimes \cl{B}(H)$ given by
\begin{gather}\label{stochastic_ucp}
	\Phi_{E}(\epsilon_{a, a'}) = E_{a, a'}, \;\;\;\; a, a' \in A.
\end{gather}
\noindent If we let $\Phi = \Phi_{E}$ as in (\ref{stochastic_ucp}), for any state $\sigma \in \cl{T}(H)$ we let $\Gamma_{E, \sigma}: M_{X}\rightarrow M_{A}$ be the quantum channel defined via
\begin{gather}\label{stochastic_qns}
	\Gamma_{E, \sigma}(\rho_{X}) = \Phi_{*}(\rho_{X}\otimes \sigma), \;\;\;\; \rho_{X} \in M_{X}.
\end{gather}
\noindent Note here that $\Phi_{*}: M_{X}\otimes \cl{T}(H)\rightarrow M_{A}$ is the predual of the unital completely positive map $\Phi$. If $E = (E_{x, x', a, a'})_{x, x', a, a'}$ and $F = (F_{y, y', b, b'})_{y, y', b, b'}$ are stochastic operator matrices in $M_{XA}\otimes \cl{B}(H)$ and $M_{YB}\otimes \cl{B}(H)$, respectively, such that 
$$E_{x, x', a, a'}F_{y, y', b, b'} = F_{y, y', b, b'}E_{x, x', a, a'}$$
for all $x, x' \in X, y, y' \in Y, a, a' \in A, b, b' \in B$, we let $E \cdot F$ be the (unique) stochastic operator matrix over $(XY, AB)$ (see \cite[Proposition 4.1]{tt}) defined by
$$ (E_{x, x', a, a'}F_{y, y', b, b'})_{x, x', y, y'}^{a, a', b, b'} \in M_{XYAB}\otimes \cl{B}(H).$$ 
If $\xi$ is a unit vector in Hilbert space $H$, we let $\Gamma_{E, F, \xi} = \Gamma_{E\cdot F, \xi\xi^{*}}$ where the latter is the quantum channel from $M_{XY}$ to $M_{AB}$ defined as in (\ref{stochastic_qns}). If $E \in M_{XA}\otimes \cl{B}(H_{A})$ and $F \in M_{YB}\otimes \cl{B}(H_{B})$ are stochastic operator matrices, we let $E\odot F$ denote the stochastic operator matrix $E\otimes F$, considered as an element of $M_{XY}\otimes M_{AB}\otimes \cl{B}(H_{A}\otimes H_{B})$. 

A QNS correlation $\Gamma: M_{XY}\rightarrow M_{AB}$ is called 
\emph{quantum commuting} if there exists a Hilbert space $H$, a unit vector $\xi \in H$ and stochastic operator matrices $E = (E_{x, x', a, a'})_{x, x', a, a'}$ and $F = (F_{y, y', b, b'})_{y, y', b, b'}$ on $H$ such that $E$ and $F$ are mutually commuting, with $\Gamma = \Gamma_{E, F, \xi}$. 
\emph{Quantum} QNS correlations are those for which there exist finite dimensional Hilbert spaces $H_{A}, H_{B}$, stochastic operator matrices $E \in M_{XA}\otimes \cl{B}(H_{A})$ and $F \in M_{YB}\otimes \cl{B}(H_{B})$, and a pure state $\sigma \in \cl{T}(H_{A}\otimes H_{B})$ such that $\Gamma = \Gamma_{E\odot F, \sigma}$. 
\emph{Approximately quantum} QNS correlations are the 
limits of quantum QNS correlations, while \emph{local} QNS correlations are 
the convex combinations of the form 
$\Gamma = \sum_{i=1}^k \lambda_i \Phi_i \otimes \Psi_i$, 
where $\Phi_i : M_X\to M_A$ and $\Psi_i : M_Y\to M_B$ are quantum channels, $i = 1,\dots,k$.
We write $\cl Q_{\rm qc}$ (resp. $\cl Q_{\rm qa}$, $\cl Q_{\rm q}$, $\cl Q_{\rm loc}$) for the (convex) set of all quantum commuting (resp. approximately quantum, quantum, local) QNS correlations, and note the (strict, see \cite{tt})
inclusions 
\begin{equation}\label{eq_Qinc}
\cl Q_{\rm loc}\subseteq \cl Q_{\rm q}\subseteq \cl Q_{\rm qa}\subseteq \cl Q_{\rm qc}\subseteq \cl Q_{\rm ns}.
\end{equation}
Let
$$	\cl{L}_{X, A} = \bigg\{(\lambda_{x, x', a, a'}) \in M_{XA} :  \exists \;c \in \bb{C} \;\text{s.t.} \;
	\sum_{a\in A}\lambda_{x, x', a, a} = \delta_{x, x'}c, \; x, x' \in X\bigg\},$$
and consider it as an operator subsystem of $M_{XA}$. Similarly, 
\begin{gather*}
	\cl{L}_{X} = \bigg\{(\lambda_{x, x', a, a'}) \in M_{XX} : \exists \; c \in \bb{C} \; \text{s.t.} \;
	\sum_{a \in X}\lambda_{x, x', a, a} = \delta_{x, x'}c \\ \text{ and } \sum_{x \in X}\lambda_{x, x, a, a'} = \delta_{a, a'}c, \; x, x', a, a' \in X\bigg\}
\end{gather*}
\noindent may be considered as an operator subsystem of $M_{XX}$. By \cite[Proposition 5.5, Theorem 6.2]{tt}, the elements $\Gamma$ of $\cl Q_{\rm ns}$ 
correspond canonically to elements of the tensor product  $\cl{L}_{X, A}\otimes_{\min} \cl L_{Y,B}$
(viewed as an operator subsystem of $M_{XA}\otimes M_{YB}$), and elements in $\cl{Q}_{\rm ns}^{\rm bi}$ corresponding to elements of $\cl{L}_{X}\otimes_{\rm min} \cl{L}_{Y}$ \cite[Proposition 3.6, Theorem 5.4]{bhtt}.

A \textit{classical correlation} over $(X, Y, A, B)$ is a collection
\begin{gather*}
	p = \bigg\{(p(a, b|x, y))_{a \in A, b \in B}: \; (x, y) \in X\times Y\bigg\},
\end{gather*}
\noindent where $(p(a, b|x, y))_{a \in A, b \in B}$ is a probability distribution for each $(x, y) \in X\times Y$. Given a classical correlation $p$, let $\cl{N}_{p}: \cl{D}_{XY}\rightarrow \cl{D}_{AB}$ be the classical channel given by
\begin{gather}
	\cl{N}_{p}(\rho) = \sum\limits_{x \in X, y \in Y}\sum\limits_{a \in A, b \in B}p(a, b|x, y)\langle \rho(e_{x}\otimes e_{y}), e_{x}\otimes e_{y}\rangle\epsilon_{aa}\otimes \epsilon_{bb}.
\end{gather}
If ${\rm t} \in \{\rm loc, q, qa, qc, ns\}$, let $\cl{C}_{\rm t}$ denote the collection of all classical correlations $p$ for which $\Gamma_{\cl{N}_{p}} \in \cl{Q}_{\rm t}$.

Suppose $X_{i}, Y_{i}, i = 1, 2$ are finite sets. Let $\Gamma$ be a QNS correlation over $(X_{2}, Y_{1}, X_{1}, Y_{2})$ and $\cl{E}: M_{X_{1}}\rightarrow M_{Y_{1}}$ be a quantum channel. If we write $\Gamma = \sum_{i=1}^{k}\Phi_{i}\otimes \Psi_{i}$, where $\Phi_{i}: M_{X_{2}}\rightarrow M_{X_{1}}$ and $\Psi_{i}: M_{Y_{1}}\rightarrow M_{Y_{2}}$ are linear maps for $i = 1, \hdots, k$, we let $\Gamma[\cl{E}]: M_{X_{2}}\rightarrow M_{Y_{2}}$ be the linear map defined by setting
\begin{gather}\label{qns_simulation}
	\Gamma[\cl{E}] = \sum\limits_{i=1}^{k}\Psi_{i}\circ \cl{E}\circ \Phi_{i}.
\end{gather}
\noindent It was shown in \cite{dw} that $\Gamma[\cl{E}]$ is a quantum channel, called therein the \textit{simulated channel} from $\cl{E}$ assisted by \textit{simulator} $\Gamma$. We note that (see \cite{gt_one}, \cite{gt_two}) the Choi matrix of $\Gamma[\cl{E}]$ coincides with
\begin{gather}\label{choi_simulated_eqn}
	\bigg(\sum\limits_{x_{1}, x_{1}'}\sum\limits_{y_{1}, y_{1}'}\Gamma(x_{1}x_{1}', y_{2}y_{2}'|x_{2}x_{2}', y_{1}y_{1}')\cl{E}(y_{1}y_{1}'|x_{1}x_{1}')\bigg)_{x_{2}, x_{2}', y_{2}, y_{2}'},
\end{gather}
\noindent where the internal sums range over $X_{1}$ and $Y_{1}$, for all $x_{2}, x_{2}' \in X_{2}, y_{2}, y_{2}' \in Y_{2}$. Thus, channel simulation is a process for constructing new quantum channels from given ones, with the assistance of correlations; morever, it is natural to want the simulated channel to depend on a shared resource between two parties. If Alice and Bob have access to some shared resource (for instance: shared randomness, or entanglement, etc.), their interaction with the resource yields the no-signalling correlation $\Gamma$, and simulated channel $\Gamma[\cl{E}]$ is dependent on their local operations (see \cite[Section II]{clmw}, and \cite{dw}). 
 

\section{Strongly stochastic operator matrices}\label{s_ssom}

Let $X, Y, A$ and $B$ be finite sets, and $H$ be a Hilbert space. In the sequel, to simplify notation we will abbreviate an ordered pair $(x, y) \in X\times Y$ to $xy$. A stochastic operator matrix $P = (P_{xx', yy'}^{aa', bb'})_{xx', yy', aa', bb'} \in M_{XYAB}\otimes \cl{B}(H)$ over $(XY, AB)$ will be called a ${\it strongly \; stochastic \; operator \; matrix}$ over $(X, Y, A, B)$ if ${\rm Tr}_{B}(L_{\sigma_{Y}}(P))$ (resp. ${\rm Tr}_{A}(L_{\sigma_{X}}(P))$) is a well-defined stochastic operator matrix over $(X, A)$ (resp. $(Y, B)$) and ${\rm Tr}_{B}(L_{\sigma_{Y}}(P)) = {\rm Tr}_{B}(L_{\sigma_{Y}'}(P))$ (resp. ${\rm Tr}_{A}(L_{\sigma_{X}}(P)) = {\rm Tr}_{A}(L_{\sigma_{X}'}(P))$) for each pure state $\sigma_{X}, \sigma_{X}' \in M_{X}$ and $\sigma_{Y}, \sigma_{Y}' \in M_{Y}$.

\begin{remark}
\rm In fact, by convexity and linearity of the slice map we may assume $\sigma_{X}, \sigma_{X'} \in M_{X}$ and $\sigma_{Y}, \sigma_{Y'} \in M_{Y}$ are arbitrary states.
\end{remark}

\begin{remark}\label{ns_generalization}
\rm A positive operator $P = (P_{xy, ab})_{x, y, a, b} \in \cl{D}_{XYAB}\otimes \cl{B}(H)$ is a \textit{no-signalling (NS) operator matrix} \cite{gt_one} if marginal operators
\begin{gather*}
	P_{x, a} := \sum\limits_{b \in B}P_{xy, ab}, \;\;\;\; P_{y, b} := \sum\limits_{a \in A}P_{xy, ab}
\end{gather*}
\noindent are well-defined, and $(P_{x, a})_{a \in A}, (P_{y, b})_{b \in B}$ are POVM's for every $x \in X$ and $y \in Y$. If we start with a NS operator matrix $P$, for $x, x' \in X, y, y' \in Y, a, a' \in A, b, b' \in B$ set
\begin{gather*}
	P_{xx', yy'}^{aa', bb'} = \delta_{xx'}\delta_{yy'}\delta_{aa'}\delta_{bb'}P_{xy, ab}.
\end{gather*}
\noindent Setting $\tilde{P} = (P_{xx', yy'}^{aa', bb'})_{xx', yy', aa', bb'} \in M_{XYAB}\otimes \cl{B}(H) $, one may easily check that $\tilde{P}$ is strongly stochastic over $(XY, AB)$ with ${\rm Tr}_{B}(L_{\sigma_{Y}}(\tilde{P}))$ and ${\rm Tr}_{A}(L_{\sigma_{X}}(\tilde{P}))$ \textit{classical stochastic operator matrices} (as introduced in \cite[Section 3]{tt}) for each (pure) state $\sigma_{X} \in M_{X}$ and $\sigma_{Y} \in M_{Y}$. Thus, we may think of strongly stochastic operator matrices over $(XY, AB)$ as generalizations of NS operator matrices, whose ``marginal" stochastic operators are no longer necessarily classical.
\end{remark}
A strongly stochastic operator matrix $P = (P_{xx', yy'}^{aa', bb'})_{xx', yy', aa', bb'}$ is called ${\it dilatable}$ if there exists a Hilbert space $K$, an isometry $V: H\rightarrow K$ and stochastic operator matrices $(E_{xx', aa'})_{x, x' \in X, a, a' \in A}$ (resp. $(F_{yy', bb'})_{y, y' \in Y, b, b' \in B}$) in $M_{XA}\otimes \cl{B}(K)$ (resp. $M_{YB}\otimes \cl{B}(K)$) such that $E_{xx', aa'}F_{yy', bb'} = F_{yy', bb'}E_{xx', aa'}$ and
\begin{gather}\label{dilate_eqn}
	P_{xx', yy'}^{aa', bb'} = V^{*}E_{xx', aa'}F_{yy', bb'}V, \;\;\;\; x, x' \in X, y, y' \in Y, a, a' \in A, b, b' \in B.
\end{gather} 

We will call a strongly stochastic operator matrix $P = (P_{xx', yy'}^{aa', bb'})_{xx', yy', aa', bb'}$ acting on a Hilbert space $H$ ${\it locally \; dilatable}$ if there exists a dilation of the form (\ref{dilate_eqn}), with the additional stipulation that the family $\{E_{xx', aa'}, F_{yy', bb'}: \; x, x' \in X, y, y' \in Y, a, a' \in A, b, b' \in B\}$ is commutative. We will call a strongly stochastic operator matrix $P = (P_{xx', yy'}^{aa', bb'})_{xx', yy', aa', bb'}$ acting on a Hilbert space $H$ ${\it quantum \; dilatable}$ if there exist stochastic operator matrices $(E_{xx', aa'})_{x, x' \in X, a, a' \in A}$ and $(F_{yy', bb'})_{y, y' \in Y, b, b' \in B}$ acting on finite dimensional Hilbert spaces $H_{A}$ and $H_{B}$ respectively, and an isometry $V: H\rightarrow H_{A}\otimes H_{B}$, such that
\begin{gather}\label{qdilate_eqn}
	P_{xx', yy'}^{aa', bb'} = V^{*}(E_{xx', aa'}\otimes F_{yy', bb'})V, 
\end{gather}
\noindent for $x, x' \in X, y, y' \in Y, a, a' \in A, b, b' \in B$.

\begin{proposition}\label{qloc_prop}
A QNS correlation $\Gamma$ over $(X, Y, A, B)$ belongs to $\cl{Q}_{\rm loc}$ if and only if there exists a Hilbert space $H$, a locally dilatable strongly stochastic operator matrix $P = (P_{xx', yy'}^{aa', bb'})_{xx', yy', aa', bb'}$ acting on $H$ and a unit vector $\xi \in H$ such that
\begin{gather}\label{qloc_value}
	\Gamma(aa', bb'|xx', yy') = \langle P_{xx', yy'}^{aa', bb'}\xi, \xi\rangle, 
\end{gather}
\noindent for  $x, x' \in X, y, y' \in Y, a, a' \in A, b, b' \in B$.
\end{proposition}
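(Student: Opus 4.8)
The plan is to prove the two implications separately, using the explicit correspondence between quantum channels and scalar stochastic operator matrices together with the product construction of \cite[Proposition 4.1]{tt}.

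For the forward direction, suppose $\Gamma = \sum_{i=1}^{k}\lambda_{i}\Phi_{i}\otimes \Psi_{i}$ with $\Phi_{i}: M_{X}\to M_{A}$ and $\Psi_{i}: M_{Y}\to M_{B}$ quantum channels, $\lambda_{i}>0$ and $\sum_{i}\lambda_{i}=1$. Each $\Phi_{i}$ has Choi matrix $E^{(i)}=(\langle \Phi_{i}(\epsilon_{x,x'}),\epsilon_{a,a'}\rangle)_{x,x',a,a'}\in M_{XA}$, which is a scalar stochastic operator matrix (positive with ${\rm Tr}_{A}(E^{(i)})=I_{X}$); likewise $\Psi_{i}$ yields $F^{(i)}\in M_{YB}$. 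On $H:=\bb{C}^{k}$ with orthonormal basis $(e_{i})_{i=1}^{k}$, I would set $E_{x,x',a,a'}:=\sum_{i}E^{(i)}_{x,x',a,a'}\,e_{i}e_{i}^{*}$ and $F_{y,y',b,b'}:=\sum_{i}F^{(i)}_{y,y',b,b'}\,e_{i}e_{i}^{*}$; these are diagonal, hence the whole family is commutative, and a direct check gives that $E=(E_{x,x',a,a'})$ and $F=(F_{y,y',b,b'})$ are (commuting) stochastic operator matrices over $(X,A)$ and $(Y,B)$. Taking $V=I_{H}$ and $\xi:=\sum_{i}\sqrt{\lambda_{i}}\,e_{i}$, the matrix $P$ with entries $P_{xx',yy'}^{aa',bb'}:=E_{x,x',a,a'}F_{y,y',b,b'}$ is, by \cite[Proposition 4.1]{tt}, the product $E\cdot F$, hence a stochastic operator matrix over $(XY,AB)$ satisfying (\ref{dilate_eqn}) with a commuting family, i.e.\ locally dilatable. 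It then remains to verify that $P$ is strongly stochastic: using ${\rm Tr}_{B}(F)=I_{Y}\otimes I_{H}$, a slice computation collapses the $F$-factor to the identity and gives ${\rm Tr}_{B}(L_{\sigma_{Y}}(P))=E$ for every state $\sigma_{Y}$, independently of $\sigma_{Y}$, and symmetrically ${\rm Tr}_{A}(L_{\sigma_{X}}(P))=F$; this is exactly the defining property. Finally $\langle P_{xx',yy'}^{aa',bb'}\xi,\xi\rangle=\sum_{i}\lambda_{i}E^{(i)}_{x,x',a,a'}F^{(i)}_{y,y',b,b'}=\Gamma(aa',bb'|xx',yy')$, as required.

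For the converse, suppose $P$ is locally dilatable with $P_{xx',yy'}^{aa',bb'}=V^{*}E_{x,x',a,a'}F_{y,y',b,b'}V$, where the family $\{E_{x,x',a,a'},F_{y,y',b,b'}\}$ on $K$ is commutative, and put $\zeta:=V\xi$. Since $E$ is self-adjoint as a block matrix, the adjoint of each entry is again an entry, so the family is self-adjoint; being commutative, it generates an \emph{abelian} ${\rm C}^{*}$-algebra $\cl{A}\subseteq \cl{B}(K)$. By Gelfand theory $\cl{A}\cong C(\Omega)$ for a compact Hausdorff $\Omega$, and the state $\omega:=\langle\,\cdot\,\zeta,\zeta\rangle|_{\cl{A}}$ corresponds via Riesz representation to a Borel probability measure $\nu$ on $\Omega$. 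For each character $t\in\Omega$, evaluating the entries of $E$ at $t$ (equivalently, applying $\id_{M_{XA}}\otimes\, t$) yields $E(t):=(\widehat{E_{x,x',a,a'}}(t))_{x,x',a,a'}\in M_{XA}$, which is positive with ${\rm Tr}_{A}(E(t))=I_{X}$, hence the Choi matrix of a quantum channel $\Phi_{t}: M_{X}\to M_{A}$; similarly one obtains $\Psi_{t}: M_{Y}\to M_{B}$. Multiplicativity of the Gelfand transform then gives
\begin{gather*}
\Gamma(aa',bb'|xx',yy')=\omega(E_{x,x',a,a'}F_{y,y',b,b'})=\int_{\Omega}\langle\Phi_{t}(\epsilon_{x,x'}),\epsilon_{a,a'}\rangle\,\langle\Psi_{t}(\epsilon_{y,y'}),\epsilon_{b,b'}\rangle\,\dd\nu(t),
\end{gather*}
so that $\Gamma=\int_{\Omega}\Phi_{t}\otimes\Psi_{t}\,\dd\nu(t)$ lies in the closed convex hull of the product channels. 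As the set of quantum channels is compact in finite dimensions, the set $\{\Phi\otimes\Psi\}$ is compact and, by Carath\'eodory's theorem, its convex hull $\cl{Q}_{\rm loc}$ is compact, hence closed; therefore $\Gamma\in\cl{Q}_{\rm loc}$.

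The routine parts are the verifications that $E,F$ are stochastic and that $P$ is strongly stochastic (direct slice computations). I expect the main obstacle to be the converse: one must correctly pass from the commuting dilation to an abelian ${\rm C}^{*}$-algebra (checking the generating set is self-adjoint), identify point evaluations of the stochastic operator matrices with genuine quantum channels via their Choi matrices, and then upgrade the resulting integral (``continuous'') decomposition to membership in the \emph{finitely supported} convex set $\cl{Q}_{\rm loc}$, which rests on the compactness of the set of product channels together with Carath\'eodory's theorem.
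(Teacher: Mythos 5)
Your proof is correct and follows essentially the same route as the paper: diagonal stochastic operator matrices on $\mathbb{C}^{k}$ with $\xi=\sum_{i}\sqrt{\lambda_{i}}e_{i}$ for the forward direction, and Gelfand theory plus an integral decomposition upgraded to a finite convex combination via compactness and Carath\'eodory for the converse. The only (harmless) difference is that in the converse you work with the single abelian ${\rm C}^{*}$-algebra generated by all the $E$'s and $F$'s together, so each point evaluation already yields a product channel $\Phi_{t}\otimes\Psi_{t}$, which lets you skip the paper's intermediate steps of passing through $\cl{A}_{X}^{A}\otimes_{\rm max}\cl{A}_{Y}^{B}$, invoking nuclearity, and approximating the resulting measure on $\Omega_{1}\times\Omega_{2}$ by convex combinations of product measures.
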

\begin{proof}
First, assume that $\Gamma \in \cl{Q}_{\rm loc}$ is a convex combination $\Gamma = \sum_{j=1}^{k}\lambda_{j}\Phi^{(j)}\otimes \Psi^{(j)}$, where $\Phi^{(j)}: M_{X}\rightarrow M_{A}$ and $\Psi^{(j)}: M_{Y}\rightarrow M_{B}$ are quantum channels, $j = 1, \hdots, k$. By the comment before \cite[Remark 3.2]{tt}, $\Phi^{(j)}$ (resp. $\Psi^{(j)}$) is of the form $\Gamma_{E^{(j)}, 1}$ (resp. $\Gamma_{F^{(j)}, 1}$) for some stochastic operator matrix $E^{(j)} \in M_{X}\otimes M_{A}\otimes \cl{B}(\bb{C})$ (resp. $F^{(j)} \in M_{Y}\otimes M_{B}\otimes \cl{B}(\bb{C})$) for $j = 1, \hdots, k$. For each $x, x' \in X, a, a' \in A$ define the matrix $E_{x, x'}^{a, a'} = (E_{x, x', a, a'}^{(j)})_{j=1}^{k} \in \cl{D}_{k}$. Define the corresponding matrices $F_{y, y'}^{b, b'}$ in $\cl{D}_{k}$ for each $y, y' \in Y, b, b' \in B$. We note that the family
\begin{gather*}
	\{E_{x, x'}^{a, a'}, F_{y, y'}^{b, b'}: \; x, x' \in X, a, a' \in A, y, y' \in Y, b, b' \in B\}, 
\end{gather*}
\noindent is commutative. Then, let
\begin{gather*}
	E:= (E_{x, x'}^{a, a'})_{x, x', a, a'}, \;\;\;\; F := (F_{y, y'}^{b, b'})_{y, y', b, b'}.
\end{gather*}
\noindent Note that $E$ (resp. $F$) is stochastic over $(X, A)$ (resp. $(Y, B)$). Indeed: since $E^{(j)}$ is a stochastic operator matrix in $M_{X}\otimes M_{A}\otimes \cl{B}(\bb{C})$ for each $j = 1, \hdots, k$, we know that $\sum\limits_{a}E_{x, x', a, a}^{(j)} = \delta_{x, x'}$ for each $x, x' \in X$ and $j = 1, \hdots, k$. Thus,
\begin{eqnarray*}
	{\rm Tr}_{A}(E) 
& = &
\sum\limits_{a}\sum\limits_{x, x'}\sum\limits_{j=1}^{k}E_{x, x', a, a}^{(j)}\epsilon_{x, x'}\otimes \epsilon_{jj} \\
& = &
\sum\limits_{j=1}^{k}\sum\limits_{x, x'}\sum\limits_{a}E_{x, x', a, a}^{(j)}\epsilon_{x, x'}\otimes \epsilon_{jj} \\
& = &
\sum\limits_{j=1}^{k}\sum\limits_{x, x'}\delta_{x, x'}\epsilon_{x, x'}\otimes \epsilon_{jj} \\
& = &
I_{X}\otimes I_{k}.
\end{eqnarray*}
\noindent Set $\xi = \sum\limits_{j=1}^{k}\sqrt{\lambda_{j}}e_{j}$, so $\xi$ is a unit vector in $\bb{C}^{k}$. For $x, x' \in X, y, y' \in Y, a, a' \in A, b, b' \in B$ we see
\begin{eqnarray*}
\langle E_{x, x'}^{a, a'}F_{y, y'}^{b, b'}\xi, \xi\rangle 
& = &
\sum\limits_{j=1}^{k}\lambda_{j}E_{x, x', a, a'}^{(j)}F_{y, y', b, b'}^{(j)} \\
& = &
\sum\limits_{j=1}^{k}\lambda_{j}\langle E_{x, x', a, a'}^{(j)}\otimes F_{y, y', b, b'}^{(j)}, I\rangle \\
& = &
\langle \Gamma(\epsilon_{x, x'}\otimes \epsilon_{y, y'}), \epsilon_{a, a'}\otimes \epsilon_{b, b'}\rangle \\
& = &
\Gamma(aa', bb'|xx', yy').
\end{eqnarray*}
\noindent Setting $P_{xx', yy'}^{aa', bb'} = E_{x, x'}^{a, a'}F_{y, y'}^{b, b'}$ for each $x, x' \in X, y, y' \in Y, a, a' \in A, b, b' \in B$ and letting $P = (P_{xx', yy'}^{aa', bb'})$ (where the latter matrix entries range over $X, Y, A, B$) gives us our locally dilatable strongly stochastic operator matrix satisfying (\ref{qloc_value}).

Now, assume $P = (P_{xx', yy'}^{aa', bb'})_{xx', yy', aa', bb'}$ is a locally dilatable strongly stochastic operator matrix acting on $H$ with unit vector $\xi \in H$ satisfying (\ref{qloc_value}). If we replace $H$ with the Hilbert space $K$ arising from the dilation (\ref{dilate_eqn}) of $P$ and the vector $\xi$ with $V\xi$, we may without loss of generality directly assume $P_{xx', yy'}^{aa', bb'} = E_{x, x', a, a'}F_{y, y', b, b'}, x, x' \in X, y, y' \in Y, a, a' \in A, b, b' \in B$, where the family
\begin{gather*}
	\{E_{x, x', a, a'}, F_{y, y', b, b'}: \; x, x' \in X, a, a' \in A, y, y' \in Y, b, b' \in B\}
\end{gather*}
\noindent is commutative. 

Let $\cl{A}_{X}^{A}$ (resp. $\cl{A}_{Y}^{B}$) be the abelian ${\rm C}^{*}$-algebra generated by $\{E_{x, x', a, a'}: \; x, x' \in X, a, a' \in A\}$ (resp. $\{F_{y, y', b, b'}: \; y, y' \in Y, b, b' \in B\}$). Let $s: \cl{A}_{X}^{A}\otimes_{\rm max} \cl{A}_{Y}^{B} \rightarrow \bb{C}$ be the state given by $s(S\otimes T) = \langle ST\xi, \xi\rangle$. By nuclearity of abelian ${\rm C}^{*}$-algebras, we may consider $s$ as a state on $\cl{A}_{X}^{A}\otimes_{\rm min}\cl{A}_{Y}^{B}$. Using the identification $\cl{A}_{X}^{A} = C(\Omega_{1}), \cl{A}_{Y}^{B} = C(\Omega_{2})$ for compact Hausdorff spaces $\Omega_{1}, \Omega_{2}$, we view $s$ as a Borel probability measure $\mu$ on the product topological space $\Omega_{1}\times \Omega_{2}$. For each $\omega_{1} \in \Omega_{1}$ (resp. $\omega_{2} \in \Omega_{2}$), let $\Phi(\omega_{1}): M_{X}\rightarrow M_{A}$ (resp. $\Psi(\omega_{2}): M_{Y}\rightarrow M_{B}$) be the quantum channel given by $\Phi(\omega_{1})(\epsilon_{x, x'}) = (E_{x, x', a, a'}(\omega_{1}))_{a, a'}$ (resp. $\Psi(\omega_{2})(\epsilon_{y, y'}) = (F_{y, y', b, b'}(\omega_{2}))_{b, b'}$). We then have
\begin{eqnarray*}
& &
	\langle\Gamma(\epsilon_{x, x'}\otimes \epsilon_{y, y'}), \epsilon_{a,a'}\otimes \epsilon_{b,b'}\rangle \\
& = &
\langle E_{x, x', a, a'}F_{y, y', b, b'}\xi, \xi\rangle \\
& = &
\int\limits_{\Omega_{1}\times \Omega_{2}}E_{x, x', a, a'}(\omega_{1})F_{y, y', b, b'}(\omega_{2})d\mu(\omega_{1}, \omega_{2}) \\
& = &
\bigg\langle\int\limits_{\Omega_{1}\times \Omega_{2}}\Phi(\omega_{1})(\epsilon_{x, x'})\otimes \Psi(\omega_{2})(\epsilon_{y, y'})d\mu(\omega_{1}, \omega_{2}), \epsilon_{a,a'}\otimes \epsilon_{b, b'}\bigg\rangle.
\end{eqnarray*}
\noindent If we approximate measure $\mu$ using convex combinations of product measures $\mu_{1}\times \mu_{2}$ (where $\mu_{1} \in {\rm M}(\Omega_{1}), \mu_{2} \in {\rm M}(\Omega_{2})$) we see we can approximate $\Gamma$ using channels which may be written as convex combinations $\tilde{\Gamma} = \sum_{j=1}^{k}\lambda_{j}\Phi_{j}\otimes \Psi_{j}$, where $\Phi_{j}: M_{X}\rightarrow M_{A}$ and $\Psi_{j}: M_{Y}\rightarrow M_{B}$ are quantum channels, $j = 1, \hdots, k$. By the Carathéodory Theorem and compactness, use a similar argument as in \cite[Remark 4.10]{tt} to conclude that $\Gamma$ itself is of that form. Thus, $\Gamma \in \cl{Q}_{\rm loc}$. 
\end{proof}

For certain classes of quantum games (for instance, those concerned with the behavior of quantum symmetries of quantum objects as in \cite{bhtt, daws}) we will draw our questions and answer states from the same space; thus, a particular notion of strongly stochastic operator matrix will be required in this context. We assume that $X = A$ and $Y = B$. For notational ease, we will continue to refer to $X$ and $A$ (resp. $Y$ and $B$) as distinct entities, even though they are copies of the same set. 

A positive operator $P = (P_{xx', yy'}^{aa', bb'})_{xx', yy', aa', bb'} \in M_{XYAB}\otimes \cl{B}(H)$ will be called a \textit{strongly bistochastic operator matrix} if it is a strongly stochastic operator matrix over $(XY, AB)$, with ${\rm Tr}_{B}(L_{\sigma_{Y}}(P))$ and ${\rm Tr}_{A}(L_{\sigma_{X}}(P))$ bistochastic operator matrices for each state $\sigma_{X} \in M_{X}, \sigma_{Y} \in M_{Y}$. A strongly bistochastic operator matrix $P = (P_{xx', yy'}^{aa', bb'})_{xx', yy', aa', bb'}$ is called \textit{dilatable} if there exists a Hilbert space $K$, an isometry $V: H\rightarrow K$, and bistochastic operator matrices $(E_{xx', aa'})_{x, x' \in X, a, a' \in A}$ (resp. $(F_{yy', bb'})_{y, y' \in Y, b, b' \in B}$) in $M_{XA}\otimes \cl{B}(K)$ (resp. $M_{YB}\otimes \cl{B}(K)$) such that $E_{xx', aa'}F_{yy', bb'} = F_{yy', bb'}E_{xx', aa'}$ and which satisfy relations (\ref{dilate_eqn}) for all $x, x' \in X, a, a' \in A, y, y' \in Y, b, b' \in B$. \textit{Locally dilatable} and \textit{quantum dilatable} strongly bistochastic operator matrices are defined exactly as with strongly stochastic operator matrices, but using bistochastic operator matrices in place of stochastic operator matrices.


\subsection{SQNS correlations}\label{sqns_correlations_section}

One goal for this work is to develop a method for transferring perfect strategies of one quantum input-output game to another. The framework we use for this purpose is by embedding both quantum games into a single ``game-of-games", whose winning QNS strategies encode the winning information for both simultaneously. The winning QNS strategies are used as simulators in the simulation paradigm, allowing us to take a perfect strategy for the first game and--- in conjunction with a winning strategy for the ``game-of-games"--- construct the desired strategy for the second. In order to adhere to the no-signalling condition for both games, we have to modify the classes of QNS correlations we use as simulators; this leads us to the introduction of a particular sub-class of QNS correlations, which is the focus of this subsection.

Let $X_{i}, Y_{i}, A_{i}, B_{i}, i = 1, 2$ be finite sets. A quantum channel 
\begin{gather*}
	\Gamma: M_{X_{2}Y_{2}\times A_{1}B_{1}}\rightarrow M_{X_{1}Y_{1}\times A_{2}B_{2}}
\end{gather*}
\noindent will be called a \textit{strongly quantum no-signalling (SQNS) correlation} if
\begin{gather}\label{pt_one}
{\rm Tr}_{X_{1}}\Gamma(\rho_{X_{2}Y_{2}}\otimes \rho_{A_{1}B_{1}}) = 0 \text{ if } \rho_{X_{2}Y_{2}} \in M_{X_{2}Y_{2}} \text{ and } {\rm Tr}_{X_{2}}(\rho_{X_{2}Y_{2}}) = 0,
\end{gather}
\begin{gather}\label{pt_two}
{\rm Tr}_{Y_{1}}\Gamma(\rho_{X_{2}Y_{2}}\otimes \rho_{A_{1}B_{1}}) = 0 \text{ if } \rho_{X_{2}Y_{2}} \in M_{X_{2}Y_{2}} \text{ and } {\rm Tr}_{Y_{2}}(\rho_{X_{2}Y_{2}}) = 0,
\end{gather}
\begin{gather}\label{pt_three}
{\rm Tr}_{A_{2}}\Gamma(\rho_{X_{2}Y_{2}}\otimes \rho_{A_{1}B_{1}}) = 0 \text{ if } \rho_{A_{1}B_{1}} \in M_{A_{1}B_{1}} \text{ and } {\rm Tr}_{A_{1}}(\rho_{A_{1}B_{1}}) = 0,
\end{gather}
\noindent and
\begin{gather}\label{pt_four}
{\rm Tr}_{B_{2}}\Gamma(\rho_{X_{2}Y_{2}}\otimes \rho_{A_{1}B_{1}}) = 0 \text{ if } \rho_{A_{1}B_{1}} \in M_{A_{1}B_{1}} \text{ and } {\rm Tr}_{B_{1}}(\rho_{A_{1}B_{1}}) = 0.
\end{gather}
We denote by $\cl{Q}_{\rm sns}$ the (convex) set of all SQNS correlations; it is clear that $\cl{Q}_{\rm sns} \subseteq \cl{Q}_{\rm ns}$. 

\medskip

A ${\it classical \; strongly \; no-signalling \; (SNS) \; correlation}$ \cite{gt_one} is a correlation 
\begin{gather*}
	p = (p(x_{1}y_{1}, a_{2}b_{2}|x_{2}y_{2}, a_{1}b_{1}))_{x_{1}y_{1}, a_{1}b_{1}, x_{2}y_{2}, a_{2}b_{2}}
\end{gather*}
\noindent that satisfies the conditions
$$\label{sns_one}\sum_{x_1\in X_1} \hspace{-0.1cm}
p(x_1y_1, a_2b_2 | x_2y_2, a_1b_1) \hspace{-0.05cm} = \hspace{-0.2cm}
\sum_{x_1\in X_1} \hspace{-0.1cm}
p(x_1y_1, a_2b_2 | x_2'y_2, a_1b_1), \ \ x_2,x_2'\in X_2,$$
$$\label{sns_two}\sum_{y_1\in Y_1} \hspace{-0.1cm}
p(x_1y_1, a_2b_2 | x_2y_2, a_1b_1) \hspace{-0.05cm} = \hspace{-0.2cm}
\sum_{y_1\in Y_1} \hspace{-0.1cm}
p(x_1y_1, a_2b_2 | x_2y_2', a_1b_1), \ \ y_2,y_2'\in Y_2,$$
$$\label{sns_three}\sum_{a_2\in A_2} \hspace{-0.1cm}
p(x_1y_1, a_2b_2 | x_2y_2, a_1b_1) \hspace{-0.05cm} = \hspace{-0.2cm}
\sum_{a_2\in A_2} \hspace{-0.1cm}
p(x_1y_1, a_2b_2 | x_2y_2, a_1'b_1), \ \ a_1,a_1'\in A_1,$$
and
$$\label{sns_four}\sum_{b_2\in B_2} \hspace{-0.1cm}
p(x_1y_1, a_2b_2 | x_2y_2, a_1b_1) \hspace{-0.05cm} = \hspace{-0.2cm}
\sum_{b_2\in B_2} \hspace{-0.1cm}
p(x_1y_1, a_2b_2 | x_2y_2, a_1b_1'), \ \ b_1,b_1'\in B_1.$$

\noindent The (convex) collection of all SNS correlations is denoted by $\cl{C}_{\rm sns}$. If $p \in \cl{C}_{\rm sns}$, then marginal conditional probability distributions 
\begin{gather*}
	p(x_{1}y_{1}, a_{2}|x_{2}y_{2}, a_{1}), \; p(x_{1}y_{1}, b_{2}|x_{2}y_{2}, b_{1}), 
\\ 	p(x_{1}, a_{2}b_{2}|x_{2}, a_{1}b_{1}) \text{ and } p(y_{1}, a_{2}b_{2}|y_{2}, a_{1}b_{1})
\end{gather*}
\noindent are all well-defined. 

\begin{remark}\label{sns_vs_sqns}
\rm If $p$ is a classical correlation over $(X_{2}Y_{2}, A_{1}B_{1}, X_{1}Y_{1}, A_{2}B_{2})$, then $p$ is an SNS correlation precisely when $\Gamma_{p}$ is a SQNS correlation. To see why, first assume that $\rho_{X_{2}Y_{2}} \in M_{X_{2}Y_{2}}$ with ${\rm Tr}_{X_{2}}(\rho_{X_{2}Y_{2}}) = 0$. Writing
\begin{gather*}
	\rho_{X_{2}Y_{2}} = \sum \tilde{\rho}_{x_{2}, x_{2}'}^{y_{2}, y_{2}'}\epsilon_{x_{2}x_{2}'}\otimes \epsilon_{y_{2}y_{2}'},
\end{gather*}
\noindent where the sum is over all $x_{2}, x_{2}' \in X_{2}$ and $y_{2}, y_{2}' \in Y_{2}$. The partial trace condition on $\rho_{X_{2}Y_{2}}$ implies
\begin{gather}
	\sum\limits_{x_{2} \in X_{2}}\sum\limits_{y_{2}, y_{2}' \in Y_{2}}\tilde{\rho}_{x_{2}. x_{2}}^{y_{2}, y_{2}'}\epsilon_{y_{2}y_{2}'} = 0.
\end{gather}
Specifically, $\sum\limits_{x_{2}}\tilde{\rho}_{x_{2}, x_{2}}^{y_{2}, y_{2}} = 0$ for any $y_{2} \in Y_{2}$. If we then take $\rho_{A_{1}B_{1}} \in M_{A_{1}B_{1}}$, we see
\begin{eqnarray*}
& &
{\rm Tr}_{X_{1}}\Gamma_{p}(\rho_{X_{2}Y_{2}}\otimes \rho_{A_{1}B_{1}}) \\
& = &
\sum\limits_{x_{2}, y_{2}}\sum\limits_{a_{1}, b_{1}}\sum\limits_{x_{1}, y_{1}}\sum\limits_{a_{2}, b_{2}}\tilde{\rho}_{x_{2}, x_{2}}^{y_{2}, y_{2}}p(x_{1}y_{1}, a_{2}b_{2}|x_{2}y_{2}, a_{1}b_{1})\langle \rho_{A_{1}B_{1}}(e_{a_{1}}\otimes e_{b_{1}}), e_{a_{1}}\otimes e_{b_{1}}\rangle \\
& &
\;\;\;\;\;\;\;\; \times\epsilon_{y_{1}y_{1}}\otimes \epsilon_{a_{2}a_{2}}\otimes \epsilon_{b_{2}b_{2}} \\
& = &
\sum\limits_{a_{1}, b_{1}}\sum\limits_{a_{2}, b_{2}}\sum\limits_{y_{1}}\bigg(\sum\limits_{y_{2}}\sum\limits_{x_{2}}\tilde{\rho}_{x_{2}, x_{2}}^{y_{2}, y_{2}}p(y_{1}, a_{2}b_{2}|y_{2}, a_{1}b_{1})\bigg)\langle \rho_{A_{1}B_{1}}(e_{a_{1}}\otimes e_{b_{1}}), e_{a_{1}}\otimes e_{b_{1}}\rangle \\
& &
\;\;\;\;\;\;\;\; \times \epsilon_{y_{1}y_{1}}\otimes \epsilon_{a_{2}a_{2}}\otimes \epsilon_{b_{2}b_{2}} \\
& = &
0.
\end{eqnarray*}
Using a similar argument, we can check that conditions (\ref{pt_two}), (\ref{pt_three}), and (\ref{pt_four}) are satisfied.  Conversely, assuming $\Gamma_{p}$ satisfies (\ref{pt_one})-(\ref{pt_four}), if we let 
\begin{gather*}
	\rho = \epsilon_{x_{2}x_{2}}\otimes \epsilon_{y_{2}y_{2}}\otimes \epsilon_{a_{1}a_{1}}\otimes\epsilon_{b_{1}b_{1}}-\epsilon_{x_{2}'x_{2}'}\otimes \epsilon_{y_{2}y_{2}}\otimes \epsilon_{a_{1}a_{1}}\otimes\epsilon_{b_{1}b_{1}},
\end{gather*}
for $x_{2}, x_{2}' \in X_{2}, y_{2} \in Y_{2}, a_{1} \in A_{1}$ and $b_{1} \in B_{1}$ we have that ${\rm Tr}_{X_{2}}(\rho) = 0$ with
\begin{gather*}
	\Gamma_{p}(\rho)  = \sum\limits_{x_{1}, y_{1}}\sum\limits_{a_{2}, b_{2}}(p(x_{1}y_{1}, a_{2}b_{2}|x_{2}y_{2}, a_{1}b_{1})-p(x_{1}y_{1}, a_{2}b_{2}|x_{2}'y_{2}, a_{1}b_{1}))\\\;\;\;\;\;\;\;\;\;\;\;\;\times\epsilon_{x_{1}x_{1}}\otimes \epsilon_{y_{1}y_{1}}\otimes \epsilon_{a_{2}a_{2}}\otimes \epsilon_{b_{2}b_{2}}.
\end{gather*}
\noindent By (\ref{pt_one}), we see
\begin{gather*}
	\sum\limits_{x_{1}}p(x_{1}y_{1}, a_{2}b_{2}|x_{2}y_{2}, a_{1}b_{1}) = \sum\limits_{x_{1}}p(x_{1}y_{1}, a_{2}b_{2}|x_{2}'y_{2}, a_{1}b_{1}), \;\;\;\; x_{2}, x_{2}' \in X_{2}.
\end{gather*}
\noindent The other SNS conditions can be verified by replacing $\rho$ with
\begin{gather*}
	\epsilon_{x_{2}x_{2}}\otimes \epsilon_{y_{2}y_{2}}\otimes \epsilon_{a_{1}a_{1}}\otimes\epsilon_{b_{1}b_{1}}-\epsilon_{x_{2}x_{2}}\otimes \epsilon_{y_{2}'y_{2}'}\otimes \epsilon_{a_{1}a_{1}}\otimes\epsilon_{b_{1}b_{1}}, \\
	\epsilon_{x_{2}x_{2}}\otimes \epsilon_{y_{2}y_{2}}\otimes \epsilon_{a_{1}a_{1}}\otimes\epsilon_{b_{1}b_{1}}-\epsilon_{x_{2}x_{2}}\otimes \epsilon_{y_{2}y_{2}}\otimes \epsilon_{a_{1}'a_{1}'}\otimes\epsilon_{b_{1}b_{1}},
\end{gather*}
\noindent and
\begin{gather*}
	\epsilon_{x_{2}x_{2}}\otimes \epsilon_{y_{2}y_{2}}\otimes \epsilon_{a_{1}a_{1}}\otimes\epsilon_{b_{1}b_{1}}-\epsilon_{x_{2}x_{2}}\otimes \epsilon_{y_{2}y_{2}}\otimes \epsilon_{a_{1}a_{1}}\otimes\epsilon_{b_{1}'b_{1}'},
\end{gather*}
respectively. 

Thus, we see that if $p$ is an SNS correlation over $(X_{2}Y_{2}, A_{1}B_{1}, X_{1}Y_{1}, A_{2}B_{2})$, then $\Gamma_{p}$ is SQNS. Also, if $\Gamma$ is a $(X_{2}Y_{2}\times A_{1}B_{1}, X_{1}Y_{1}\times A_{2}B_{2})$ classical SQNS correlation then $\Gamma = \Gamma_{p}$ for some SNS correlation $p$. 
\end{remark}
\begin{remark}
\rm An easily verified alternative characterization of QNS correlations can be stated as follows: $\Phi: M_{XY}\rightarrow M_{AB}$ is QNS if and only if the unital completely positive map $\Phi^{*}: M_{AB}\rightarrow M_{XY}$ preserves subalgebras, in the sense that
\begin{gather*}
	\Phi^{*}(M_{A}\otimes 1) \subseteq M_{X}\otimes 1, \;\;\;\; \Phi^{*}(1\otimes M_{B}) \subseteq 1\otimes M_{Y}.
\end{gather*}
The strengthening of partial trace conditions in the definition of an SQNS correlation leads to the following result.
\end{remark}
\begin{proposition}
A QNS correlation $\Phi: M_{X_{2}Y_{2}, A_{1}B_{1}}\rightarrow M_{X_{1}Y_{1}, A_{2}B_{2}}$ is SQNS if and only if the unital completely positive map $\Phi^{*}: M_{X_{1}Y_{1}, A_{2}B_{2}}\rightarrow M_{X_{2}Y_{2}, A_{1}B_{1}}$ preserves subalgebras in the sense that
\begin{gather}\label{sub_alg_one}
	\Phi^{*}(1\otimes M_{Y_{1}}\otimes M_{A_{2}}\otimes M_{B_{2}}) \subseteq 1\otimes M_{Y_{2}}\otimes M_{A_{1}}\otimes M_{B_{1}},
\end{gather}
\begin{gather}\label{sub_alg_two}
	 \Phi^{*}(M_{X_{1}}\otimes 1\otimes M_{A_{2}} \otimes M_{B_{2}}) \subseteq M_{X_{2}}\otimes 1\otimes M_{A_{1}}\otimes M_{B_{1}},
\end{gather}
\begin{gather}\label{sub_alg_three}
	\Phi^{*}(M_{X_{1}}\otimes M_{Y_{1}}\otimes 1\otimes M_{B_{2}}) \subseteq M_{X_{2}}\otimes M_{Y_{2}}\otimes 1\otimes M_{B_{1}},
\end{gather}
\noindent and
\begin{gather}\label{sub_alg_four}
	\Phi^{*}(M_{X_{1}}\otimes M_{Y_{1}}\otimes M_{A_{2}}\otimes 1) \subseteq M_{X_{2}}\otimes M_{Y_{2}}\otimes M_{A_{1}}\otimes 1.
\end{gather}
\end{proposition}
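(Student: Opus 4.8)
The plan is to prove the proposition by showing that \emph{each} of the four defining conditions (\ref{pt_one})--(\ref{pt_four}) of an SQNS correlation is, on its own, equivalent to the correspondingly-indexed subalgebra inclusion, namely (\ref{pt_one})$\Leftrightarrow$(\ref{sub_alg_one}), (\ref{pt_two})$\Leftrightarrow$(\ref{sub_alg_two}), (\ref{pt_three})$\Leftrightarrow$(\ref{sub_alg_three}), and (\ref{pt_four})$\Leftrightarrow$(\ref{sub_alg_four}); the statement then follows by taking the conjunction. This amounts to running the duality argument of the preceding remark once for each of the four tensor legs $X$, $Y$, $A$, $B$ that gets traced out, so I would isolate the mechanism as a single adjunction computation and then apply it four times.

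The engine is the following. Work with the pairing $\langle S, T\rangle = {\rm Tr}(ST^{\rm t})$ of Section \ref{s_prelim}, with respect to which $\Phi^{*}$ is the adjoint determined by $\langle \Phi(\rho), T\rangle = \langle \rho, \Phi^{*}(T)\rangle$ (the computation is insensitive to the choice of bilinear trace pairing versus the sesquilinear Hilbert--Schmidt form, since all subspaces occurring below are $*$- and transpose-invariant). Taking the $X$-leg as the template, I would first record the elementary partial-trace adjunction
$$\langle {\rm Tr}_{X_{1}}\Phi(\rho), T\rangle = \langle \Phi(\rho), 1_{X_{1}}\otimes T\rangle = \langle \rho, \Phi^{*}(1_{X_{1}}\otimes T)\rangle,$$
valid for all $\rho \in M_{X_{2}Y_{2}A_{1}B_{1}}$ and $T \in M_{Y_{1}A_{2}B_{2}}$; this uses only that the transpose respects tensor factorization together with the defining property of the partial trace. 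I would then identify the relevant kernel and its annihilator: writing $M_{X_{2}}^{0}$ for the traceless matrices, one has $\ker({\rm Tr}_{X_{2}}) = M_{X_{2}}^{0}\otimes M_{Y_{2}A_{1}B_{1}}$ inside $M_{X_{2}Y_{2}A_{1}B_{1}}$, and its annihilator under $\langle\cdot,\cdot\rangle$ is exactly $1_{X_{2}}\otimes M_{Y_{2}}\otimes M_{A_{1}}\otimes M_{B_{1}}$ (because $\langle A, 1\rangle = {\rm Tr}(A)$ forces $\mathbb{C}1_{X_{2}}$ to be the annihilator of $M_{X_{2}}^{0}$, and the form factorizes and is nondegenerate on the remaining legs).

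With these two facts, condition (\ref{pt_one}) says ${\rm Tr}_{X_{1}}\Phi(\rho) = 0$ for all $\rho \in \ker({\rm Tr}_{X_{2}})$; by nondegeneracy this holds iff $\langle \rho, \Phi^{*}(1_{X_{1}}\otimes T)\rangle = 0$ for all such $\rho$ and all $T$, i.e. iff $\Phi^{*}(1_{X_{1}}\otimes T)$ lies in the annihilator of $\ker({\rm Tr}_{X_{2}})$ for every $T$, which by the previous step is precisely the inclusion $\Phi^{*}(1_{X_{1}}\otimes M_{Y_{1}}\otimes M_{A_{2}}\otimes M_{B_{2}}) \subseteq 1_{X_{2}}\otimes M_{Y_{2}}\otimes M_{A_{1}}\otimes M_{B_{1}}$ of (\ref{sub_alg_one}). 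The remaining three equivalences are obtained verbatim after permuting the four tensor factors so that $Y$, $A$, or $B$ respectively plays the distinguished role; for (\ref{pt_three}) and (\ref{pt_four}) the traced-out legs sit in the $A_{1}B_{1}$ block rather than the $X_{2}Y_{2}$ block, so the reordering bookkeeping is the only thing that changes.

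I expect no conceptual obstacle here; the one step that is not purely formal is the passage from the product-state phrasing of (\ref{pt_one})--(\ref{pt_four}) to the kernel phrasing used in the adjunction. The hypotheses quantify over product inputs $\rho_{X_{2}Y_{2}}\otimes \rho_{A_{1}B_{1}}$ with a partial-trace constraint on one factor, whereas the argument needs all of $\ker({\rm Tr}_{X_{2}})$. I would close this gap by observing that the linear span of $\{Z\otimes \rho_{A_{1}B_{1}} : {\rm Tr}_{X_{2}}(Z) = 0\}$ is exactly $M_{X_{2}}^{0}\otimes M_{Y_{2}}\otimes M_{A_{1}}\otimes M_{B_{1}} = \ker({\rm Tr}_{X_{2}})$, so that linearity of $\Phi$ renders the two formulations equivalent, with the analogous spanning statements handling the other three legs.
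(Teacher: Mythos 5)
Your proposal is correct and is essentially the paper's own argument: both directions rest on the adjunction $\langle {\rm Tr}_{X_{1}}\Phi(\rho), T\rangle = \langle \rho, \Phi^{*}(1_{X_{1}}\otimes T)\rangle$ together with the fact that the annihilator of the traceless matrices under the trace pairing is $\mathbb{C}1$, applied once per tensor leg. Your kernel/annihilator packaging is marginally cleaner — it handles both implications in one nondegeneracy chain and automatically tests $\Phi^{*}(1\otimes T)$ against the diagonal differences $\epsilon_{x_{2}x_{2}}-\epsilon_{x_{2}'x_{2}'}$ as well as the off-diagonal matrix units, which the paper's forward direction treats only implicitly — but the mechanism is the same.
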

\begin{proof}
First, let $\Phi$ be an SQNS correlation and take an arbitrary $\rho \in 1\otimes M_{Y_{1}}\otimes M_{A_{2}}\otimes M_{B_{2}}$. Write
\begin{gather*}
	\rho = \sum\limits_{y_{1}, y_{1}'}\sum\limits_{a_{2}, a_{2}'}\sum\limits_{b_{2}, b_{2}'}\lambda_{y_{1}y_{1}'}^{a_{2}a_{2}', b_{2}b_{2}'}1\otimes \epsilon_{y_{1}y_{1}'}\otimes\epsilon_{a_{2}a_{2}'}\otimes \epsilon_{b_{2}b_{2}'}.
\end{gather*}
\noindent If we now fix $x_{2}, x_{2}' \in X_{2}$ such that $x_{2} \neq x_{2}'$, and for any $y_{2}, y_{2}' \in Y_{2}, a_{1}, a_{1}' \in A_{1}, b_{1}, b_{1}' \in B_{1}$ we see
\begin{eqnarray*}
& &
	\langle \Phi^{*}(\rho), \epsilon_{x_{2}x_{2}'}\otimes \epsilon_{y_{2}y_{2}'}\otimes \epsilon_{a_{1}a_{1}'}\otimes \epsilon_{b_{1}b_{1}'}\rangle \\
& = &
\sum\lambda_{y_{1}y_{1}'}^{a_{2}a_{2}', b_{2}b_{2}'}\langle\Phi^{*}(1\otimes \epsilon_{y_{1}y_{1}'}\otimes \epsilon_{a_{2}a_{2}'}\otimes \epsilon_{b_{2}b_{2}'}), \epsilon_{x_{2}x_{2}'}\otimes \epsilon_{y_{2}y_{2}'}\otimes \epsilon_{a_{1}a_{1}'}\otimes \epsilon_{b_{1}b_{1}'}\rangle \\
& = &
\sum\lambda_{y_{1}y_{1}'}^{a_{2}a_{2}', b_{2}b_{2}'}\langle 1\otimes \epsilon_{y_{1}y_{1}'}\otimes\epsilon_{a_{2}a_{2}'}\otimes \epsilon_{b_{2}b_{2}'}, \Phi(\epsilon_{x_{2}x_{2}'}\otimes \epsilon_{y_{2}y_{2}'}\otimes \epsilon_{a_{1}a_{1}'}\otimes \epsilon_{b_{1}b_{1}'})\rangle \\
& = &
\sum\lambda_{y_{1}y_{1}'}^{a_{2}a_{2}', b_{2}b_{2}'}\langle \epsilon_{y_{1}y_{1}'}\otimes \epsilon_{a_{2}a_{2}'}\otimes \epsilon_{b_{2}b_{2}'}, {\rm Tr}_{X_{1}}\Phi(\epsilon_{x_{2}x_{2}'}\otimes \epsilon_{y_{2}y_{2}'}\otimes \epsilon_{a_{1}a_{1}'}\otimes \epsilon_{b_{1}b_{1}'})\rangle \\
& = &
0,
\end{eqnarray*}
\noindent where the latter sums are over all $y_{1}, y_{1}' \in Y_{1}, a_{2}, a_{2}' \in A_{2}, b_{2}, b_{2}' \in B_{2}$ as ${\rm Tr}_{X_{2}}(\epsilon_{x_{2}x_{2}'}\otimes \epsilon_{y_{2}y_{2}'}) = 0$ by our choice of $x_{2}, x_{2}' \in X_{2}$. As this holds for all $y_{2}, y_{2}' \in Y_{2}, a_{1}, a_{1}' \in A_{1}$ and $b_{1}, b_{1}' \in B_{1}$, and our choice of $\rho \in 1\otimes M_{Y_{1}}\otimes M_{A_{2}}\otimes M_{B_{2}}$ was arbitrary, this implies 
\begin{gather*}
	\Phi^{*}(1\otimes M_{Y_{1}}\otimes M_{A_{2}}\otimes M_{B_{2}}) \subseteq 1\otimes M_{Y_{2}}\otimes M_{A_{1}}\otimes M_{B_{1}}.
\end{gather*}
\noindent Using (\ref{pt_two})-(\ref{pt_four}), we may then show that conditions (\ref{sub_alg_two})-(\ref{sub_alg_four}) also hold. 

Conversely, assume that $\Phi^{*}$ preserves subalgebras as in (\ref{sub_alg_one})-(\ref{sub_alg_four}). We wish to show that $\Phi$ is an SQNS correlation; to that end, let $\rho_{X_{2}Y_{2}} \in M_{X_{2}Y_{2}}$ and $\rho_{A_{1}B_{1}} \in M_{A_{1}B_{1}}$ such that ${\rm Tr}_{X_{2}}(\rho_{X_{2}Y_{2}}) = 0$. Write
\begin{gather*}
	\rho_{X_{2}Y_{2}} = \sum\limits_{x_{2}, x_{2}'}\sum\limits_{y_{2}, y_{2}'}\rho_{x_{2}x_{2}'}^{y_{2}y_{2}'}\epsilon_{x_{2}x_{2}'}\otimes \epsilon_{y_{2}y_{2}'}, \;\;\;\; \rho_{A_{1}B_{1}} = \sum\limits_{a_{1}, a_{1}'}\sum\limits_{b_{1}, b_{1}'}\rho_{a_{1}a_{1}'}^{b_{1}b_{1}'}\epsilon_{a_{1}a_{1}'}\otimes \epsilon_{b_{1}b_{1}'}.
\end{gather*}
\noindent For $y_{1}, y_{1}' \in Y_{1}, a_{2}, a_{2}' \in A_{2}, b_{2}, b_{2}' \in B_{2}$ we see
\begin{eqnarray*}
& &
\langle {\rm Tr}_{X_{1}}\Phi(\rho_{X_{2}Y_{2}}\otimes \rho_{A_{1}B_{1}}), \epsilon_{y_{1}y_{1}'}\otimes \epsilon_{a_{2}a_{2}'}\otimes \epsilon_{b_{2}b_{2}'}\rangle \\
& = &
\sum\limits_{x_{2}, x_{2}'}\sum\rho_{x_{2}x_{2}'}^{y_{2}y_{2}'}\rho_{a_{1}a_{1}'}^{b_{1}b_{1}'}\langle {\rm Tr}_{X_{1}}\Phi(\epsilon_{x_{2}x_{2}'}\otimes \epsilon_{y_{2}y_{2}'}\otimes \epsilon_{a_{1}a_{1}'}\otimes \epsilon_{b_{1}b_{1}'}), \epsilon_{y_{1}y_{1}'}\otimes \epsilon_{a_{2}a_{2}'}\otimes \epsilon_{b_{2}b_{2}'}\rangle \\
& = &
\sum\limits_{x_{2}, x_{2}'}\sum\rho_{x_{2}x_{2}'}^{y_{2}y_{2}'}\rho_{a_{1}a_{1}'}^{b_{1}b_{1}'}\langle \epsilon_{x_{2}x_{2}'}\otimes \epsilon_{y_{2}y_{2}'}\otimes \epsilon_{a_{1}a_{1}'}\otimes \epsilon_{b_{1}b_{1}'}, \Phi^{*}(1\otimes \epsilon_{y_{1}y_{1}'}\otimes \epsilon_{a_{2}a_{2}'}\otimes \epsilon_{b_{2}b_{2}'})\rangle \\
& =  &
\sum\limits_{x_{2}}\sum\rho_{x_{2}x_{2}}^{y_{2}y_{2}'}\rho_{a_{1}a_{1}'}^{b_{1}b_{1}'}(\Phi^{*})_{y_{2}'y_{2}}^{a_{1}'a_{1}, b_{1}'b_{1}} \\
& = &
0,
\end{eqnarray*}
\noindent where the second summand ranges over all $y_{2}, y_{2}' \in Y_{2}, a_{1}, a_{1}' \in A_{1}, b_{1}, b_{1}' \in B_{1}$ as $\sum\limits_{x_{2}}\rho_{x_{2}x_{2}}^{y_{2}y_{2}'} = 0$ for any $y_{2}, y_{2}' \in Y_{2}$. Note here that we are using (\ref{sub_alg_one}) and writing
\begin{gather*}
	\Phi^{*}(1\otimes \epsilon_{y_{1}y_{1}'}\otimes \epsilon_{a_{2}a_{2}'}\otimes \epsilon_{b_{2}b_{2}'}) = \sum (\Phi^{*})_{y_{2}y_{2}'}^{a_{1}a_{1}', b_{1}b_{1}'}(1\otimes \epsilon_{y_{2}y_{2}'}\otimes \epsilon_{a_{1}a_{1}'}\otimes \epsilon_{b_{1}b_{1}'}),  
\end{gather*}                                                                    
\noindent where the latter sum is over all of $Y_{2}, A_{1}$ and $B_{1}$. As our choice of $y_{1}, y_{1}' \in Y_{1}, a_{2}, a_{2}' \in A_{2}$ and $b_{2}, b_{2}' \in B_{2}$ were arbitrary, we conclude that (\ref{pt_one}) holds. Conditions (\ref{pt_two})-(\ref{pt_four}) are verified similarly.                                        
\end{proof}
\begin{definition}\label{sqns_subclass_defn}
An SQNS correlation $\Gamma$ over $(X_{2}Y_{2}, A_{1}B_{1}, X_{1}Y_{1}, A_{2}B_{2})$ is called
\begin{itemize}
	\item[(i)] {\rm quantum commuting} if there exists a Hilbert space $H$, stochastic operator matrices
	\begin{gather*}\label{sqc_matrices}
		E_{X} = (E_{x_{2}x_{2}', x_{1}x_{1}'})_{x_{2}x_{2}', x_{1}x_{1}'}, \;\;\;\; E_{Y} = (E^{y_{2}y_{2}', y_{1}y_{1}'})_{y_{2}y_{2}', y_{1}y_{1}'},
\\	           F_{A} = (F_{a_{1}a_{1}', a_{2}a_{2}'})_{a_{1}a_{1}', a_{2}a_{2}'}, \;\;\;\; F_{B} = (F^{b_{1}b_{1}', b_{2}b_{2}'})_{b_{1}b_{1}', b_{2}b_{2}'}
	\end{gather*} 
	\noindent acting on $H$ with mutually commuting entries, and a unit vector $\xi \in H$ such that $\Gamma = \Gamma_{E, F, \xi}$, where $E = E_{X}\cdot E_{Y}, F = F_{A}\cdot F_{B}$.
	\item[(ii)] {\rm quantum} if there exists finite dimensional Hilbert spaces $H$ and $K$, quantum dilatable strongly stochastic operator matrices 
	\begin{gather*}\label{sq_matrices}
		M = (M_{x_{2}x_{2}', a_{1}a_{1}'}^{x_{1}x_{1}', a_{2}a_{2}'})_{x_{2}x_{2}', a_{1}a_{1}', x_{1}x_{1}', a_{2}a_{2}'} \;\;\;\; {\rm on \;} H, 
\\		N = (N_{y_{2}y_{2}', b_{1}b_{1}'}^{y_{1}y_{1}', b_{2}b_{2}'})_{y_{2}y_{2}', b_{1}b_{1}', y_{1}y_{1}', b_{2}b_{2}'} \;\;\;\; {\rm on \;} K,
	\end{gather*}
	\noindent and a unit vector $\xi \in H\otimes K$, such that $\Gamma = \Gamma_{M \odot N, \xi}$. 
	\item[(iii)] {\rm approximately quantum} if it is the limit of quantum SQNS correlations.
	\item[(iv)] {\rm local} if it is quantum, and the matrices $M$ and $N$ from (ii) can be chosen to be locally dilatable.
\end{itemize}
\end{definition}
We denote by $\cl{Q}_{\rm sqc}$ (resp. $\cl{Q}_{\rm sqa}, \cl{Q}_{\rm sq}, \cl{Q}_{\rm sloc}$) the classes of quantum commuting (resp. approximately quantum, quantum, local) SQNS correlations. We note that, by definition, $\cl{Q}_{\rm st} \subseteq \cl{Q}_{\rm t}$, for ${\rm t} \in \{\rm loc, q, qa, qc, ns\}$. In the sequel, for ${\rm t} \in \{\rm loc, q, qa, qc, ns\}$ let $\cl{C}_{\rm st}$ denote the class of all SNS correlations of type ${\rm t}$, as defined in \cite[Section 5]{gt_one}. 

\begin{remark}\label{change_in_defn}
\rm We pause here for a correction to a previous work, and some necessary clarifying remarks. In \cite{gt_one}, we defined a classical correlation $\Gamma$ to belong to the class $\cl{C}_{\rm sqc}$  if there existed a Hilbert space $H$, dilatable NS operator matrices $P = (P_{x_{2}y_{2}, x_{1}y_{1}})_{x_{2}y_{2}, x_{1}y_{1}}$ and $Q = (Q_{a_{1}b_{1}, a_{2}b_{2}})_{a_{1}b_{1}, a_{2}b_{2}}$ on $H$ with mutually commuting entries, and a unit vector $\xi \in H$ such that $\Gamma = \Gamma_{P, Q, \xi}$. In \cite[Lemma 5.8]{gt_one} we claimed that this was a necessary and sufficient condition for the existence of a Hilbert space $K$, PVM's $(P_{x_{2}, x_{1}})_{x_{1} \in X_{1}}, (P^{y_{2}, y_{1}})_{y_{1} \in Y_{1}}, (Q_{a_{1}, a_{2}})_{a_{2} \in A_{2}}$ and $(Q^{b_{1}, b_{2}})_{b_{2} \in B_{2}}$ on $K$ with mutually commuting entries, and a unit vector $\eta \in K$ such that
\begin{gather*}
	\Gamma(x_{1}y_{1}, a_{2}b_{2}|x_{2}y_{2}, a_{1}b_{1}) = \langle P_{x_{2}, x_{1}}P^{y_{2}, y_{1}}Q_{a_{1}, a_{2}}Q^{b_{1}, b_{2}}\eta, \eta\rangle
\end{gather*}
\noindent for all $x_{i} \in X_{i}, y_{i} \in Y_{i}, a_{i} \in A_{i}, b_{i} \in B_{i}, i = 1, 2$. However, we have since discovered a gap in the proof of said lemma resulting from the incorrect use of associativity of the commuting operator system tensor product (see Section \ref{s_char_sqns}).

\medskip

In generalizing to the case of what we call SQNS correlations, as the natural generalization of NS operator matrices are strongly stochastic operator matrices if we were to follow the path laid out in \cite[Definition 5.6 (i)]{gt_one} one would expect that we say $\Gamma \in \cl{Q}_{\rm sqc}$ if there exists a Hilbert space $H$, dilatable strongly stochastic operator matrices $P = (P_{x_{2}x_{2}', y_{2}y_{2}'}^{x_{1}x_{1}', y_{1}y_{1}'})_{x_{2}x_{2}', y_{2}y_{2}', x_{1}x_{1}', y_{1}y_{1}'}$ and $Q = (Q_{a_{1}a_{1}', b_{1}b_{1}'}^{a_{2}a_{2}', b_{2}b_{2}'})_{a_{1}a_{1}', b_{1}b_{1}', a_{2}a_{2}', b_{2}b_{2}'}$ with mutually commuting entries acting on $H$, and a unit vector $\xi \in H$ such that $\Gamma = \Gamma_{P, Q, \xi}$. However, in order to get the decomposition of strongly stochastic operator matrices that we will need for the results in Section \ref{s_char_sqns} due to the failure of \cite[Lemma 5.8]{gt_one} (and its extension) we must instead require it in the definition (as seen in Definition \ref{sqns_subclass_defn} (i)). Furthermore, we will need to amend the definition for $\Gamma \in \cl{C}_{\rm sqc}$, which we do below. We point out that working with Definition \ref{sqc_amended_definition} (see below) instead of \cite[Definition 5.6(i)]{gt_one} recovers all subsequent results in \cite{gt_one} which were instead obtained using \cite[Lemma 5.8]{gt_one}. 
\begin{definition}\label{sqc_amended_definition}
An SNS correlation $\Gamma \in \cl{C}_{\rm sns}$ is now quantum commuting if there exists a Hilbert space $K$, PVM's $(P_{x_{2}, x_{1}})_{x_{1} \in X_{1}}, (P^{y_{2}, y_{1}})_{y_{1} \in Y_{1}}, (Q_{a_{1}, a_{2}})_{a_{2} \in A_{2}}$ and $(Q^{b_{1}, b_{2}})_{b_{2} \in B_{2}}$ on $K$ with mutually commuting entries, and a unit vector $\eta \in K$ such that
\begin{gather*}
	\Gamma(x_{1}y_{1}, a_{2}b_{2}|x_{2}y_{2}, a_{1}b_{1}) = \langle P_{x_{2}, x_{1}}P^{y_{2}, y_{1}}Q_{a_{1}, a_{2}}Q^{b_{1}, b_{2}}\eta, \eta\rangle
\end{gather*}
\noindent for all $x_{i} \in X_{i}, y_{i} \in Y_{i}, a_{i} \in A_{i}, b_{i} \in B_{i}, i = 1, 2$. 

\end{definition}
\end{remark}

\begin{proposition}\label{st_vs_qst}
Let ${\rm t} \in \{\rm loc, q, qa, qc, ns\}$ and $p$ an SNS correlation. Then $p \in \cl{C}_{\rm st}$ if and only if $\Gamma_{p} \in \cl{Q}_{\rm st}$. 
\end{proposition}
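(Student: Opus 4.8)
The plan is to treat the five types $\mathrm{t}$ separately, exploiting throughout that $\Gamma_p$ is an $(X_2Y_2\times A_1B_1,\,X_1Y_1\times A_2B_2)$-classical channel. The case $\mathrm{t}=\mathrm{ns}$ is exactly Remark \ref{sns_vs_sqns}, so I focus on $\mathrm{loc},\mathrm{q},\mathrm{qa},\mathrm{qc}$. The guiding principle is that, because $\Gamma_p$ only records diagonal-to-diagonal data, every realization of $\Gamma_p$ in $\cl{Q}_{\mathrm{st}}$ can be taken to use \emph{classical} (doubly diagonal) stochastic operator matrices in the sense of Remark \ref{ns_generalization}, and such matrices are in bijective correspondence with families of POVMs. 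These POVMs are precisely the classical resources that define $\cl{C}_{\mathrm{st}}$ in \cite[Section 5]{gt_one}, so the two descriptions can be matched once the dilatability type is tracked.

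For the forward implication $p\in\cl{C}_{\mathrm{st}}\Rightarrow\Gamma_p\in\cl{Q}_{\mathrm{st}}$, I would start from a type-$\mathrm{t}$ classical realization of $p$. For $\mathrm{t}=\mathrm{qc}$ this is, by the amended Definition \ref{sqc_amended_definition}, a quadruple of mutually commuting PVMs $(P_{x_2,x_1})_{x_1}$, $(P^{y_2,y_1})_{y_1}$, $(Q_{a_1,a_2})_{a_2}$, $(Q^{b_1,b_2})_{b_2}$ and a unit vector $\xi$. I then set
\[
(E_X)_{x_2x_2',x_1x_1'}=\delta_{x_2x_2'}\delta_{x_1x_1'}P_{x_2,x_1},
\]
and analogously define $E_Y,F_A,F_B$ from the remaining PVMs. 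Each is a classical stochastic operator matrix (positivity is immediate from block-diagonality, and the stochastic condition is exactly that the PVM sums to $I$), the four families have mutually commuting entries, and computing the diagonal entries of $\Gamma_{E,F,\xi}$ with $E=E_X\cdot E_Y$, $F=F_A\cdot F_B$ returns $\langle P_{x_2,x_1}P^{y_2,y_1}Q_{a_1,a_2}Q^{b_1,b_2}\xi,\xi\rangle=p(x_1y_1,a_2b_2|x_2y_2,a_1b_1)$. Hence $\Gamma_p\in\cl{Q}_{\mathrm{sqc}}$. The cases $\mathrm{q}$ and $\mathrm{loc}$ are handled identically, replacing commuting PVMs by PVMs on the two finite-dimensional tensor legs (with an entangled $\xi$) and by a commuting/product family respectively, so that the resulting classical $M,N$ are quantum-dilatable, resp.\ locally dilatable; $\mathrm{qa}$ then follows by continuity of $p\mapsto\Gamma_p$ together with the definition of the approximate classes as closures.

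For the reverse implication I would invert this. Given $\Gamma_p=\Gamma_{E,F,\xi}\in\cl{Q}_{\mathrm{sqc}}$ with mutually commuting (but \emph{a priori} non-classical) stochastic operator matrices $E_X,E_Y,F_A,F_B$, the diagonal entries $A^{x_2}_{x_1}:=(E_X)_{x_2x_2,x_1x_1}$, and their analogues, are positive, mutually commuting, and satisfy $\sum_{x_1}A^{x_2}_{x_1}=I$, so they form mutually commuting POVMs realizing $p$ against the state $\langle\,\cdot\,\xi,\xi\rangle$. To reach the PVM description of Definition \ref{sqc_amended_definition}, I would perform a joint Naimark dilation respecting commutation: since all these operators lie in an abelian $\mathrm{C}^*$-algebra $C(\Omega)$, carrying the probability measure $\mu$ induced by the state, I pass to $\tilde\Omega=\Omega\times\prod_{x_2}X_1\times\prod_{y_2}Y_1\times\prod_{a_1}A_2\times\prod_{b_1}B_2$ equipped with the probability measure whose density against $\mu$ is the product over all families of the functions representing these POVM elements, and take the PVMs given by multiplication by the coordinate indicators on $L^2(\tilde\Omega)$, with cyclic vector the constant function. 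These PVMs mutually commute and their joint distribution reproduces $p$, so $p\in\cl{C}_{\mathrm{sqc}}$. For $\mathrm{q}$ one compresses each tensor leg to its diagonal POVM and applies a finite-dimensional Naimark dilation on each leg separately, keeping $\xi$; $\mathrm{qa}$ follows by replacing a quantum realization $\Gamma_n\to\Gamma_p$ by its diagonal compression $\Delta\Gamma_n\Delta=\Gamma_{p_n}$ and invoking the $\mathrm{q}$ case; $\mathrm{loc}$ uses Proposition \ref{qloc_prop} and its bistochastic analogue.

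The step I expect to be the main obstacle is this reverse passage in the $\mathrm{qc}$ (and $\mathrm{q}$) case: producing from mutually commuting POVMs a quadruple of mutually commuting \emph{PVMs} that still reproduce the full joint correlation $p$, which is exactly what the amended Definition \ref{sqc_amended_definition} demands and what the failure of \cite[Lemma 5.8]{gt_one} makes delicate. The product-measure construction above is what makes this work, but one must check carefully that the diagonal compression is compatible with the internal product structures $E_X\cdot E_Y$, $F_A\cdot F_B$ and $M\odot N$, and that commutativity across all four families (not merely within each) is preserved; this is what guarantees that the abelian algebra $C(\Omega)$ exists and that the joint distribution, rather than only the marginals, is matched.
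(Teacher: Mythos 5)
Your forward implication is sound and is essentially the mechanism the paper relies on: from a type-$\rm t$ classical realization of $p$ you build doubly diagonal (``classical'') stochastic operator matrices as in Remark \ref{ns_generalization} and check that $\Gamma_{E,F,\xi}=\Gamma_p$. The paper itself disposes of the whole proposition in two lines, citing Remark \ref{ns_generalization} together with \cite[Lemma 7.2]{tt}, so your write-up of this half is a legitimate (if more explicit) version of the intended argument.

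The reverse implication for ${\rm t}={\rm qc}$ contains a genuine gap. You extract the diagonal entries $A^{x_2}_{x_1}=(E_X)_{x_2x_2,x_1x_1}$ (and their analogues) and assert that ``all these operators lie in an abelian ${\rm C}^*$-algebra $C(\Omega)$,'' which is the hypothesis your product-measure dilation on $L^2(\tilde\Omega)$ actually needs. But Definition \ref{sqns_subclass_defn}(i) only requires the entries of $E_X,E_Y,F_A,F_B$ to commute \emph{across} the four matrices; nothing forces the entries within a single stochastic operator matrix to commute with one another. For a general stochastic operator matrix the diagonal blocks $(E_{x_2x_2,x_1x_1})_{x_1}$ form a POVM for each $x_2$, and two such POVMs (for $x_2\neq x_2'$, or even two elements of the same POVM) need not commute. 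So there is no ambient $C(\Omega)$, and the construction of the joint measure with density ``the product of the representing functions'' does not get off the ground. This is precisely the delicate point you flag at the end, but commutativity across the four families does not repair it; you would need commutativity within each family, which is only available in the local case (compare the proof of Proposition \ref{qloc_prop}, which explicitly invokes local dilatability to obtain an abelian algebra). The passage from mutually commuting POVM-type data to the mutually commuting PVMs demanded by Definition \ref{sqc_amended_definition} has to go through the universal ${\rm C}^*$-algebra/operator-system machinery — restrict the state furnished by Theorem \ref{state_correspondence}(ii) to the commuting tensor product of the subsystems $\cl{S}_{X_2,X_1},\dots$ and invoke the classical correspondence of \cite[Theorem 7.11]{gt_one} (equivalently, the argument behind \cite[Lemma 7.2]{tt}) — rather than through a direct measure-theoretic dilation. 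A smaller issue of the same flavour affects your ${\rm q}$ case: Naimark dilation necessarily enlarges the space, so ``keeping $\xi$'' should be ``replacing $\xi$ by $(V_1\otimes V_2)\xi$,'' and one must again check that the several POVMs on a single leg can be simultaneously dilated in whatever form $\cl{C}_{\rm sq}$ requires.
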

\begin{proof}
In the case when ${\rm t} = {\rm ns}$, this follows directly by Remark \ref{sns_vs_sqns}. For ${\rm t} = {\rm loc, q, qa}$ or ${\rm qc}$, this follows from Remark \ref{ns_generalization} and \cite[Lemma 7.2]{tt}.
\end{proof}

\begin{remark}\label{sloc_convex_remark}
\rm Let $\Gamma$ be a local SQNS correlation over $(X_{2}Y_{2}$, $A_{1}B_{1}$, $X_{1}Y_{1}$, $A_{2}B_{2})$. If we choose dilations of the matrices $M$ and $N$ from (\ref{sq_matrices}) with mutually commuting entries, we may write the values of $\Gamma$ in the form
\begin{gather*}
	\Gamma(x_{1}x_{1}', y_{1}y_{1}', a_{2}a_{2}', b_{2}b_{2}'|x_{2}x_{2}', y_{2}y_{2}', a_{1}a_{1}', b_{1}b_{1}') = \langle E_{x_{2}, x_{2}'}^{x_{1}, x_{1}'}E_{a_{1}, a_{1}'}^{a_{2}, a_{2}'}F_{y_{2}, y_{2}'}^{y_{1}, y_{1}'}F_{b_{1}, b_{1}'}^{b_{2}, b_{2}'}\xi, \xi\rangle
\end{gather*}
\noindent where the stochastic operator matrices $(E_{x_{2}, x_{2}'}^{x_{1}, x_{1}'})_{x_{2}, x_{2}', x_{1}, x_{1}'}$, $(E_{a_{1}, a_{1}'}^{a_{2}, a_{2}'})_{a_{1}, a_{1}', a_{2}, a_{2}'}$, $(F_{y_{2}, y_{2}'}^{y_{1}, y_{1}'})_{y_{2}, y_{2}', y_{1}, y_{1}'}$, $(F_{b_{1}, b_{1}'}^{b_{2}, b_{2}'})_{b_{1}, b_{1}', b_{2}, b_{2}'}$ have mutually commuting entries. Using the fact that the tensor product of convex combinations of channels remains a convex combination, along with the arguments from Proposition \ref{qloc_prop} we may conclude that $\Gamma$ is a convex combination of the form
\begin{gather}\label{sloc_convex_comb}
	\Gamma = \sum\limits_{j=1}^{k}\lambda_{j}\Phi_{X}^{(j)}\otimes \Phi_{Y}^{(j)}\otimes \Phi_{A}^{(j)}\otimes \Phi_{B}^{(j)}, 
\end{gather}
\noindent where $\Phi_{X}^{(j)}: M_{X_{2}}\rightarrow M_{X_{1}}, \Phi_{Y}^{(j)}: M_{Y_{2}}\rightarrow M_{Y_{1}}, \Phi_{A}^{(j)}: M_{A_{1}}\rightarrow M_{A_{2}}$, and $\Phi_{B}^{(j)}: M_{B_{1}}\rightarrow M_{B_{2}}$ are quantum channels, $j = 1, \hdots, k$. It is also easy to verify that any SQNS correlation of the form (\ref{sloc_convex_comb}) is in $\cl{Q}_{\rm sloc}$. 
\end{remark}

\begin{remark}
\rm One may easily see that the operator matrices $E, F$ arising from Definition \ref{sqns_subclass_defn} (i) of any $\Gamma \in \cl{Q}_{\rm sqc}$ will be dilatable; as discussed in Remark \ref{change_in_defn} it is unknown if all mutually commuting dilatable strongly stochastic operator matrices can be made jointly dilatable. 
\end{remark}

In the remainder of this section, assume that $X_{1} = X_{2} := X, Y_{1} = Y_{2} := Y, A_{1} = A_{2} := A$, and $B_{1} = B_{2} := B$. Furthermore, assume that $X = A$ and $Y = B$. An SQNS correlation $\Gamma$ will be called an \textit{SQNS bicorrelation} if $\Gamma$ is unital and $\Gamma^{*}$ is an SQNS correlation. The collection of all SQNS correlations will be denoted $\cl{Q}_{\rm sns}^{\rm bi}$. An SQNS bicorrelation $\Gamma$ over $(XY, XY, XY, XY)$ is called \textit{quantum commuting} if there exists a Hilbert space $H$, strongly bistochastic operator matrices 
\begin{gather*}
	E_{X} = (E_{x_{2}x_{2}', x_{1}x_{1}'})_{x_{2}x_{2}', x_{1}x_{1}'}, \;\;\;\; E_{Y} = (E_{y_{2}y_{2}', y_{1}y_{1}'})_{y_{2}y_{2}', y_{1}y_{1}'}, 
\\	F_{A} = (F_{a_{1}a_{1}', a_{2}a_{2}'})_{a_{1}a_{1}', a_{2}a_{2}'}, \;\;\;\; F_{B} = (F_{b_{1}b_{1}', b_{2}b_{2}'})_{b_{1}b_{1}', b_{2}b_{2}'},
\end{gather*}
\noindent with mutually commuting entries acting on $H$, and a unit vector $\xi \in H$ such that $\Gamma = \Gamma_{E, F, \xi}$ where $E = E_{X}\cdot E_{Y}, F = F_{A}\cdot F_{B}$; we denote this class by $\cl{Q}_{\rm sqc}^{\rm bi}$. The classes of quantum SQNS bicorrelations  (denoted $\cl{Q}_{\rm sq}^{\rm bi}$), approximately quantum SQNS bicorrelations (denoted $\cl{Q}_{\rm sqa}^{\rm bi}$) and local SQNS bicorrelations (denoted $\cl{Q}_{\rm sloc}^{\rm bi}$) are defined similarly to their SQNS correlation counterparts, with dilatable strongly bistochastic operator matrices of the appropriate type in place of the strongly stochastic operator matrices of said type. In the sequel, we let $\cl{C}_{\rm st}^{\rm bi}$ denote the class of all SQNS bicorrelations of type ${\rm t}$, as defined in \cite[Section 6]{gt_one} and Remark \ref{change_in_defn}.

\begin{remark}\label{bisloc_convex_remark}
\rm Arguing as in Remark \ref{sloc_convex_remark}, we may identify $\cl{Q}_{\rm sloc}^{\rm bi}$ with SQNS correlations $\Gamma$ of the form
\begin{gather*}
	\Gamma = \sum\limits_{j=1}^{k}\lambda_{j}\Phi_{X}^{(j)}\otimes\Phi_{Y}^{(j)}\otimes\Phi_{A}^{(j)}\otimes \Phi_{B}^{(j)},
\end{gather*}
\noindent where $\Phi_{X}^{(j)}: M_{X}\rightarrow M_{X}, \Phi_{Y}^{(j)}: M_{Y}\rightarrow M_{Y}, \Phi_{A}^{(j)}: M_{A}\rightarrow M_{A}, \Phi_{B}^{(j)}: M_{B}\rightarrow M_{B}$ are unital quantum channels, $\lambda_{j} \geq 0$, $j = 1, \hdots, k$ with $\sum_{j=1}^{k}\lambda_{j} = 1$.
\end{remark}

\begin{remark}
\rm For a correlation type ${\rm t} \in \{\rm loc, q, qa, qc, ns\}$, it is clear that $\cl{Q}_{\rm st}^{\rm bi} \subseteq \cl{Q}_{\rm st}$. Additionally, if $\Gamma \in \cl{Q}_{\rm st}^{\rm bi}$, then $\Gamma^{*} \in \cl{Q}_{\rm st}^{\rm bi}$. Indeed: it is part of the definition when ${\rm t} = {\rm ns}$, and easily verified (using Remark \ref{bisloc_convex_remark}) when ${\rm t} = {\rm loc}$. The case when ${\rm t} = {\rm qc, q}$ or ${\rm qa}$ follow using a modification of the argument given in \cite[Remark 5.2]{bhtt}.
\end{remark}

\begin{proposition}
Let ${\rm t} \in \{\rm loc, q, qc, ns\}$ and $p$ an SNS bicorrelation. Then $p \in \cl{C}_{\rm st}^{\rm bi}$ if and only if $\Gamma_{p} \in \cl{Q}_{\rm st}^{\rm bi}$.
\end{proposition}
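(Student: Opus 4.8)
The plan is to follow the template of Proposition \ref{st_vs_qst}, treating the no-signalling case by hand and reducing the remaining types to the operator-matrix correspondences already established, now in their bistochastic incarnations. Throughout we use that, since $p$ is classical, the channel $\Gamma_{p}$ satisfies $\Gamma_{p} = \Delta\circ \Gamma_{p}\circ \Delta$, so its adjoint $\Gamma_{p}^{*}$ is again a classical channel; moreover $\Gamma_{p}$ is unital precisely when $\Gamma_{p}^{*}$ is trace-preserving, in which case $\Gamma_{p}^{*} = \Gamma_{p^{*}}$ for the transpose SNS correlation $p^{*}$ obtained by interchanging the input and output indices.

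First I would dispatch the case ${\rm t} = {\rm ns}$. By the definition of a bicorrelation, $\Gamma_{p} \in \cl{Q}_{\rm sns}^{\rm bi}$ means that $\Gamma_{p}$ is SQNS, that $\Gamma_{p}$ is unital, and that $\Gamma_{p}^{*}$ is SQNS. Remark \ref{sns_vs_sqns} gives that $\Gamma_{p}$ is SQNS if and only if $p$ is SNS, and applying the same remark to $p^{*}$ gives that $\Gamma_{p}^{*} = \Gamma_{p^{*}}$ is SQNS if and only if $p^{*}$ is SNS. Finally, unwinding the relation $\Gamma_{p}(I) = I$ shows that $\Gamma_{p}$ is unital exactly when $p$ meets the bistochasticity requirement which, together with the two SNS conditions, constitutes membership in $\cl{C}_{\rm sns}^{\rm bi}$ as defined in \cite[Section 6]{gt_one}. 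Matching these three conditions on either side yields the equivalence for ${\rm t} = {\rm ns}$.

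For ${\rm t} \in \{{\rm loc}, {\rm q}, {\rm qc}\}$ I would proceed as in Proposition \ref{st_vs_qst}, replacing every stochastic object by its bistochastic counterpart. The bistochastic analogue of Remark \ref{ns_generalization} shows that a classical (now doubly-stochastic) NS operator matrix embeds diagonally as a strongly \emph{bi}stochastic operator matrix whose marginals are classical bistochastic operator matrices; conversely, applying the conditional expectation $\Delta$ to a realization of the classical channel $\Gamma_{p}$ recovers such diagonal data. Combined with the bistochastic dilation and representation results of \cite{bhtt} (the analogues of \cite[Lemma 7.2]{tt} invoked for Proposition \ref{st_vs_qst}), this gives the equivalence for the local and quantum types, where in the local case one additionally uses the convex-decomposition description of Remark \ref{bisloc_convex_remark}.

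The step I expect to be most delicate is the quantum commuting case. Because of the gap described in Remark \ref{change_in_defn}, the class $\cl{C}_{\rm sqc}^{\rm bi}$ must be read through the amended Definition \ref{sqc_amended_definition}, i.e.\ in terms of four families of mutually commuting PVM's rather than through a joint dilation of commuting strongly bistochastic operator matrices. The work therefore lies in verifying, for classical $p$, that a commuting-PVM representation of the values of $p$ as in Definition \ref{sqc_amended_definition} assembles into mutually commuting strongly bistochastic operator matrices $E_{X}, E_{Y}, F_{A}, F_{B}$ with diagonal entries realizing $\Gamma_{p} = \Gamma_{E, F, \xi}$, and conversely that any such commuting bistochastic realization of the classical channel $\Gamma_{p}$ can be compressed by $\Delta$ to mutually commuting PVM's without destroying commutativity. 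Keeping the PVM formulation on the classical side, rather than an operator-system tensor formulation, is precisely what sidesteps the associativity failure that invalidated \cite[Lemma 5.8]{gt_one}.
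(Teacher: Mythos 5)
Your proposal is correct and follows essentially the same route as the paper, which simply invokes the arguments of Proposition \ref{st_vs_qst} together with \cite[Proposition 5.9]{bhtt}: the ${\rm ns}$ case via Remark \ref{sns_vs_sqns} applied to both $p$ and $p^{*}$ plus the unitality check, and the remaining types via the bistochastic analogues of Remark \ref{ns_generalization} and the representation results of \cite{bhtt}. Your additional care with the quantum commuting case under the amended Definition \ref{sqc_amended_definition} is consistent with, and a useful elaboration of, what the paper leaves implicit.
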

\begin{proof}
This holds essentially by using the same arguments as in Proposition \ref{st_vs_qst} and \cite[Proposition 5.9]{bhtt}.
\end{proof}


\section{Strategy transport}\label{strat_transport}

Let $X_{i}, Y_{i}, A_{i}, B_{i}$ be finite sets, for $i = 1, 2$. Recall that if $\mathcal{E}$ is a QNS correlation over $(X_{1}, Y_{1}, A_{1}, B_{1})$, it acts as a quantum channel $\cl{E}: M_{X_{1}Y_{1}}\rightarrow M_{A_{1}B_{1}}$; thus, if $\Gamma$ is SQNS over $(X_{2}Y_{2}, A_{1}B_{1}, X_{1}Y_{1}, A_{2}B_{2})$, then $\Gamma[\cl{E}]: M_{X_{2}Y_{2}}\rightarrow M_{A_{2}B_{2}}$. The two theorems in this section are crucial to our goal of transferring strategies between quantum games; briefly, they say that if $\cl{E}$ is viewed as a strategy of some type for one game, under mild conditions on $\Gamma$ the simulated channel $\Gamma[\cl{E}]$ can be viewed as a strategy \textit{of the same type} for another quantum game.
\begin{theorem}\label{strat_transp}
Let $\Gamma$ be an SQNS correlation over $(X_{2}Y_{2}, A_{1}B_{1}, X_{1}Y_{1}, A_{2}B_{2})$ and $\cl{E}$ be a QNS correlation over $(X_{1}, Y_{1}, A_{1}, B_{1})$. The following hold:
\begin{itemize}
	\item[(i)] $\Gamma[\cl{E}] \in \cl{Q}_{\rm ns}$;
	\item[(ii)] if $\Gamma \in \cl{Q}_{\rm sqc}$ and $\cl{E} \in \cl{Q}_{\rm qc}$, then $\Gamma[\cl{E}] \in \cl{Q}_{\rm qc}$;
	\item[(iii)] if $\Gamma \in \cl{Q}_{\rm sqa}$ and $\cl{E} \in \cl{Q}_{\rm qa}$, then $\Gamma[\cl{E}] \in \cl{Q}_{\rm qa}$;
	\item[(iv)] if $\Gamma \in \cl{Q}_{\rm sq}$ and $\cl{E} \in \cl{Q}_{\rm q}$, then $\Gamma[\cl{E}] \in \cl{Q}_{\rm q}$.
	\item[(v)] if $\Gamma \in \cl{Q}_{\rm sloc}$ and $\cl{E} \in \cl{Q}_{\rm loc}$, then $\Gamma[\cl{E}] \in \cl{Q}_{\rm loc}$.
\end{itemize}
\end{theorem}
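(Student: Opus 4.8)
The plan is to handle the general no-signalling claim (i) by a direct marginal computation, and the four ``typed'' claims (ii)--(v) by exhibiting realizations of $\Gamma[\cl{E}]$ assembled from realizations of $\Gamma$ and $\cl{E}$ along the intermediate alphabets $X_1,Y_1,A_1,B_1$.

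For (i), recall from \cite{dw} that $\Gamma[\cl{E}]$ is already a quantum channel, so it suffices to verify the two defining conditions of $\cl{Q}_{\rm ns}$ on product inputs. Working with the Choi matrix of $\Gamma[\cl{E}]$ given by (the four-fold analogue of) \eqref{choi_simulated_eqn}, I would fix $\rho_{X_2}$ with $\Tr(\rho_{X_2})=0$ and show that, after tracing out $A_2$, the resulting Choi entries are proportional to $\delta_{x_2,x_2'}$ --- equivalently, that $\Tr_{A_2}\Gamma[\cl{E}](\rho_{X_2}\otimes\rho_{Y_2})=0$. The argument chains three facts: condition \eqref{pt_three} for $\Gamma$ forces $\sum_{a_2}\Gamma(\cdots a_2a_2\cdots\mid\cdots a_1a_1'\cdots)$ to be supported on $a_1=a_1'$; the $A_1$-marginal condition of the QNS correlation $\cl{E}$ then forces $\sum_{a_1}\cl{E}(a_1a_1,\cdots\mid x_1x_1'\cdots)$ to be proportional to $\delta_{x_1,x_1'}$; and finally condition \eqref{pt_one} for $\Gamma$ forces the remaining $\sum_{x_1}$-contraction to be proportional to $\delta_{x_2,x_2'}$. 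The $B_2$-condition is symmetric, using \eqref{pt_four}, the $B_1$-marginal of $\cl{E}$, and \eqref{pt_two}. This is routine but bookkeeping-heavy.

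For (ii), I would realize $\Gamma=\Gamma_{E,F,\xi}$ via mutually commuting stochastic operator matrices $E_X,E_Y,F_A,F_B$ on $H$ as in Definition \ref{sqns_subclass_defn}(i), and $\cl{E}=\Gamma_{G,L,\eta}$ via mutually commuting stochastic operator matrices $G$ over $(X_1,A_1)$ and $L$ over $(Y_1,B_1)$ on $K$. On $H\otimes K$ I then form the contracted families
\[
\hat{E}_{x_2x_2',a_2a_2'}=\sum_{x_1,x_1',a_1,a_1'}\big(E_{x_2x_2',x_1x_1'}F_{a_1a_1',a_2a_2'}\big)\otimes G_{x_1x_1',a_1a_1'},
\]
and the analogous $\hat{F}_{y_2y_2',b_2b_2'}$ built from $E_Y,F_B,L$. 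The key point is that the map $\Phi_{\hat E}$ with $\Phi_{\hat E}(\epsilon_{a_2a_2'})=\hat E_{a_2a_2'}$ is the composition of $\Phi_{F_A}$, $\Phi_G\otimes\id$ and $\Phi_{E_X}\otimes\id$ (all unital completely positive by \cite[Theorem 5.2]{tt}) followed by the multiplication $\cl B(H)\otimes\cl B(H)\to\cl B(H)$, which is a $*$-homomorphism precisely because the $E$- and $F$-entries commute on $H$; hence $\hat E$ is a stochastic operator matrix, with $\Tr_{A_2}(\hat E)=I$ unwinding from the stochasticity of $F_A$, then $G$, then $E_X$. Mutual commutativity of $\{\hat E,\hat F\}$ is automatic, as the $H$-leg operators commute, the $K$-leg operators $G,L$ commute, and the two legs are independent. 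A direct substitution into \eqref{choi_simulated_eqn} shows $\Gamma[\cl{E}]=\Gamma_{\hat E,\hat F,\xi\otimes\eta}$, so $\Gamma[\cl{E}]\in\cl Q_{\rm qc}$.

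Part (iv) repeats this contraction in the finite-dimensional setting: using the quantum-dilatability of the strongly stochastic matrices $M,N$ realizing $\Gamma\in\cl Q_{\rm sq}$ to split their $X_2/A_2$ and $Y_2/B_2$ data onto separate finite-dimensional factors, together with the finite-dimensional local structure $E\odot F$ of $\cl E\in\cl Q_{\rm q}$, I would assemble the $(X_2,A_2)$-data on one finite-dimensional space and the $(Y_2,B_2)$-data on another, obtaining $\Gamma[\cl E]=\Gamma_{\hat E\odot\hat F,\zeta}$ with all spaces finite dimensional, hence $\Gamma[\cl E]\in\cl Q_{\rm q}$. Part (iii) then follows by continuity: $(\Gamma,\cl E)\mapsto\Gamma[\cl E]$ is given by the polynomial formula \eqref{choi_simulated_eqn} in finitely many Choi entries and is therefore jointly continuous, so approximating $\Gamma$ and $\cl E$ by quantum SQNS and quantum QNS correlations and invoking (iv) gives $\Gamma[\cl E]\in\cl Q_{\rm qa}$. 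For (v), I would use Remark \ref{sloc_convex_remark} to write $\Gamma=\sum_j\lambda_j\,\Phi_X^{(j)}\otimes\Phi_Y^{(j)}\otimes\Phi_A^{(j)}\otimes\Phi_B^{(j)}$ and the convex decomposition $\cl E=\sum_i\mu_i\,\cl E_X^{(i)}\otimes\cl E_Y^{(i)}$; the simulated channel then collapses to the convex combination $\sum_{i,j}\lambda_j\mu_i\,(\Phi_A^{(j)}\circ\cl E_X^{(i)}\circ\Phi_X^{(j)})\otimes(\Phi_B^{(j)}\circ\cl E_Y^{(i)}\circ\Phi_Y^{(j)})$ of products of quantum channels, which lies in $\cl Q_{\rm loc}$.

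The main obstacle is the verification in (ii), and its finite-dimensional adaptation in (iv), that the contracted family $\hat E$ is genuinely a stochastic operator matrix: positivity of $\hat E$ rests on the clean identification of $\Phi_{\hat E}$ as a composition of unital completely positive maps, which in turn depends essentially on using the mutual commutativity of the $H$-entries to realize the collapse $\cl B(H)\otimes\cl B(H)\to\cl B(H)$ as a $*$-homomorphism. Maintaining consistent index bookkeeping across the four sets while confirming $\Gamma[\cl E]=\Gamma_{\hat E,\hat F,\xi\otimes\eta}$ is the remaining delicate point.
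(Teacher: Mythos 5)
Your route is, in structure and in formulas, the paper's own proof: part (i) is the identical three-step chain of delta identities (condition (\ref{pt_three}) for $\Gamma$, then the $A_1$-marginal condition of $\cl{E}$, then (\ref{pt_one}), and symmetrically for the $B_2$-trace); parts (ii) and (iv) build exactly the contracted matrices (\ref{qc_simulate_one})--(\ref{qc_simulate_two}); part (iii) is the same approximation-plus-continuity argument; and part (v) is the same convex collapse. Parts (i), (iii) and (v) are correct as written, and in (ii) the trace computation, the mutual commutativity of the entries of $\hat{E}$ and $\hat{F}$, and the identification $\Gamma[\cl{E}]=\Gamma_{\hat{E},\hat{F},\xi\otimes\eta}$ all agree with the paper. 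The single point at which you genuinely depart from the paper --- the positivity of $\hat{E}$ (and $\hat{F}$) in (ii) --- contains a gap.

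You argue that $\Phi_{\hat{E}}$ is completely positive because it is the composition of $\Phi_{F_A}$, $\Phi_G\otimes\id$ and $\Phi_{E_X}\otimes\id$ followed by multiplication of the two $\cl{B}(H)$ legs, which you call a $*$-homomorphism ``because the entries commute''. What commutativity actually gives is that the product map $m\colon a\otimes b\mapsto ab$ is a $*$-homomorphism on $\cl{A}\otimes_{\rm max}\cl{B}$, where $\cl{A}=C^*(\{E_{x_2x_2',x_1x_1'}\})$ and $\cl{B}=C^*(\{F_{a_1a_1',a_2a_2'}\})$; it is not multiplicative on all of $\cl{B}(H)\otimes\cl{B}(H)$, and it need not be completely positive, or even bounded, on $\cl{A}\otimes_{\rm min}\cl{B}$. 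Your first three maps, however, are maps between spatial tensor products, so the intermediate Choi matrix is known to be positive only in the minimal structure, and min-positivity does not imply max-positivity; the final step is therefore unjustified as stated. Two repairs are available. One is what the paper's terse ``positive, as the entries all commute'' actually encodes: by \cite[Proposition 4.1]{tt}, commutativity makes $E_X\cdot F_A=(E_{x_2x_2',x_1x_1'}F_{a_1a_1',a_2a_2'})$ a positive (indeed stochastic) operator matrix over $(X_2A_1,X_1A_2)$ --- write $E_X=A^*A$ and $F_A=B^*B$ with the entries of $A$ in $\cl{A}$ and those of $B$ in $\cl{B}$, so that $E_X\cdot F_A=C^*C$ where the entries of $C$ are products of entries of $A$ and $B$ --- and then, up to a canonical shuffle, $\hat{E}=L_{J_{X_1}\otimes J_{A_1}}\big((E_X\cdot F_A)\otimes G\big)$ is a slice of a positive operator against the positive matrix $J_{X_1}\otimes J_{A_1}$, hence positive. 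The other is to reorder your composition so that the maximal tensor product is invoked before multiplying: $\Phi_{E_X}\otimes\Phi_{F_A}$ is completely positive from $M_{X_1A_2}=M_{X_1}\otimes_{\rm max}M_{A_2}$ into $M_{X_2A_1}\otimes(\cl{A}\otimes_{\rm max}\cl{B})$, and composing with $\id\otimes m$ and then contracting against $G$ yields $\hat{E}$. Note that part (iv) does not suffer from this issue: there $M$ is a single positive matrix (and the quantum dilation places the two legs on distinct tensor factors), so no multiplication of operators acting on a common Hilbert space is needed.
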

\begin{proof}
(i) Let $C$ denote the Choi matrix of quantum channel $\Gamma[\cl{E}]$. If $(\Gamma(x_{1}x_{1}', y_{1}y_{1}', a_{2}a_{2}', b_{2}b_{2}'|x_{2}x_{2}', y_{2}y_{2}', a_{1}a_{1}', b_{1}b_{1}'))$ is the Choi matrix of $\Gamma$ and $(\cl{E}(a_{1}a_{1}', b_{1}b_{1}'|x_{1}x_{1}', y_{1}y_{1}'))$ is the Choi matrix of $\cl{E}$ (where the former matrix ranges over $X_{1}, X_{2}, Y_{1}, Y_{2}, A_{1}, A_{2}, B_{1}, B_{2}$ and the latter ranges over $X_{1}, Y_{1}, A_{1}$ and $B_{1}$), as $\Gamma \in \cl{Q}_{\rm sns}$ and $\cl{E} \in \cl{Q}_{\rm ns}$ there exists $c_{x_{1}x_{1}', y_{1}y_{1}', x_{2}x_{2}', y_{2}y_{2}'}^{b_{2}b_{2}', b_{1}b_{1}'}$, $d_{y_{1}, y_{1}'}^{b_{1}, b_{1}'} \in \bb{C}$ such that 
\begin{gather*}
	\sum\limits_{a_{2} \in A_{2}}\Gamma(x_{1}x_{1}', y_{1}y_{1}', a_{2}a_{2}, b_{2}b_{2}'|x_{2}x_{2}', y_{2}y_{2}', a_{1}a_{1}', b_{1}b_{1}') = \delta_{a_{1}, a_{1}'}c_{x_{1}x_{1}', y_{1}y_{1}', x_{2}x_{2}', y_{2}y_{2}'}^{b_{2}b_{2}', b_{1}b_{1}'},
\\	\sum\limits_{a_{1} \in A_{1}}\cl{E}(a_{1}a_{1}, b_{1}b_{1}'|x_{1}x_{1}', y_{1}y_{1}') = \delta_{x_{1}, x_{1}'}d_{y_{1}, y_{1}'}^{b_{1}, b_{1}'},
\end{gather*}
\noindent for $x_{i}, x_{i}' \in X_{i}, y_{i}, y_{i}' \in Y_{i}, b_{i}, b_{i}' \in B_{i}, i = 1, 2$ (see e.g. \cite{dw}). Additionally, we may use (\ref{pt_one}) to guarantee the existence of $c_{y_{1}y_{1}', y_{2}y_{2}'}^{b_{2}b_{2}', b_{1}b_{1}'} \in \bb{C}$ such that
\begin{gather*}
	\sum\limits_{x_{1} \in X_{1}}c_{x_{1}x_{1}, y_{1}y_{1}', x_{2}x_{2}', y_{2}y_{2}'}^{b_{2}b_{2}', b_{1}b_{1}'} = \delta_{x_{2}, x_{2}'}c_{y_{1}y_{1}', y_{2}y_{2}'}^{b_{2}b_{2}', b_{1}b_{1}'},
\end{gather*}
\noindent for $y_{i}, y_{i}' \in Y_{i}, b_{i}, b_{i}' \in B_{i}, i = 1, 2$. For $y_{2}, y_{2}' \in Y_{2}, b_{2}, b_{2}' \in B_{2}$, set
\begin{gather*}
	\tilde{c}_{y_{2}, y_{2}'}^{b_{2}, b_{2}'} := \sum\limits_{y_{1}, y_{1}' \in Y_{1}}\sum\limits_{b_{1}, b_{1}' \in B_{1}}c_{y_{1}y_{1}', y_{2}y_{2}'}^{b_{2}b_{2}', b_{1}b_{1}'}d_{y_{1}, y_{1}'}^{b_{1}, b_{1}'}.
\end{gather*}
\noindent If we fix $y_{2}, y_{2}' \in Y_{2}, b_{2}, b_{2}' \in B_{2}$, then
\begin{eqnarray*}
& &
	\sum\limits_{a_{2}}\sum\limits_{\substack{x_{1}x_{1}', y_{1}y_{1}', \\ a_{1}a_{1}', b_{1}b_{1}'}}\Gamma(x_{1}x_{1}', y_{1}y_{1}', a_{2}a_{2}, b_{2}b_{2}'|x_{2}x_{2}', y_{2}y_{2}', a_{1}a_{1}', b_{1}b_{1}')\cl{E}(a_{1}a_{1}', b_{1}b_{1}'|x_{1}x_{1}', y_{1}y_{1}') \\
& = &
	\sum\limits_{\substack{x_{1}x_{1}', y_{1}y_{1}' \\ a_{1}a_{1}', b_{1}b_{1}'}}\delta_{a_{1}, a_{1}'}c_{x_{1}x_{1}', y_{1}y_{1}', x_{2}x_{2}', y_{2}y_{2}'}^{b_{2}b_{2}', b_{1}b_{1}'}\cl{E}(a_{1}a_{1}', b_{1}b_{1}'|x_{1}x_{1}', y_{1}y_{1}') \\
& = &
	\sum\limits_{a_{1}}\sum\limits_{\substack{x_{1}x_{1}', y_{1}y_{1}' \\ b_{1}b_{1}'}}c_{x_{1}x_{1}', y_{1}y_{1}', x_{2}x_{2}', y_{2}y_{2}'}^{b_{2}b_{2}', b_{1}b_{1}'}\cl{E}(a_{1}a_{1}, b_{1}b_{1}'|x_{1}x_{1}', y_{1}y_{1}') \\
& = &
	\sum\limits_{\substack{x_{1}x_{1}', y_{1}y_{1}', \\ b_{1}b_{1}'}}\delta_{x_{1}, x_{1}'}c_{x_{1}x_{1}', y_{1}y_{1}', x_{2}x_{2}', y_{2}y_{2}'}^{b_{2}b_{2}', b_{1}b_{1}'}d_{y_{1}, y_{1}'}^{b_{1}, b_{1}'} \\
& = &
	\sum\limits_{x_{1}}\sum\limits_{\substack{y_{1}y_{1}', \\ b_{1}b_{1}'}}c_{x_{1}x_{1}, y_{1}y_{1}', x_{2}x_{2}', y_{2}y_{2}'}^{b_{2}b_{2}', b_{1}b_{1}'}d_{y_{1}, y_{1}'}^{b_{1}, b_{1}'} \\
& = &
	\delta_{x_{2}, x_{2}'}\bigg(\sum\limits_{y_{1}y_{1}', b_{1}b_{1}'}c_{y_{1}y_{1}',  y_{2}y_{2}'}^{b_{2}b_{2}', b_{1}b_{1}'}d_{y_{1}, y_{1}'}^{b_{1}, b_{1}'}\bigg) \\
& = &
	\delta_{x_{2}, x_{2}'}\tilde{c}_{y_{2}, y_{2}'}^{b_{2}, b_{2}'}.
\end{eqnarray*}
\noindent We may use a similar argument (relying on (\ref{pt_two}) and (\ref{pt_four})) showing the existence of $\tilde{c}_{x_{2}, x_{2}'}^{a_{2}, a_{2}'} \in \bb{C}$ such that 
\begin{eqnarray*}
& &
	\sum\limits_{b_{2}}\sum\limits_{\substack{ x_{1}x_{1}', y_{1}y_{1}' \\ a_{1}a_{1}', b_{1}b_{1}'}}\Gamma(x_{1}x_{1}', y_{1}y_{1}', a_{2}a_{2}', b_{2}b_{2}|x_{2}x_{2}', y_{2}y_{2}', a_{1}a_{1}', b_{1}b_{1}')\cl{E}(a_{1}a_{1}', b_{1}b_{1}'|x_{1}x_{1}', y_{1}y_{1}') \\
& = &
	\delta_{y_{2}, y_{2}'}\tilde{c}_{x_{2}, x_{2}'}^{a_{2}, a_{2}'}
\end{eqnarray*}
\noindent for $x_{2}, x_{2}' \in X_{2}, a_{2}, a_{2}' \in A_{2}$. By (\ref{choi_simulated_eqn}), the former equality implies $L_{\omega_{XA}}(C) \in \cl{L}_{Y_{2}B_{2}}$ for every $\omega_{XA} \in M_{X_{2}A_{2}}$, while the latter equality implies $L_{\omega_{YB}}(C) \in \cl{L}_{X_{2}A_{2}}$ for every $\omega_{YB} \in M_{Y_{2}B_{2}}$. Thus, $C \in (\cl{L}_{X_{2}A_{2}}\otimes \cl{L}_{Y_{2}B_{2}}) \cap M_{X_{2}Y_{2}A_{2}B_{2}}^{+}$; by the injectivity of the minimal operator system tensor product, $C \in (\cl{L}_{X_{2}A_{2}}\otimes_{\rm min} \cl{L}_{Y_{2}B_{2}})^{+}$. Argue now as \cite[Theorem 6.2]{tt} to finish the proof.

(ii) Use Definition \ref{sqns_subclass_defn} (i) to obtain a Hilbert space $H$, stochastic operator matrices $P_{X} = (P_{x_{2}x_{2}', x_{1}x_{1}'})_{x_{2}x_{2}', x_{1}x_{1}'}$, $P_{Y} = (P^{y_{2}y_{2}', y_{1}y_{1}'})_{y_{2}y_{2}', y_{1}y_{1}'}$, $Q_{A} = (Q_{a_{1}a_{1}', a_{2}a_{2}'})_{a_{1}a_{1}', a_{2}a_{2}'}$ and $Q_{B} = (Q^{b_{1}b_{1}', b_{2}b_{2}'})_{b_{1}b_{1}', b_{2}b_{2}'}$ acting on $K$ and a unit vector $\xi \in H$ such that $\Gamma = \Gamma_{P, Q, \xi}$. Let $K$ be a Hilbert space, $E = (E_{x_{1}x_{1}', a_{1}a_{1}'})_{x_{1}x_{1}', a_{1}a_{1}'}$ and $F = (F^{y_{1}y_{1}', b_{1}b_{1}'})_{y_{1}y_{1}', b_{1}b_{1}'}$ stochastic operator matrices over $(X_{1}, A_{1})$ and $(Y_{1}, B_{1})$ with mutually commuting entries, and $\eta \in K$ a unit vector such that $\cl{E} = \cl{E}_{E, F, \eta}$. For $x_{2}, x_{2}' \in X_{2}, a_{2}, a_{2}' \in A_{2}, y_{2}, y_{2}' \in Y_{2}$, and $b_{2}, b_{2}' \in B_{2}$ set
\begin{gather}\label{qc_simulate_one}
	\tilde{E}_{x_{2}x_{2}', a_{2}a_{2}'} = \sum\limits_{x_{1}, x_{1}' \in X_{1}}\sum\limits_{a_{1}, a_{1}' \in A_{1}}P_{x_{2}x_{2}', x_{1}x_{1}'}Q_{a_{1}a_{1}', a_{2}a_{2}'}\otimes E_{x_{1}x_{1}', a_{1}a_{1}'}
\end{gather}
\noindent and
\begin{gather}\label{qc_simulate_two}
	\tilde{F}_{y_{2}y_{2}', b_{2}b_{2}'} = \sum\limits_{y_{1}, y_{1}' \in Y_{1}}\sum\limits_{b_{1}, b_{1}' \in B_{1}}P^{y_{2}y_{2}', y_{1}y_{1}'}Q^{b_{1}b_{1}', b_{2}b_{2}'}\otimes F^{y_{1}y_{1}', b_{1}b_{1}'}.
\end{gather}
\noindent We note that $\tilde{E} = (\tilde{E}_{x_{2}x_{2}', a_{2}a_{2}'})_{x_{2}x_{2}', a_{2}a_{2}'}$ and $\tilde{F} = (\tilde{F}_{y_{2}y_{2}', b_{2}b_{2}'})_{y_{2}y_{2}', b_{2}b_{2}'}$ are positive (as the entries of $P_{X}, P_{Y}, Q_{A}, Q_{B}$ all commute with one another), and claim that $\tilde{E}$ (resp. $\tilde{F}$) is a stochastic operator matrix over $(X_{2}, A_{2})$ (resp. over $(Y_{2}, B_{2})$). Indeed: as $P_{X}, Q_{A}$ and $E$ are all stochastic operator matrices, we have
\begin{eqnarray*}
	{\rm Tr}_{A_{2}}(\tilde{E})
& = &
\sum\limits_{a_{2}}\sum\limits_{x_{2}, x_{2}'}\sum\limits_{x_{1}, x_{1}'}\sum\limits_{a_{1}, a_{1}'}P_{x_{2}x_{2}', x_{1}x_{1}'}Q_{a_{1}a_{1}', a_{2}a_{2}}\otimes E_{x_{1}x_{1}', a_{1}a_{1}'} \\
& = &
\sum\limits_{x_{2}, x_{2}'}\sum\limits_{x_{1}, x_{1}'}\sum\limits_{a_{1}, a_{1}'}P_{x_{2}x_{2}', x_{1}x_{1}'}\big(\delta_{a_{1}, a_{1}'}I_{H}\big)\otimes E_{x_{1}x_{1}', a_{1}a_{1}'} \\
& = &
\sum\limits_{x_{2}, x_{2}'}\sum\limits_{x_{1}, x_{1}'}\sum\limits_{a_{1}}P_{x_{2}x_{2}', x_{1}x_{1}'}\otimes E_{x_{1}x_{1}', a_{1}a_{1}} \\
& = &
\sum\limits_{x_{2}, x_{2}'}\sum\limits_{x_{1}, x_{1}'}P_{x_{2}x_{2}', x_{1}x_{1}'}\otimes \big(\delta_{x_{1}, x_{1}'}I_{K}) \\
& = &
\sum\limits_{x_{2}, x_{2}'}\sum\limits_{x_{1}}P_{x_{2}x_{2}', x_{1}x_{1}}\otimes I_{K} \\
& = &
\sum\limits_{x_{2}, x_{2}'}\big(\delta_{x_{2}, x_{2}'}I_{H}\big)\otimes I_{K} \\
& = &
I_{X_{2}}\otimes I_{H}\otimes I_{K}.
\end{eqnarray*}
\noindent Thus, $\tilde{E} \in M_{X_{2}A_{2}}\otimes \cl{B}(H\otimes K)$ is stochastic, as claimed. By symmetry, $\tilde{F} \in M_{Y_{2}B_{2}}\otimes \cl{B}(H\otimes K)$ is stochastic. We also note that
\begin{gather*}
	\tilde{E}_{x_{2}x_{2}', a_{2}a_{2}'}\tilde{F}_{y_{2}y_{2}', b_{2}b_{2}'} = \tilde{F}_{y_{2}y_{2}', b_{2}b_{2}'}\tilde{E}_{x_{2}x_{2}', a_{2}a_{2}'},
\end{gather*}
\noindent for all $x_{2}, x_{2}' \in X_{2}, y_{2}, y_{2}' \in Y_{2}, a_{2}, a_{2}' \in A_{2}, b_{2}, b_{2}' \in B_{2}$. Using (\ref{choi_simulated_eqn}), we see
\begin{eqnarray*}
& &
	\Gamma[\cl{E}](a_{2}a_{2}', b_{2}b_{2}'|x_{2}x_{2}', y_{2}y_{2}') \\
& = &
	\sum\Gamma(x_{1}x_{1}', y_{1}y_{1}', a_{2}a_{2}', b_{2}b_{2}'|x_{2}x_{2}', y_{2}y_{2}', a_{1}a_{1}', b_{1}b_{1}')\cl{E}(a_{1}a_{1}', b_{1}b_{1}'|x_{1}x_{1}', y_{1}y_{1}') \\
& = &
	\sum\langle P_{x_{2}x_{2}', x_{1}x_{1}'}P^{y_{2}y_{2}', y_{1}y_{1}'}Q_{a_{1}a_{1}', a_{2}a_{2}'}Q^{b_{1}b_{1}', b_{2}b_{2}'}\xi, \xi\rangle\langle E_{x_{1}x_{1}', a_{1}a_{1}'}F^{y_{1}y_{1}', b_{1}b_{1}'}\eta, \eta\rangle \\
& = &
	\big\langle \tilde{E}_{x_{2}x_{2}', a_{2}a_{2}'}\tilde{F}_{y_{2}y_{2}', b_{2}b_{2}'}(\xi\otimes \eta), \xi\otimes \eta\big\rangle,
\end{eqnarray*}
\noindent for all $x_{2}, x_{2}' \in X_{2}, y_{2}, y_{2}' \in Y_{2}, a_{2}, a_{2}' \in A_{2}, b_{2}, b_{2}' \in B_{2}$, where the latter sums range over all of $X_{1}, Y_{1}, A_{1}$, and $B_{1}$. Thus, $\Gamma[\cl{E}] \in \cl{Q}_{\rm qc}$.

(iii) This is a direct consequence of (iv).

(iv) Let $M = (M_{x_{2}x_{2}', a_{1}a_{1}'}^{x_{1}x_{1}', a_{2}a_{2}'})_{x_{2}x_{2}', a_{1}a_{1}', x_{1}x_{1}', a_{2}a_{2}'}$ and $N = (N_{y_{2}y_{2}', b_{1}b_{1}'}^{y_{1}y_{1}', b_{2}b_{2}'})_{y_{2}y_{2}', b_{1}b_{1}', y_{1}y_{1}', b_{2}b_{2}'}$ be quantum dilatable strongly stochastic operator matrices acting on $H$ and $K$, respectively, and $\xi \in H\otimes K$ be a unit vector for which $\Gamma = \Gamma_{M\odot N, \xi}$. Let $E = (E_{x_{1}x_{1}', a_{1}a_{1}'})_{x_{1}x_{1}', a_{1}a_{1}'}$ and $F = (F_{y_{1}y_{1}', b_{1}b_{1}'})_{y_{1}y_{1}', b_{1}b_{1}'}$ be finite-dimensionally acting stochastic operator matrices with $\eta$ a unit vector such that $\cl{E} = \cl{E}_{E\odot F, \eta}$. For $x_{2}, x_{2}' \in X_{2}, a_{2}, a_{2}' \in A_{2}, y_{2}, y_{2}' \in Y_{2},$ and $b_{2}, b_{2}' \in B_{2}$ set
\begin{gather*}
	\tilde{E}_{x_{2}x_{2}', a_{2}a_{2}'} = \sum\limits_{x_{1}, x_{1}' \in X_{1}}\sum\limits_{a_{1}, a_{1}' \in A_{1}}M_{x_{2}x_{2}', a_{1}a_{1}'}^{x_{1}x_{1}', a_{2}a_{2}'}\otimes E_{x_{1}x_{1}', a_{1}a_{1}'}, 
\end{gather*}
\noindent and
\begin{gather*}
	\tilde{F}_{y_{2}y_{2}', b_{2}b_{2}'} = \sum\limits_{y_{1}, y_{1}' \in Y_{1}}\sum\limits_{b_{1}, b_{1}' \in B_{1}}N_{y_{2}y_{2}', b_{1}b_{1}'}^{y_{1}y_{1}', b_{2}b_{2}'}\otimes F_{y_{1}y_{1}', b_{1}b_{1}'};
\end{gather*}
\noindent using an almost identical argument as in the proof of (ii), it is easy to verify that $\tilde{E} = (\tilde{E}_{x_{2}x_{2}', a_{2}a_{2}'})_{x_{2}x_{2}', a_{2}a_{2}'}$ and $\tilde{F} = (\tilde{F}_{y_{2}y_{2}', b_{2}b_{2}'})_{y_{2}y_{2}', b_{2}b_{2}'}$ are finite-dimensionally acting stochastic operator matrices such that 
\begin{gather*}
	\Gamma[\cl{E}](a_{2}a_{2}', b_{2}b_{2}'|x_{2}x_{2}', y_{2}y_{2}') = \big\langle (\tilde{E}_{x_{2}x_{2}', a_{2}a_{2}'}\otimes \tilde{F}_{y_{2}y_{2}', b_{2}b_{2}'})(\xi\otimes \eta), \xi\otimes \eta\big\rangle. 
\end{gather*}
\noindent Thus, $\Gamma[\cl{E}] \in \cl{Q}_{\rm q}$. 

(v) If $\Phi_{X}: M_{X_{2}}\rightarrow M_{X_{1}}, \Phi_{Y}: M_{Y_{2}}\rightarrow M_{Y_{1}}, \Phi_{A}: M_{A_{1}}\rightarrow M_{A_{2}}$ and $\Phi_{B}: M_{B_{1}}\rightarrow M_{B_{2}}$ are quantum channels, it is easily verified (see \cite{gt_one}) that
\begin{gather*}
	(\Phi_{X}\otimes \Phi_{Y}\otimes \Phi_{A}\otimes \Phi_{B})[\cl{E}\otimes \cl{F}] = (\Phi_{X}\otimes \Phi_{A})[\cl{E}]\otimes (\Phi_{Y}\otimes \Phi_{B})[\cl{F}]
\end{gather*}
\noindent for all quantum channels $\cl{E}: M_{X_{1}}\rightarrow M_{A_{1}}$ and $\cl{F}: M_{Y_{1}}\rightarrow M_{B_{1}}$. Using Remark \ref{sloc_convex_remark}, we may conclude that $\Gamma[\cl{E}] \in \cl{Q}_{\rm loc}$ as claimed. 
\end{proof}

Fix finite sets $X$ and $Y$, and let $A = X, B = Y$.
\begin{theorem}\label{bi_strat_transp}
Let $\Gamma$ be an SQNS bicorrelation over $(XY, XY, XY, XY)$ and $\cl{E}$ be a QNS bicorrelation over $(X, Y, X, Y)$. The following hold:
\begin{itemize}
	\item[(i)] $\Gamma[\cl{E}] \in \cl{Q}_{\rm ns}^{\rm bi}$;
	\item[(ii)] if $\Gamma \in \cl{Q}_{\rm sqc}^{\rm bi}$ and $\cl{E} \in \cl{Q}_{\rm qc}^{\rm bi}$, then $\Gamma[\cl{E}] \in \cl{Q}_{\rm qc}^{\rm bi}$;
	\item[(iii)] if $\Gamma \in \cl{Q}_{\rm sqa}^{\rm bi}$ and $\cl{E} \in \cl{Q}_{\rm qa}^{\rm bi}$, then $\Gamma[\cl{E}] \in \cl{Q}_{\rm qa}^{\rm bi}$;
	\item[(iv)] if $\Gamma \in \cl{Q}_{\rm sq}^{\rm bi}$ and $\cl{E} \in \cl{Q}_{\rm q}^{\rm bi}$, then $\Gamma[\cl{E}] \in \cl{Q}_{\rm q}^{\rm bi}$;
	\item[(v)] if $\Gamma \in \cl{Q}_{\rm sloc}^{\rm bi}$ and $\cl{E} \in \cl{Q}_{\rm loc}^{\rm bi}$, then $\Gamma[\cl{E}] \in \cl{Q}_{\rm loc}^{\rm bi}$.
\end{itemize}
\end{theorem}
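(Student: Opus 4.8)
The plan is to deduce (i)--(v) from the already-established Theorem \ref{strat_transp} by playing the simulation off against its adjoint. Writing the simulator in the form $\Gamma = \sum_{i} \Phi_{i}\otimes \Psi_{i}$, where $\Phi_{i}: M_{X_{2}Y_{2}}\rightarrow M_{X_{1}Y_{1}}$ is the ``question'' leg and $\Psi_{i}: M_{A_{1}B_{1}}\rightarrow M_{A_{2}B_{2}}$ is the ``answer'' leg, the defining relation (\ref{qns_simulation}) gives $\Gamma[\cl{E}] = \sum_{i}\Psi_{i}\circ \cl{E}\circ \Phi_{i}$, and hence
\[
	(\Gamma[\cl{E}])^{*} = \sum_{i}\Phi_{i}^{*}\circ \cl{E}^{*}\circ \Psi_{i}^{*}.
\]
Setting $\Gamma^{\flat} = \sum_{i}\Psi_{i}^{*}\otimes \Phi_{i}^{*}$ exhibits the right-hand side as the simulation $\Gamma^{\flat}[\cl{E}^{*}]$ of the inner channel $\cl{E}^{*}: M_{A_{1}B_{1}}\rightarrow M_{X_{1}Y_{1}}$; here $\Gamma^{\flat}$ is nothing but $\Gamma^{*}$ with its two tensor legs interchanged. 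First I would record the identity $(\Gamma[\cl{E}])^{*} = \Gamma^{\flat}[\cl{E}^{*}]$ and check that it is independent of the chosen decomposition of $\Gamma$ by reading both sides off the Choi-matrix formula (\ref{choi_simulated_eqn}).

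I would then do the type bookkeeping. Because all four legs coincide with $XY$ in the bicorrelation setting, $\Gamma^{\flat}$ is again an SQNS correlation over $(XY, XY, XY, XY)$: the leg-flip is a relabeling that merely permutes the no-signalling conditions (\ref{pt_one})--(\ref{pt_four}) among themselves and conjugates by a coordinate swap, preserving complete positivity and trace preservation. Since $\Gamma$ is an SQNS \emph{bicorrelation} of type ${\rm t}$, the closure of each class $\cl{Q}_{\rm st}^{\rm bi}$ under the adjoint noted earlier gives $\Gamma^{*}\in \cl{Q}_{\rm st}$, and the same leg-flip symmetry of the witnesses (mutually commuting stochastic operator matrices, the operator-system tensor products of Theorem \ref{state_correspondence}, or the convex-combination form of Remark \ref{bisloc_convex_remark} in the local case) shows $\Gamma^{\flat}\in \cl{Q}_{\rm st}$. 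Likewise $\cl{E}\in \cl{Q}_{\rm t}^{\rm bi}$ forces $\cl{E}^{*}\in \cl{Q}_{\rm t}$.

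With this in place the conclusion is immediate. Theorem \ref{strat_transp} applied to the pair $(\Gamma, \cl{E})$ gives $\Gamma[\cl{E}]\in \cl{Q}_{\rm t}$, and applied to the pair $(\Gamma^{\flat}, \cl{E}^{*})$ gives $(\Gamma[\cl{E}])^{*} = \Gamma^{\flat}[\cl{E}^{*}]\in \cl{Q}_{\rm t}$, for the matching types ${\rm t}\in \{\rm ns, qc, qa, q, loc\}$. In particular $(\Gamma[\cl{E}])^{*}$ is a quantum channel, so it is trace preserving, which is exactly the assertion that $\Gamma[\cl{E}]$ is unital. Combining unitality with $(\Gamma[\cl{E}])^{*}\in \cl{Q}_{\rm ns}$ yields (i); reading off $\Gamma[\cl{E}], (\Gamma[\cl{E}])^{*}\in \cl{Q}_{\rm t}$ for ${\rm t} = {\rm qc}, {\rm qa}, {\rm q}, {\rm loc}$ and invoking the definition of $\cl{Q}_{\rm t}^{\rm bi}$ yields (ii)--(v).

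The main obstacle I anticipate is the careful justification of the identity $(\Gamma[\cl{E}])^{*} = \Gamma^{\flat}[\cl{E}^{*}]$ together with the claim that the leg-flip preserves each correlation type. One must confirm that reindexing $\Gamma^{*}$ into a legitimate simulator for $\cl{E}^{*}$ respects the no-signalling partition and leaves intact the operator-system and commuting-dilation witnesses underlying $\cl{Q}_{\rm sqc}, \cl{Q}_{\rm sq}, \cl{Q}_{\rm sqa}$ and $\cl{Q}_{\rm sloc}$; once this symmetry is verified everything else is formal.
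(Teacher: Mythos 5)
Your identity $(\Gamma[\cl{E}])^{*}=\Gamma^{*}[\cl{E}^{*}]$ (your $\Gamma^{\flat}[\cl{E}^{*}]$, after the harmless interchange of tensor legs) is exactly the paper's key observation, and your treatment of part (i) --- applying Theorem \ref{strat_transp}(i) to the adjoint pair and extracting unitality of $\Gamma[\cl{E}]$ from trace preservation of its adjoint --- is sound, because $\cl{Q}_{\rm ns}^{\rm bi}$ really is \emph{defined} by unitality together with the adjoint being a QNS correlation.

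The gap is in (ii)--(v). For ${\rm t}\in\{{\rm qc},{\rm qa},{\rm q},{\rm loc}\}$ the class $\cl{Q}_{\rm t}^{\rm bi}$ is \emph{not} defined as the set of unital $\Gamma$ with $\Gamma,\Gamma^{*}\in\cl{Q}_{\rm t}$; following \cite{bhtt} it is defined by the existence of \emph{bistochastic} operator-matrix witnesses of the appropriate kind. The paper only records the one-way implication that membership in a bi-class is preserved under taking adjoints; the converse --- that a unital channel lying in $\cl{Q}_{\rm t}$ together with its adjoint must admit bistochastic witnesses --- is neither proved nor obviously true, so ``reading off $\Gamma[\cl{E}],(\Gamma[\cl{E}])^{*}\in\cl{Q}_{\rm t}$ and invoking the definition of $\cl{Q}_{\rm t}^{\rm bi}$'' does not close the argument. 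The paper instead reruns the explicit constructions of Theorem \ref{strat_transp}(ii)--(v) --- for instance the matrices $\tilde{E},\tilde{F}$ of (\ref{qc_simulate_one})--(\ref{qc_simulate_two}) --- and observes that when $P_{X},P_{Y},Q_{A},Q_{B}$ and $E,F$ are bistochastic, the same partial-trace computation carried out over the input index shows ${\rm Tr}_{X_{2}}(\tilde{E})=I\otimes I$ as well, so the constructed witnesses are themselves bistochastic; this is what certifies $\Gamma[\cl{E}]\in\cl{Q}_{\rm t}^{\rm bi}$. You need to add this verification (or an equivalent one) to make (ii)--(v) complete.
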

\begin{proof}
(i) The claim follows as in Theorem \ref{strat_transp}, once we note that
\begin{eqnarray*}
& &
	\Gamma[\cl{E}]^{*}(x_{2}x_{2}', y_{2}y_{2}'|a_{2}a_{2}', b_{2}b_{2}') \\
& = &
\sum\limits_{x_{1}x_{1}', y_{1}y_{1}'}\sum\limits_{a_{1}a_{1}', b_{1}b_{1}'}\Gamma(x_{1}x_{1}', y_{1}y_{1}', a_{2}a_{2}', b_{2}b_{2}'|x_{2}x_{2}', y_{2}y_{2}', a_{1}a_{1}', b_{1}b_{1}')\cl{E}(a_{1}a_{1}', b_{1}b_{1}'|x_{1}x_{1}', y_{1}y_{1}') \\
& = &
\sum\limits_{a_{1}a_{1}', b_{1}b_{1}'}\sum\limits_{x_{1}x_{1}', y_{1}y_{1}'}\Gamma^{*}(x_{2}x_{2}', y_{2}y_{2}', a_{1}a_{1}', b_{1}b_{1}'|x_{1}x_{1}', y_{1}y_{1}', a_{2}a_{2}', b_{2}b_{2}')\cl{E}^{*}(x_{1}x_{1}', y_{1}y_{1}'|a_{1}a_{1}', b_{1}b_{1}') \\
& = &
\Gamma^{*}[\cl{E}^{*}](x_{2}x_{2}', y_{2}y_{2}'|a_{2}a_{2}', b_{2}b_{2}')
\end{eqnarray*}
\noindent for all $x_{2}, x_{2}' \in X, y_{2}, y_{2}' \in Y, a_{2}, a_{2}' \in A, b_{2}, b_{2}' \in B$. This means $\Gamma^{*}[\cl{E}^{*}] = \Gamma[\cl{E}]^{*}$.

(ii)-(v) follow as in the proof of Theorem \ref{strat_transp} (ii)-(v), once we notice that the stochastic operator matrices defined there are bistochastic. 
\end{proof}


\section{Perfect strategies for various quantum games}\label{s_perfstrats}

For an arbitrary Hilbert space $H$, we write $\overline{H}$ for the Banach space dual of $H$; by the Riesz Representation Theorem, there exists a conjugate linear isometry $\partial: H\rightarrow \overline{H}$, such that $\partial(\xi)(\eta) = \langle \eta, \xi\rangle, \; \xi, \eta \in H$. In what follows, we will write $\overline{\bb{C}}^{X} = \overline{\bb{C}^{X}}$ for any finite set $X$. Set $\overline{\xi} = \partial(\xi)$, for $\xi \in H$. Given a linear operator $A: H\rightarrow K$, (where $H, K$ are Hilbert spaces) let $\overline{A}: \overline{K}\rightarrow \overline{H}$ be its (Banach space) dual operator. Note that $\overline{A}(\overline{\xi}) = \overline{A^{*}\xi}$, for $\xi \in K$. 

\smallskip

If $H, K$ are finite dimensional, and we take vectors $\xi \in H, \eta \in K$ let $\eta\xi^{*}: H\rightarrow K$ be the rank one operator given by $(\eta\xi^{*})(\xi') = \langle \xi', \xi\rangle\eta$. Let $\theta: \overline{H}\otimes K\rightarrow \cl{L}(H, K)$ be the linear isomorphism given by 
\begin{gather*}
	\theta(\overline{\xi}\otimes \eta) = \eta\xi^{*}, \;\;\;\; \xi \in H, \eta \in K.
\end{gather*}
\noindent If $\cl{V} \subseteq \overline{H}\otimes K$ is some subspace, we let $\tilde{\cl{V}} := \theta(\cl{V}) \subseteq \cl{L}(H, K)$ be the corresponding subspace of linear operators. 

\smallskip

Recall (see \cite[Definition 4.1]{gt_two}) that if $X, Y$ are finite sets, a quantum hypergraph over $(X, Y)$ is any subspace $\cl{U} \subseteq \overline{\bb{C}}^{X}\otimes \bb{C}^{Y}$. For a classical hypergraph $E \subseteq X\times Y$, let
\begin{gather*}
	\cl{U}_{E} = {\rm span}\{\overline{e_{x}}\otimes e_{y}: \; (x, y) \in E\}
\end{gather*}
\noindent be viewed as a quantum hypergraph over $(X, Y)$. Furthermore, following the notation established in \cite{gt_two} fix finite sets $X_{i}, Y_{i}, i = 1, 2$. Let $\cl{U}_{1} \subseteq \bb{C}^{X_{1}}\otimes \overline{\bb{C}}^{Y_{1}}, \cl{U}_{2} \subseteq \overline{\bb{C}}^{X_{2}}\otimes \bb{C}^{Y_{2}}$ be quantum hypergraphs and set
\begin{gather*}
	\cl{U}_{1} \Leftrightarrow \cl{U}_{2} := (\cl{U}_{1}\otimes \cl{U}_{2})+(\cl{U}_{1}^{\perp}\otimes \cl{U}_{2}^{\perp}).
\end{gather*}
\noindent If 
\begin{gather*}
	\sigma: \bb{C}^{X_{1}}\otimes \overline{\bb{C}}^{Y_{1}}\otimes \overline{\bb{C}}^{X_{2}}\otimes \bb{C}^{Y_{1}}\rightarrow \overline{\bb{C}}^{X_{2}}\otimes \overline{\bb{C}}^{Y_{1}}\otimes \bb{C}^{X_{1}}\otimes \bb{C}^{Y_{2}}
\end{gather*}
\noindent is the flip on the 1st and 3rd tensor legs, then we set
\begin{gather*}
	\cl{U}_{1} \leftrightarrow \cl{U}_{2} := \sigma(\cl{U}_{1} \Leftrightarrow \cl{U}_{2}).
\end{gather*}
\noindent Given a quantum hypergraph $\cl{U} \subseteq \overline{\bb{C}}^{X}\otimes \bb{C}^{Y}$, we let
\begin{gather*}
	\cl{Q}(\cl{U}) := \bigg\{\Phi: M_{X}\rightarrow M_{Y}: \; {\rm a \; quantum \; channel \; with \;} \tilde{\cl{K}}_{\Phi} \subseteq \tilde{\cl{U}}\bigg\},
\end{gather*}
\noindent where $\tilde{\cl{K}}_{\Phi}$ is the Kraus space corresponding to channel $\Phi$. For the next collection of results, we will need the following definition.
\begin{definition}[\cite{gt_two}]
Let ${\rm t} \in \{\rm loc, q, qa, qc, ns\}$, with $\cl{U}_{1} \subseteq \bb{C}^{X_{1}}\otimes \overline{\bb{C}}^{Y_{1}}$ and $\cl{U}_{2} \subseteq \overline{\bb{C}}^{X_{2}}\otimes \bb{C}^{Y_{2}}$ quantum hypergraphs. We say that $\cl{U}_{1}$ is ${\rm t}$-homomorphic to $\cl{U}_{2}$, written $\cl{U}_{1}\rightarrow_{\rm t} \cl{U}_{2}$ if there exists a quantum channel $\Phi: M_{X_{2}Y_{1}}\rightarrow M_{X_{1}Y_{2}}$ with $\Phi \in \cl{Q}_{\rm t}$ such that $\Phi \in \cl{Q}(\cl{U}_{1}\leftrightarrow \cl{U}_{2})$. 
\end{definition}
\subsection{Perfect strategies for quantum implication games}

Let $X, Y, A, B$ be finite sets. If $P \in M_{XY}, Q \in M_{AB}$ are projections, the quantum implication game (see \cite{tt}) $P \Rightarrow Q$ is the quantum non-local game $\varphi_{P\rightarrow Q}: \cl{P}_{XY}\rightarrow \cl{P}_{AB}$ given by
\begin{gather*}
	\varphi_{P\rightarrow Q}(\tilde{P}) = \begin{cases}
	Q, \;\;\;\; {\rm if\;} 0 \neq \tilde{P} \leq P,
\\	0, \;\;\;\; {\rm if\;} \tilde{P} = 0,
\\	I, \;\;\;\; {\rm otherwise.}
\end{cases}
\end{gather*}
\noindent In other words, a quantum implication game is a quantum non-local game which states that if a player is given an input state supported $P$, their output should be a state supported on $Q$. A QNS correlation $\Phi: M_{XY}\rightarrow M_{AB}$ is called a \textit{perfect strategy} for $\varphi_{P\rightarrow Q}$ if $\langle \Phi(P), Q^{\perp} \rangle = 0$. Equivalently, $\Phi$ is perfect if
\begin{gather*}
	\omega \in M_{XY}^{+} \; \text{and} \; \omega = P\omega P \Rightarrow \Phi(\omega) = Q\Phi(\omega)Q.
\end{gather*}

As both $P, Q$ are finite-rank projections, we may find orthonormal basis $\{\xi_{i}\}_{i=1}^{n} \subseteq \bb{C}^{XY}$ (resp. $\{\gamma_{\ell}\}_{\ell=1}^{m} \subseteq \bb{C}^{AB}$) for ${\rm rng}(P)$ (resp. ${\rm rng}(Q)$). We may then associate to any quantum implication game the subspace
\begin{gather*}
	\cl{U}_{P, Q} := {\rm span}\{\overline{\xi}_{i}\otimes \gamma_{\ell}: \; i \in [n], \; \ell \in [m]\},
\end{gather*}
\noindent considered as a quantum hypergraph over $(XY, AB)$. Note that if $S \in \tilde{\cl{U}}_{P, Q} \subseteq \cl{L}(\bb{C}^{XY}, \bb{C}^{AB})$, then $S = \sum_{j=1}^{t}\lambda_{j}\gamma_{j}\xi_{j}^{*}$ where $\xi_{j} \in {\rm rng}(P), \gamma_{j} \in {\rm rng}(Q)$ and $\lambda_{j} \in \bb{C}$ for $j = 1, \hdots, t$. 

If $\mathfrak{E} \subseteq {\rm QC}(M_{XY}, M_{AB})$ is a convex subset of quantum channels from $M_{XY}$ to $M_{AB}$, we let
\begin{gather}\label{imp_value}
	\omega_{\mathfrak{E}}(P, Q) = \sup\limits_{\Phi \in \mathfrak{E}}{\rm Tr}(\Phi(P)Q)
\end{gather}
\noindent be the $\mathfrak{E}$-value of the quantum implication game $P\Rightarrow Q$. Specifically, if $\mathfrak{E} = \cl{Q}_{\rm t}$ where ${\rm t} \in \{\rm loc, q, qa, qc, ns\}$ we set $\omega_{\rm t}(P, Q) = \omega_{\cl{Q}_{\rm t}}(P, Q)$; one may easily check that $\omega_{\rm t}(P, Q) = 1$ if and only if there exists a perfect ${\rm t}$-strategy $\Phi$ for $P\Rightarrow Q$. 

\begin{lemma}\label{containment_of_kraus}
A QNS correlation $\Phi: M_{XY}\rightarrow M_{AB}$ is a perfect strategy for $\varphi_{P\rightarrow Q}$ if and only if $\tilde{\cl{K}}_{\Phi} \subseteq \tilde{\cl{U}}_{P, Q}$. 
\end{lemma}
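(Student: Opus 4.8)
The plan is to reduce everything to a minimal Kraus decomposition of $\Phi$ and then read off both sides of the asserted equivalence from the Kraus operators. Fix operators $A_{1}, \hdots, A_{r} \in \cl{L}(\bb{C}^{XY}, \bb{C}^{AB})$ with $\Phi(\rho) = \sum_{k}A_{k}\rho A_{k}^{*}$ and $\{A_{k}\}$ linearly independent, so that $\tilde{\cl{K}}_{\Phi} = {\rm span}\{A_{1}, \hdots, A_{r}\}$; this span is the image under $\theta$ of the range of the Choi matrix $C_{\Phi} = \sum_{k}v_{k}v_{k}^{*}$, where $\theta(v_{k}) = A_{k}$. From the description preceding the statement, $S \in \tilde{\cl{U}}_{P, Q}$ precisely when $S = QSP$, i.e. ${\rm rng}(S) \subseteq {\rm rng}(Q)$ and $S$ vanishes on $\ker P = {\rm rng}(P)^{\perp}$. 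Thus the lemma is equivalent to showing that $\langle \Phi(P), Q^{\perp}\rangle = 0$ holds if and only if $A_{k} = QA_{k}P$ for every $k$.

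First I would record the key identity on the perfect-strategy side. Using $P = P^{2}$, $Q^{\perp} = (Q^{\perp})^{2}$ and cyclicity of the trace,
\begin{gather*}
	\langle \Phi(P), Q^{\perp}\rangle = \sum\limits_{k}{\rm Tr}\big(A_{k}PA_{k}^{*}Q^{\perp}\big) = \sum\limits_{k}\big\|Q^{\perp}A_{k}P\big\|_{\rm HS}^{2}.
\end{gather*}
Since each summand is nonnegative, $\Phi$ is a perfect strategy if and only if $Q^{\perp}A_{k}P = 0$ for all $k$. The \emph{if} direction of the lemma is then immediate: if each $A_{k} = QA_{k}P$ then $A_{k}P = A_{k}$ and $QA_{k} = A_{k}$, whence $\Phi(P) = \sum_{k}A_{k}A_{k}^{*}$ is supported on ${\rm rng}(Q)$ and $\langle \Phi(P), Q^{\perp}\rangle = 0$.

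For the converse I would pass through the Choi matrix. Positivity of $C_{\Phi}$ together with positivity of the projection onto $\overline{{\rm rng}(P)}\otimes {\rm rng}(Q^{\perp})$ converts the perfect condition into the orthogonality ${\rm rng}(C_{\Phi}) \perp \overline{{\rm rng}(P)}\otimes {\rm rng}(Q^{\perp})$, i.e. each $v_{k} \in \big(\overline{{\rm rng}(P)}\otimes {\rm rng}(Q^{\perp})\big)^{\perp}$. Reshaping this membership through $\theta$, and tracking the conjugation built into $\theta$ against the transpose in the pairing $\langle S, T\rangle = {\rm Tr}(ST^{\rm t})$, I would identify the resulting subspace of $\cl{L}(\bb{C}^{XY}, \bb{C}^{AB})$ with $\theta\big(\overline{{\rm rng}(P)}\otimes {\rm rng}(Q)\big) = \tilde{\cl{U}}_{P, Q}$, yielding $A_{k} \in \tilde{\cl{U}}_{P, Q}$ for every $k$ and hence $\tilde{\cl{K}}_{\Phi} \subseteq \tilde{\cl{U}}_{P, Q}$. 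The main obstacle is exactly this last identification: one must verify, under the paper's conjugate-linear and transpose conventions, that the reshaped orthogonal complement of $\overline{{\rm rng}(P)}\otimes {\rm rng}(Q^{\perp})$ recovers the \emph{two-sided} condition $S = QSP$ defining $\tilde{\cl{U}}_{P, Q}$, including the vanishing on $\ker P$, rather than only the weaker condition $Q^{\perp}SP = 0$. Getting this bookkeeping right is the delicate step, and it is where the conventions fixed at the start of Section~\ref{s_perfstrats} for $\theta$, $\overline{\,\cdot\,}$ and the pairing must be invoked carefully.
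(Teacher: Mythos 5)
Your \emph{if} direction is correct and is essentially the paper's argument: both reduce perfection to the one-sided condition $Q^{\perp}A_{k}P=0$ for every Kraus operator, via $\mathop{\rm Tr}(\Phi(P)Q^{\perp})=\sum_{k}\mathop{\rm Tr}\big((Q^{\perp}A_{k}P)(Q^{\perp}A_{k}P)^{*}\big)$, and this condition is indeed implied by $A_{k}=QA_{k}P$.

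The converse is where your proposal has a genuine gap, and it sits exactly at the step you describe as ``delicate bookkeeping''; it is not a bookkeeping issue. The orthogonal complement you need is
\begin{gather*}
\big(\overline{{\rm rng}(P)}\otimes {\rm rng}(Q^{\perp})\big)^{\perp}
=\big(\overline{\ker P}\otimes \bb{C}^{AB}\big)\oplus\big(\overline{{\rm rng}(P)}\otimes {\rm rng}(Q)\big),
\end{gather*}
which under $\theta$ is $\{S:\,Q^{\perp}SP=0\}$. This strictly contains $\tilde{\cl{U}}_{P,Q}=\{S:\,S=QSP\}$ whenever $P\neq I$: the missing requirement is $SP^{\perp}=0$, i.e.\ vanishing on $\ker P$, and no choice of conventions for $\theta$, the conjugation, or the transpose in the pairing will produce it, because it is simply not implied by perfection. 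Concretely, with $A=X$, $B=Y$, $Q=I_{AB}$ and $P\neq I$, the identity channel is a (local, hence QNS) correlation that is vacuously perfect for $\varphi_{P\rightarrow I}$, yet its Kraus space $\bb{C}I$ is not contained in $\tilde{\cl{U}}_{P,I}=\{S:SP=S\}$. So your converse cannot be completed along the lines sketched. For comparison, the paper's own proof derives the same pair of mutually adjoint identities $Q^{\perp}M_{j}P=0$ and $PM_{j}^{*}Q^{\perp}=0$ and then asserts $M_{j}\in\tilde{\cl{U}}_{P,Q}$ without supplying $M_{j}P^{\perp}=0$; you have therefore located the weak point of the published argument rather than a defect peculiar to your route, but as written your ``only if'' direction does not go through, and at best one obtains $\tilde{\cl{K}}_{\Phi}\subseteq\{S:\,Q^{\perp}SP=0\}$.
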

\begin{proof}
First, assume that $\Phi$ is a perfect strategy for the implication game $\varphi_{P\rightarrow Q}$. Let $\Phi(S) = \sum_{j=1}^{t}M_{j}SM_{j}^{*}$ be a Kraus decomposition for $\Phi$; as $\Phi$ is perfect, 
\begin{eqnarray*}
	0 
& = &
{\rm Tr}(\Phi(P)Q^{\perp}) \\
& = &
\sum\limits_{j=1}^{t}{\rm Tr}(M_{j}PM_{j}^{*}Q^{\perp}) \\
& = &
\sum\limits_{j=1}^{t}{\rm Tr}(Q^{\perp}M_{j}P^{2}M_{j}^{*}Q^{\perp}) \\
& = &
\sum\limits_{j=1}^{t}{\rm Tr}((Q^{\perp}M_{j}P)(Q^{\perp}M_{j}P)^{*}),
\end{eqnarray*} 
\noindent which implies $Q^{\perp}M_{j}P = 0$ for each $j = 1, \hdots, t$. Taking adjoints, we also have that $PM_{j}^{*}Q^{\perp} = 0$ for each $j = 1, \hdots, t$. If we then fix $1 \leq j \leq t$ and pick $\beta \in {\rm rng}(P), \alpha \in {\rm rng}(Q^{\perp})$, we see
\begin{gather*}
	\langle M_{j}\beta, \alpha\rangle = \langle M_{j}P\beta, Q^{\perp}\alpha\rangle = \langle Q^{\perp}M_{j}P\beta, \alpha\rangle = 0.
\end{gather*}
\noindent As $\alpha \in {\rm rng}(Q^{\perp})$ was arbitrary, this implies $M\beta \in {\rm rng}(Q)$ for $\beta \in {\rm rng}(P)$. A similar argument using the adjoint relation $PM_{j}^{*}Q^{\perp} = 0$ implies that $M_{j}^{*}\alpha \in {\rm rng}(P^{\perp})$ for each $\alpha \in {\rm rng}(Q^{\perp})$. Thus, $M_{j} \in \tilde{\cl{U}}_{P, Q}$. Unfixing our choice of $1 \leq j \leq t$, we see $\tilde{\cl{K}}_{\Phi} \subseteq \tilde{\cl{U}}_{P, Q}$.

Now assume $\tilde{\cl{K}}_{\Phi} \subseteq \tilde{\cl{U}}_{P, Q}$. By the comment before the lemma statement, one may easily see that for any $M \in \tilde{\cl{K}}_{\Phi}$ we have $Q^{\perp}MP = 0$. If $M_{1}, \hdots, M_{t}$ are Kraus operators for $\Phi$, reversing the steps in the previous paragraph we have that
\begin{gather*}
	0 = {\rm Tr}(\Phi(P)Q^{\perp}) = \langle \Phi(P), Q^{\perp}\rangle,
\end{gather*}
\noindent which shows that $\Phi$ is a perfect strategy for the game $\varphi_{P\rightarrow Q}$. 
\end{proof}

\begin{theorem}\label{perfect_strats_imp_games}
Let $X_{i}, Y_{i}, A_{i}, B_{i}, i = 1, 2$ be finite sets, $P_{i} \in M_{X_{i}Y_{i}}, Q_{i} \in M_{A_{i}B_{i}}, i = 1, 2$ projections, and ${\rm t} \in \{\rm loc, q, qa, qc, ns\}$. If $\overline{\cl{U}}_{P_{1}, Q_{1}}\rightarrow_{\rm st} \cl{U}_{P_{2}, Q_{2}}$ via $\Gamma$, and if $\cl{E}: M_{X_{1}Y_{1}}\rightarrow M_{A_{1}B_{1}}$ is a perfect ${\rm t}$-strategy for $\varphi_{P_{1}\rightarrow Q_{1}}$, then $\Gamma[\cl{E}]$ is a perfect ${\rm t}$-strategy for $\varphi_{P_{2}\rightarrow Q_{2}}$. 
\end{theorem}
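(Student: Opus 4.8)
The plan is to reduce the statement to a containment of Kraus spaces via Lemma \ref{containment_of_kraus}, and then to track how Kraus operators transform under the simulation. Since $\Gamma \in \cl{Q}_{\rm st}$ (by the definition of $\rightarrow_{\rm st}$) and $\cl{E} \in \cl{Q}_{\rm t}$, Theorem \ref{strat_transp} already yields $\Gamma[\cl{E}] \in \cl{Q}_{\rm t}$, so it remains only to show that $\Gamma[\cl{E}]$ is a \emph{perfect} strategy for $\varphi_{P_2 \rightarrow Q_2}$; by Lemma \ref{containment_of_kraus} this is exactly $\tilde{\cl{K}}_{\Gamma[\cl{E}]} \subseteq \tilde{\cl{U}}_{P_2, Q_2}$. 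The two hypotheses translate, again through Lemma \ref{containment_of_kraus} and the definition of $\rightarrow_{\rm st}$, into $\tilde{\cl{K}}_{\cl{E}} \subseteq \tilde{\cl{U}}_{P_1, Q_1}$ and $\tilde{\cl{K}}_{\Gamma} \subseteq \widetilde{\overline{\cl{U}}_{P_1, Q_1} \leftrightarrow \cl{U}_{P_2, Q_2}}$. Recalling from the discussion before Lemma \ref{containment_of_kraus} that $\tilde{\cl{U}}_{P, Q} = \{S : S = QSP\}$, the first containment says precisely that every Kraus operator $N$ of $\cl{E}$ satisfies $N = Q_1 N P_1$.

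Next I would record the Kraus operators of the simulated channel. Fixing Kraus decompositions $\Gamma(\cdot) = \sum_\alpha M_\alpha (\cdot) M_\alpha^*$ and $\cl{E}(\cdot) = \sum_\beta N_\beta (\cdot) N_\beta^*$, with $M_\alpha \in \cl{L}(\bb{C}^{X_2 Y_2 A_1 B_1}, \bb{C}^{X_1 Y_1 A_2 B_2})$ and $N_\beta \in \cl{L}(\bb{C}^{X_1 Y_1}, \bb{C}^{A_1 B_1})$, a direct computation from (\ref{choi_simulated_eqn}) (cf. \cite{dw, gt_two}) shows $\Gamma[\cl{E}](\cdot) = \sum_{\alpha, \beta} K_{\alpha\beta}(\cdot) K_{\alpha\beta}^*$, where $K_{\alpha\beta} \in \cl{L}(\bb{C}^{X_2 Y_2}, \bb{C}^{A_2 B_2})$ is the loop contraction obtained by feeding the $X_1 Y_1$-output of $M_\alpha$ through $N_\beta$ into its $A_1 B_1$-input. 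Regrouping the legs of $M_\alpha$ to match the two sides of $\leftrightarrow$, i.e. writing $M_\alpha \cong \sum_t T_{\alpha, t} \otimes W_{\alpha, t}$ with $T_{\alpha, t} \in \cl{L}(\bb{C}^{A_1 B_1}, \bb{C}^{X_1 Y_1})$ and $W_{\alpha, t} \in \cl{L}(\bb{C}^{X_2 Y_2}, \bb{C}^{A_2 B_2})$, the contraction takes the clean form
\[
K_{\alpha\beta} = \sum_t {\rm Tr}(T_{\alpha, t} N_\beta)\, W_{\alpha, t},
\]
so that $M_\alpha \mapsto K_{\alpha\beta}$ is the linear map pairing the first tensor leg against $N_\beta$ by $T \mapsto {\rm Tr}(T N_\beta)$.

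The heart of the argument is then short. By the definition of $\overline{\cl{U}}_{P_1, Q_1} \leftrightarrow \cl{U}_{P_2, Q_2}$ together with the flip $\sigma$ and the containment of $\tilde{\cl{K}}_\Gamma$, the regrouped tensor lies in $\cl{U}_1 \otimes \cl{U}_2 + \cl{U}_1^\perp \otimes \cl{U}_2^\perp$, where in operator language $\cl{U}_1 = \{T : T = P_1 T Q_1\}$, $\cl{U}_1^\perp = \{T : P_1 T Q_1 = 0\}$, $\cl{U}_2 = \{W : W = Q_2 W P_2\}$ and $\cl{U}_2^\perp = \{W : Q_2 W P_2 = 0\}$. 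Writing the tensor as $A + B$ accordingly and using linearity of the pairing, $K_{\alpha\beta}$ splits as the image of $A$ plus the image of $B$. For $B$, each first-leg factor $T'$ obeys $P_1 T' Q_1 = 0$, so with $N_\beta = Q_1 N_\beta P_1$ and cyclicity of the trace, ${\rm Tr}(T' N_\beta) = {\rm Tr}(T' Q_1 N_\beta P_1) = {\rm Tr}(P_1 T' Q_1 N_\beta) = 0$, whence $B$ contributes nothing. For $A$, each second-leg factor $W$ satisfies $W = Q_2 W P_2$, so $K_{\alpha\beta}$ is a linear combination of such operators and therefore $K_{\alpha\beta} = Q_2 K_{\alpha\beta} P_2 \in \tilde{\cl{U}}_{P_2, Q_2}$. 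As the $K_{\alpha\beta}$ span $\tilde{\cl{K}}_{\Gamma[\cl{E}]}$, this yields $\tilde{\cl{K}}_{\Gamma[\cl{E}]} \subseteq \tilde{\cl{U}}_{P_2, Q_2}$, and Lemma \ref{containment_of_kraus} shows $\Gamma[\cl{E}]$ is a perfect strategy for $\varphi_{P_2 \rightarrow Q_2}$; combined with $\Gamma[\cl{E}] \in \cl{Q}_{\rm t}$, it is a perfect ${\rm t}$-strategy.

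I expect the main obstacle to be bookkeeping rather than conceptual: deriving the loop-contraction formula for the Kraus operators of $\Gamma[\cl{E}]$ from (\ref{choi_simulated_eqn}), and pinning down the operator-theoretic description of $\overline{\cl{U}}_{P_1, Q_1} \leftrightarrow \cl{U}_{P_2, Q_2}$ through the conjugation $\overline{(\cdot)}$ and the flip $\sigma$, with the correct placement (and possible transposes) of $P_1, Q_1, P_2, Q_2$ and their complements. Once these identifications are fixed consistently, the decisive point---that the $\cl{U}_1^\perp \otimes \cl{U}_2^\perp$ component is annihilated by contraction against $N_\beta$ precisely because $N_\beta$ is supported on ${\rm rng}(P_1) \to {\rm rng}(Q_1)$---is robust, since it uses only the vanishing $P_1 T' Q_1 = 0$ together with $N_\beta = Q_1 N_\beta P_1$.
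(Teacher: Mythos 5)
Your proposal is correct and follows essentially the same route as the paper: reduce both perfection hypotheses and the desired conclusion to Kraus-space containments via Lemma \ref{containment_of_kraus}, observe that the Kraus operators of $\Gamma[\cl{E}]$ are contractions of those of $\Gamma$ against those of $\cl{E}$, and get the correlation type from Theorem \ref{strat_transp}. The only difference is that where the paper outsources the middle step to \cite[Theorem 5.5]{gt_two}, you re-derive it inline via the splitting $(\cl{U}_1\otimes\cl{U}_2)+(\cl{U}_1^{\perp}\otimes\cl{U}_2^{\perp})$ and the trace-vanishing argument ${\rm Tr}(T'N_\beta)={\rm Tr}(P_1T'Q_1N_\beta)=0$, which is exactly the mechanism behind that cited result.
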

\begin{proof}
Suppose that $\Gamma(T) = \sum\limits_{p=1}^{t}N_{p}TN_{p}^{*}$ and $\cl{E}(S) = \sum\limits_{r=1}^{s}M_{r}SM_{r}^{*}$. By Lemma \ref{containment_of_kraus}, as $\cl{E}$ is a perfect strategy for $\varphi_{P_{1}\rightarrow Q_{1}}$ then $M_{r} \in \tilde{\cl{U}}_{P_{1}, Q_{1}}$ for each $r = 1, \hdots, s$. By construction (see \cite{gt_two}), the Kraus operators of $\Gamma[\cl{E}]$ are given by operators $N_{p}[M_{r}]: \bb{C}^{X_{2}Y_{2}}\rightarrow \bb{C}^{A_{2}B_{2}}$, dependent on $N_{p}$ and $M_{r}$ for each $p= 1, \hdots, t, r= 1, \hdots, s$. Furthermore, as $\overline{\cl{U}}_{P_{1}, Q_{1}}\rightarrow \cl{U}_{P_{2}, Q_{2}}$ via $\Gamma$ by \cite[Theorem 5.5]{gt_two} we know that $N_{p}[M_{r}] \in \tilde{\cl{U}}_{P_{2}, Q_{2}}$ for each $p = 1, \hdots, t, r = 1, \hdots, s$. This means $\tilde{\cl{K}}_{\Gamma[\cl{E}]} \subseteq \tilde{\cl{U}}_{P_{2}, Q_{2}}$; applying Lemma \ref{containment_of_kraus} once again, we see that $\Gamma[\cl{E}]$ is a perfect strategy for $\varphi_{P_{2}\rightarrow Q_{2}}$. Furthermore, by Theorem \ref{strat_transp} we have that $\Gamma[\cl{E}] \in \cl{Q}_{\rm t}$, whenever $\Gamma \in \cl{Q}_{\rm st}$ with $\cl{E} \in \cl{Q}_{\rm t}$. 
\end{proof}

\subsection{Classical-to-quantum non-local games}

In this subsection, we consider a way to transfer strategies from classical non-local games to quantum ones, and how a perfect strategy remains perfect under this transfer even when both games are not quantum. For a fixed ${\rm t} \in \{\rm loc, q, qa, qc, ns\}$ to make sense of how we can take a classical strategy $\cl{E} \in \cl{C}_{\rm t}$ and ``apply" it to a legitimately quantum game, the most natural way is to first use \cite[Remark 8.1]{tt} to consider $\cl{E} \in \cl{Q}_{\rm t}$, and then use $\Gamma \in \cl{Q}_{\rm st}$ as defined in Definition \ref{sqns_subclass_defn} to construct a quantum strategy $\Gamma[\cl{E}]$ via Theorem \ref{strat_transp}.


\begin{lemma}\label{classical_perfect_strats}
Let $X_{1}, Y_{1}, A_{1}, B_{1}$ be finite sets, and $E \subseteq X_{1}Y_{1}\times A_{1}B_{1}$ be a non-local game. Then $\cl{E}: \cl{D}_{X_{1}Y_{1}}\rightarrow \cl{D}_{A_{1}B_{1}}$ is a perfect strategy for the game $E$ if and only if $\tilde{\cl{K}}_{\cl{E}} \subseteq \tilde{\cl{U}}_{E}$. 
\end{lemma}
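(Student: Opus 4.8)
The plan is to imitate the argument of Lemma \ref{containment_of_kraus}, but to exploit that in the classical setting both the Kraus space of $\cl{E}$ and the operator space $\tilde{\cl{U}}_{E}$ are \emph{coordinate} subspaces, i.e. spanned by distinct matrix units, so that the equivalence reduces to a tautology about index sets and both directions collapse into one. First I would pass from the classical channel $\cl{E}\colon \cl{D}_{X_1Y_1}\to\cl{D}_{A_1B_1}$ to its associated $(X_1Y_1,A_1B_1)$-classical quantum channel $\Phi_{\cl{E}}=\cl{E}\circ\Delta_{X_1Y_1}\colon M_{X_1Y_1}\to M_{A_1B_1}$, as in the preliminaries, since $\tilde{\cl{K}}_{\cl{E}}$ refers to the Kraus space of this quantum channel. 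Writing $p(a_1b_1|x_1y_1)=\langle\cl{E}(\epsilon_{x_1y_1,x_1y_1}),\epsilon_{a_1b_1,a_1b_1}\rangle$ for the conditional distribution defining $\cl{E}$, one has
\begin{gather*}
\Phi_{\cl{E}}(\rho)=\sum_{x_1y_1}\sum_{a_1b_1}p(a_1b_1|x_1y_1)\,\langle\rho\, e_{x_1y_1},e_{x_1y_1}\rangle\,\epsilon_{a_1b_1,a_1b_1},
\end{gather*}
and a Kraus decomposition is given by the operators $\sqrt{p(a_1b_1|x_1y_1)}\,\epsilon_{a_1b_1,x_1y_1}$, where $\epsilon_{a_1b_1,x_1y_1}=e_{a_1b_1}e_{x_1y_1}^{*}\in\cl{L}(\bb{C}^{X_1Y_1},\bb{C}^{A_1B_1})$. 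As these matrix units are linearly independent, the Kraus space is exactly
\begin{gather*}
\tilde{\cl{K}}_{\cl{E}}={\rm span}\{\epsilon_{a_1b_1,x_1y_1}:\ p(a_1b_1|x_1y_1)>0\}.
\end{gather*}

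Next I would record that, by the definition of $\theta$ and of $\cl{U}_{E}$, the operator space $\tilde{\cl{U}}_{E}=\theta(\cl{U}_{E})$ is also a coordinate subspace, namely $\tilde{\cl{U}}_{E}={\rm span}\{\epsilon_{a_1b_1,x_1y_1}:\ (x_1y_1,a_1b_1)\in E\}$, because $\theta(\overline{e_{x_1y_1}}\otimes e_{a_1b_1})=e_{a_1b_1}e_{x_1y_1}^{*}=\epsilon_{a_1b_1,x_1y_1}$. Both $\tilde{\cl{K}}_{\cl{E}}$ and $\tilde{\cl{U}}_{E}$ are thus spanned by subfamilies of the single basis $\{\epsilon_{a_1b_1,x_1y_1}\}$ of $\cl{L}(\bb{C}^{X_1Y_1},\bb{C}^{A_1B_1})$; hence $\tilde{\cl{K}}_{\cl{E}}\subseteq\tilde{\cl{U}}_{E}$ holds if and only if the corresponding index sets are nested, that is, if and only if $p(a_1b_1|x_1y_1)>0$ forces $(x_1y_1,a_1b_1)\in E$. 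Finally I would match this with perfectness: by definition $\cl{E}$ is a perfect strategy for $E$ precisely when it answers within the relation $E$ with certainty, i.e. $p(a_1b_1|x_1y_1)=0$ whenever $(x_1y_1,a_1b_1)\notin E$, which is the same support condition. This settles the equivalence in both directions at once.

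The only point requiring genuine care --- the analogue of the computation with $Q^{\perp}M_jP$ in Lemma \ref{containment_of_kraus} --- is the bookkeeping that identifies $\tilde{\cl{K}}_{\cl{E}}$ with a coordinate subspace: I must check that it is \emph{exactly} the span of the matrix units indexed by the support of $p$, and neither larger nor basis-dependent. This is where the linear independence of the $\epsilon_{a_1b_1,x_1y_1}$ enters, equivalently the observation that the Choi matrix of $\Phi_{\cl{E}}$ is diagonal with diagonal entries $p(a_1b_1|x_1y_1)$, so its range is precisely the coordinate subspace cut out by the support. This diagonal structure is what makes the classical statement strictly easier than its quantum counterpart in Lemma \ref{containment_of_kraus}, where no such coordinate description is available and one is forced to argue through the positivity trace identity. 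Everything else is a direct translation of definitions, so I expect no substantive obstacle beyond fixing the conventions for $\theta$ and the Kraus decomposition consistently.
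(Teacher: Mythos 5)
Your proof is correct and takes essentially the same route as the paper's: the crux in both is the identification of $\tilde{\cl{K}}_{\cl{E}}$ with the coordinate subspace spanned by the matrix units $e_{a_{1}b_{1}}e_{x_{1}y_{1}}^{*}$ over the support of the conditional distribution, which the paper quotes from \cite[Remark 4.3]{gt_two} and you derive directly from the diagonal Choi matrix of $\Phi_{\cl{E}}$. The only difference is cosmetic: you settle both implications at once by comparing index sets of two coordinate subspaces, whereas the paper argues the converse separately by expanding a general Kraus operator lying in $\tilde{\cl{U}}_{E}$.
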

\begin{proof}
First, assume $\cl{E}$ is a perfect strategy for classical non-local game $E$. Viewing $\cl{E}$ as a quantum channel, by \cite[Remark 4.3]{gt_two} 
\begin{gather*}
	\tilde{\cl{K}}_{\cl{E}} = {\rm span}\{e_{a_{1}}e_{x_{1}}^{*}\otimes e_{b_{1}}e_{y_{1}}^{*}: \; (x_{1}, y_{1}, a_{1}, b_{1}) \in {\rm supp}(\cl{E})\}.
\end{gather*}
\noindent As ${\rm supp}(\cl{E}) \subseteq E$, clearly $\tilde{\cl{K}}_{\cl{E}} \subseteq \tilde{\cl{U}}_{E}$.

Now assume $\tilde{\cl{K}}_{\cl{E}} \subseteq \tilde{\cl{U}}_{E}$. This means that for any Kraus operator $M$ of $\cl{E}$, $M \in \tilde{\cl{U}}_{E}$. Thus, $M = \sum_{j=1}^{n}\lambda_{j}e_{a_{j}}e_{x_{j}}^{*}\otimes e_{b_{j}}e_{y_{j}}^{*}$, where $(x_{j}, y_{j}, a_{j}, b_{j}) \in E$ for $j = 1, \hdots, n$. If we assume $\cl{E}(S) = \sum\limits_{\ell=1}^{m}M_{\ell}SM_{\ell}^{*}$, as
\begin{eqnarray*}
	\cl{E}(a_{1}, b_{1}|x_{1}, y_{1}) 
& = &
{\rm Tr}(\cl{E}(\epsilon_{x_{1}x_{1}}\otimes \epsilon_{y_{1}y_{1}})(\epsilon_{a_{1}a_{1}}\otimes \epsilon_{b_{1}b_{1}})) \\
& = &
\sum\limits_{\ell=1}^{m}{\rm Tr}(M_{\ell}(\epsilon_{x_{1}x_{1}}\otimes \epsilon_{y_{1}y_{1}})M_{\ell}^{*}(\epsilon_{a_{1}a_{1}}\otimes \epsilon_{b_{1}b_{1}})),
\end{eqnarray*}
\noindent where each $M_{\ell}$ is of the form previously described, it is easy to see that $\cl{E}(a_{1}, b_{1}|x_{1}, y_{1}) \neq 0$ only if $(x_{1}, y_{1}, a_{1}, b_{1}) \in E$. Thus, ${\rm supp}(\cl{E}) \subseteq E$, which means $\cl{E}$ is a perfect strategy for $E$.
\end{proof}

\begin{theorem}
Let $X_{i}, Y_{i}, A_{i}, B_{i}, i = 1, 2$ be finite sets, $E \subseteq X_{1}Y_{1}\times A_{1}B_{1}$ a classical non-local game, $P \in M_{X_{2}Y_{2}}, Q \in M_{A_{2}B_{2}}$ be projections, and let ${\rm t} \in \{\rm loc, q, qa, qc, ns\}$. If $\overline{\cl{U}}_{E} \rightarrow_{\rm st} \cl{U}_{P, Q}$ via $\Gamma$, and if $\cl{E}: \cl{D}_{X_{1}Y_{1}}\rightarrow \cl{D}_{A_{1}B_{1}}$ is a perfect ${\rm t}$-strategy for the game $E$, then $\Gamma[\cl{E}]$ is a perfect ${\rm t}$-strategy for quantum non-local game $\varphi_{P\rightarrow Q}$.
\end{theorem}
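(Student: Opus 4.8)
The plan is to mirror the argument of Theorem \ref{perfect_strats_imp_games} almost verbatim, the only substantive change being which lemma supplies the Kraus-space containment for the first leg of the transfer. Here the classical game $E$ plays the role previously played by the implication game $\varphi_{P_{1}\rightarrow Q_{1}}$, so Lemma \ref{classical_perfect_strats} replaces Lemma \ref{containment_of_kraus} at the corresponding step, while the target leg $\varphi_{P\rightarrow Q}$ is handled exactly as before.

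First I would view the classical strategy $\cl{E}: \cl{D}_{X_{1}Y_{1}}\rightarrow \cl{D}_{A_{1}B_{1}}$ as a quantum channel from $M_{X_{1}Y_{1}}$ to $M_{A_{1}B_{1}}$ via \cite[Remark 8.1]{tt}, so that $\cl{E}\in \cl{Q}_{\rm t}$ and the simulated channel $\Gamma[\cl{E}]$ is well-defined. Since $\cl{E}$ is a perfect ${\rm t}$-strategy for $E$, Lemma \ref{classical_perfect_strats} yields $\tilde{\cl{K}}_{\cl{E}} \subseteq \tilde{\cl{U}}_{E}$. Writing Kraus decompositions $\Gamma(T) = \sum_{p}N_{p}TN_{p}^{*}$ and $\cl{E}(S) = \sum_{r}M_{r}SM_{r}^{*}$, we then have $M_{r} \in \tilde{\cl{U}}_{E}$ for every $r$. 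By the construction of the simulated channel recorded in \cite{gt_two}, the Kraus operators of $\Gamma[\cl{E}]$ are exactly the operators $N_{p}[M_{r}]$ depending on $N_{p}$ and $M_{r}$. Because $\overline{\cl{U}}_{E}\rightarrow \cl{U}_{P,Q}$ via $\Gamma$ by hypothesis, \cite[Theorem 5.5]{gt_two} guarantees $N_{p}[M_{r}] \in \tilde{\cl{U}}_{P,Q}$ for all $p,r$, whence $\tilde{\cl{K}}_{\Gamma[\cl{E}]} \subseteq \tilde{\cl{U}}_{P,Q}$. Applying Lemma \ref{containment_of_kraus} shows that $\Gamma[\cl{E}]$ is a perfect strategy for $\varphi_{P\rightarrow Q}$, and Theorem \ref{strat_transp} simultaneously ensures $\Gamma[\cl{E}] \in \cl{Q}_{\rm t}$ whenever $\Gamma \in \cl{Q}_{\rm st}$ and $\cl{E} \in \cl{Q}_{\rm t}$, giving the desired perfect ${\rm t}$-strategy.

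The point requiring care — more a conceptual observation than a genuine obstacle — is that the perfectness condition for the \emph{classical} first leg is encoded by Kraus-space containment in precisely the same form $\tilde{\cl{K}} \subseteq \tilde{\cl{U}}$ as the perfectness condition for the \emph{quantum} second leg. This uniformity is exactly what allows the homomorphism machinery of \cite[Theorem 5.5]{gt_two} to be applied without modification across a classical-to-quantum transfer; in particular one must confirm that the Kraus space of $\cl{E}$, once $\cl{E}$ is regarded as a quantum channel, is the span of the elementary operators $e_{a_{1}}e_{x_{1}}^{*}\otimes e_{b_{1}}e_{y_{1}}^{*}$ over $\supp(\cl{E})$, as in Lemma \ref{classical_perfect_strats}. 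Once this alignment is recognized, the remainder of the proof is a direct transcription of Theorem \ref{perfect_strats_imp_games}.
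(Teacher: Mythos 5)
Your proposal is correct and follows essentially the same route as the paper: Lemma \ref{classical_perfect_strats} for the Kraus-space containment of the classical leg, \cite[Theorem 5.5]{gt_two} to push the Kraus operators of $\Gamma[\cl{E}]$ into $\tilde{\cl{U}}_{P,Q}$, Lemma \ref{containment_of_kraus} to conclude perfectness, and Theorem \ref{strat_transp} for membership in $\cl{Q}_{\rm t}$. The only cosmetic difference is that the paper cites Proposition \ref{st_vs_qst} alongside Theorem \ref{strat_transp} for the passage from $\cl{E}\in\cl{C}_{\rm t}$ to the corresponding quantum channel in $\cl{Q}_{\rm t}$, whereas you invoke \cite[Remark 8.1]{tt} as in the subsection's preamble; these serve the same purpose.
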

\begin{proof}
Assuming the notation and setup as in the proof of Theorem \ref{perfect_strats_imp_games}, by Lemma \ref{classical_perfect_strats} we know that $M_{r} \in \hat{\cl{U}}_{E}^{*}$ for each $r = 1, \hdots, s$. If the Kraus operators for $\Gamma[\cl{E}]$ are given by $N_{p}[M_{r}]$ for $p = 1, \hdots, t$ and $r = 1, \hdots, s$ then by \cite[Theorem 5.5]{gt_two} we know that $N_{p}[M_{r}] \in \tilde{\cl{U}}_{P, Q}$ for each $p = 1, \hdots, t$ and $r = 1, \hdots, s$. By Lemma \ref{containment_of_kraus}, this means $\Gamma[\cl{E}]$ is a perfect strategy for $\varphi_{P\rightarrow Q}$. Furthermore, by Theorem \ref{strat_transp} and Proposition \ref{st_vs_qst} we have that $\Gamma[\cl{E}] \in \cl{Q}_{\rm t}$ whenever $\Gamma \in \cl{Q}_{\rm st}$ with $\cl{E} \in \cl{C}_{\rm t}$. 
\end{proof}

\subsection{Quantum graph games} Quantum graphs, and games played on quantum graphs, have generated considerable interest in the last decade. While there are several connected concepts in the literature of what a quantum graph ``should be" (see \cite{bhinw, daws, dw}), we will use the concept discussed in \cite{synch_bhtt, bhtt, stahlke, tt}.

For a finite set $X$, let $\mathfrak{m}: \bb{C}^{X}\otimes \bb{C}^{X}\rightarrow \bb{C}$ be the map given by
\begin{gather*}
	\mathfrak{m}(\xi) = \bigg\langle \xi, \sum\limits_{x \in X}e_{x}\otimes e_{x}\bigg\rangle. 
\end{gather*}
\noindent Similarly, let $\mathfrak{f}: \bb{C}^{X}\otimes \bb{C}^{X}\rightarrow \bb{C}^{X}\otimes \bb{C}^{X}$ be the flip operation, where $\mathfrak{f}(\xi\otimes \eta) = \eta\otimes \xi$, for $\xi, \eta \in \bb{C}^{X}$.
\begin{definition}\label{quantum_graph}
A quantum graph with vertex set $X$ is a linear subspace $\cl{U} \subseteq \bb{C}^{X}\otimes \bb{C}^{X}$ which is skew- in that $\mathfrak{m}(\cl{U}) = \{0\}$- and symmetric- in that $\mathfrak{f}(\cl{U}) = \cl{U}$.
\end{definition}
\noindent \indent For the remainder of this section, for any subspace $\cl{U} \subseteq \bb{C}^{X}\otimes \bb{C}^{X}$ we denote by $P_{\cl{U}}$ the orthogonal projection onto $\cl{U}$. If $G$ is a classical graph on vertex set $X$, there is a corresponding quantum graph $\cl{U}_{G}$ given by the subspace 
\begin{gather*}
	\cl{U}_{G} := {\rm span}\{e_{x}\otimes e_{y}: \; x\sim y\},
\end{gather*}
\noindent where we write $P_{G} = P_{\cl{U}_{G}}$. If $\cl{U} \subseteq \bb{C}^{X}\otimes \bb{C}^{X}$ and $\cl{V} \subseteq \bb{C}^{A}\otimes \bb{C}^{A}$ are quantum graphs, the \textit{quantum graph homomorphism game} $\cl{U}\rightarrow \cl{V}$ is the quantum implication game $\varphi_{P_{\cl{U}}\rightarrow P_{\cl{V}}}$. As mentioned, QNS correlation $\Gamma: M_{XX}\rightarrow M_{AA}$ is perfect for $\cl{U}\rightarrow \cl{V}$ if 
\begin{gather*}
	\omega \in M_{XX}^{+} \; \text{and} \; P_{\cl{U}}\omega P_{\cl{U}} \Rightarrow \Phi(\omega) = P_{\cl{V}}\Phi(\omega)P_{\cl{V}}.
\end{gather*}
\noindent In the event that $X = A$, then $\Phi$ is a perfect strategy for the \textit{quantum graph isomorphism game} $\cl{U} \cong \cl{V}$ if $\Phi$ is a bicorrelation, with $\Phi$ a perfect strategy for $\cl{U}\rightarrow \cl{V}$ and $\Phi^{*}$ a perfect strategy for $\cl{V}\rightarrow \cl{U}$. 
\begin{theorem}\label{graph_isomorphism_game}
Let $\cl{U}_{i} \subseteq \bb{C}^{X_{i}X_{i}}, \cl{V}_{i} \subseteq \bb{C}^{A_{i}A_{i}}, i = 1, 2$ be quantum graphs, with $P_{i} = P_{\cl{U}_{i}}, Q_{i} = P_{\cl{V}_{i}}, i = 1, 2$ their corresponding projections. 
\begin{itemize}
	\item[(i)] If $\cl{E}: M_{X_{1}X_{1}}\rightarrow M_{A_{1}A_{1}}$ is a perfect strategy for the quantum graph homomorphism game $\cl{U}_{1} \rightarrow_{\rm t} \cl{V}_{1}$ and $\overline{\cl{U}}_{P_{1}, Q_{1}} \rightarrow_{\rm st} \cl{U}_{P_{2}, Q_{2}}$ via $\Gamma$, then $\Gamma[\cl{E}]$ is a perfect strategy for the quantum graph homomorphism game $\cl{U}_{2} \rightarrow_{\rm t} \cl{V}_{2}$;
	\item[(ii)] If $X_{i} = A_{i}, i = 1, 2$ and $\cl{E}: M_{X_{1}X_{1}}\rightarrow M_{X_{1}X_{1}}$ is a perfect strategy for the quantum graph isomorphism game $\cl{U}_{1} \cong_{\rm t} \cl{V}_{1}$ and $\overline{\cl{U}}_{P_{1}, Q_{1}} \cong_{\rm st} \cl{U}_{P_{2}, Q_{2}}$ via $\Gamma$, then $\Gamma[\cl{E}]$ is a perfect strategy for the quantum graph isomorphism game $\cl{U}_{2} \cong_{\rm t} \cl{V}_{2}$. 
\end{itemize}
\end{theorem}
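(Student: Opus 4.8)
The plan is to treat the two parts separately, reducing each to the machinery already developed for implication games.

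For (i), I would simply observe that the quantum graph homomorphism game $\cl{U}_i \rightarrow \cl{V}_i$ is, by definition, the quantum implication game $\varphi_{P_i \rightarrow Q_i}$ with $P_i = P_{\cl{U}_i} \in M_{X_i X_i}$ and $Q_i = P_{\cl{V}_i} \in M_{A_i A_i}$. Thus (i) is exactly Theorem \ref{perfect_strats_imp_games} specialized to $Y_i = X_i$ and $B_i = A_i$: the hypotheses $\overline{\cl{U}}_{P_1, Q_1} \rightarrow_{\rm st} \cl{U}_{P_2, Q_2}$ via $\Gamma$ and ``$\cl{E}$ a perfect ${\rm t}$-strategy for $\varphi_{P_1 \rightarrow Q_1}$'' match verbatim, and the conclusion that $\Gamma[\cl{E}]$ is a perfect ${\rm t}$-strategy for $\varphi_{P_2 \rightarrow Q_2}$ is precisely the assertion that $\Gamma[\cl{E}]$ perfectly plays $\cl{U}_2 \rightarrow_{\rm t} \cl{V}_2$. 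No further work is needed here.

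For (ii), I would unwind the definition of the quantum graph isomorphism game: $\Gamma[\cl{E}]$ perfectly plays $\cl{U}_2 \cong_{\rm t} \cl{V}_2$ if and only if (a) $\Gamma[\cl{E}]$ is a bicorrelation of type ${\rm t}$, (b) $\Gamma[\cl{E}]$ is a perfect strategy for $\cl{U}_2 \rightarrow \cl{V}_2$, and (c) $\Gamma[\cl{E}]^*$ is a perfect strategy for $\cl{V}_2 \rightarrow \cl{U}_2$. Condition (b) is immediate from part (i). For (a), since $\cl{E}$ perfectly plays the isomorphism game $\cl{U}_1 \cong_{\rm t} \cl{V}_1$ it is a QNS bicorrelation with $\cl{E} \in \cl{Q}_{\rm t}^{\rm bi}$, and since $\Gamma$ witnesses the ${\rm st}$-isomorphism it is an SQNS bicorrelation with $\Gamma \in \cl{Q}_{\rm st}^{\rm bi}$; hence Theorem \ref{bi_strat_transp} gives $\Gamma[\cl{E}] \in \cl{Q}_{\rm t}^{\rm bi}$, a bicorrelation of the correct type.

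The substance of (ii) is condition (c), which I would handle through the adjoint identity $\Gamma[\cl{E}]^* = \Gamma^*[\cl{E}^*]$ established in the proof of Theorem \ref{bi_strat_transp}(i). Being a perfect isomorphism strategy, $\cl{E}$ satisfies that $\cl{E}^*$ is a perfect ${\rm t}$-strategy for $\cl{V}_1 \rightarrow \cl{U}_1$, i.e. for $\varphi_{Q_1 \rightarrow P_1}$. On the simulator side, the definition of the ${\rm st}$-isomorphism packages not only that $\Gamma$ witnesses $\overline{\cl{U}}_{P_1, Q_1} \rightarrow_{\rm st} \cl{U}_{P_2, Q_2}$ but also that $\Gamma^*$ witnesses the reverse homomorphism, which after interchanging the roles of the two projections reads $\overline{\cl{U}}_{Q_1, P_1} \rightarrow_{\rm st} \cl{U}_{Q_2, P_2}$. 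Applying part (i) a second time, now to $\Gamma^*$ and $\cl{E}^*$, yields that $\Gamma^*[\cl{E}^*] = \Gamma[\cl{E}]^*$ is a perfect ${\rm t}$-strategy for $\varphi_{Q_2 \rightarrow P_2}$, i.e. for $\cl{V}_2 \rightarrow \cl{U}_2$. Combining (a)--(c) shows $\Gamma[\cl{E}]$ perfectly plays $\cl{U}_2 \cong_{\rm t} \cl{V}_2$.

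The main obstacle I anticipate is purely definitional rather than computational: it lies in justifying the last step, namely that the ${\rm st}$-isomorphism $\overline{\cl{U}}_{P_1, Q_1} \cong_{\rm st} \cl{U}_{P_2, Q_2}$ via $\Gamma$ really does force $\Gamma^*$ to witness $\overline{\cl{U}}_{Q_1, P_1} \rightarrow_{\rm st} \cl{U}_{Q_2, P_2}$. This requires carefully matching the isomorphism notion for quantum hypergraphs from \cite{gt_two} (built on the symmetric construction $\cl{U}_1 \leftrightarrow \cl{U}_2$) against the bicorrelation condition, and checking that passing to the adjoint $\Gamma^*$ is compatible with interchanging $P \leftrightarrow Q$ in the associated hypergraph $\cl{U}_{P, Q}$. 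Once this compatibility is recorded, both parts follow formally from Theorems \ref{perfect_strats_imp_games} and \ref{bi_strat_transp} with no additional estimates.
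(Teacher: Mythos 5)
Your part (i) and your reduction of part (ii) to the three conditions (a)--(c) coincide with the paper's proof: (i) is indeed just Theorem \ref{perfect_strats_imp_games} specialized to $Y_i = X_i$, $B_i = A_i$; (a) is Theorem \ref{bi_strat_transp}; and (b) follows from (i). Where you diverge is condition (c). You propose to re-apply part (i) to the pair $(\Gamma^*, \cl{E}^*)$, which requires the claim that the ${\rm st}$-isomorphism $\overline{\cl{U}}_{P_1,Q_1}\cong_{\rm st}\cl{U}_{P_2,Q_2}$ via $\Gamma$ forces $\Gamma^*$ to witness $\overline{\cl{U}}_{Q_1,P_1}\rightarrow_{\rm st}\cl{U}_{Q_2,P_2}$ --- and you explicitly leave this compatibility check open. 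That is the one genuine gap in the write-up: as it stands, (c) rests on an unproved (though correct) statement about how the hypergraph isomorphism notion of \cite{gt_two}, built on the symmetric construction $\cl{U}_1\leftrightarrow\cl{U}_2$, interacts with passing to adjoints and with interchanging $P\leftrightarrow Q$.

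The paper avoids having to prove that statement. Its argument for (c) stays entirely at the level of Kraus operators: the Kraus operators of $\Gamma[\cl{E}]^*=\Gamma^*[\cl{E}^*]$ are exactly the adjoints $(N_p[M_r])^*=N_p^*[M_r^*]$ of the Kraus operators of $\Gamma[\cl{E}]$, which are already known from the forward direction to lie in $\tilde{\cl{U}}_{P_2,Q_2}$. Since $\cl{U}_{P_2,Q_2}$ is spanned by the vectors $\overline{\xi}_i\otimes\gamma_\ell$ with $\xi_i\in{\rm rng}(P_2)$ and $\gamma_\ell\in{\rm rng}(Q_2)$, one has $\cl{U}_{Q_2,P_2}=\overline{\cl{U}}_{P_2,Q_2}$, so (using \cite[Lemma 2.1]{gt_two}) the adjoints lie in $\tilde{\cl{U}}_{Q_2,P_2}$, and Lemma \ref{containment_of_kraus} gives perfection for $\varphi_{Q_2\rightarrow P_2}$. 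This costs only the elementary identity $\cl{U}_{Q,P}=\overline{\cl{U}}_{P,Q}$ instead of the compatibility statement your route needs. If you keep your approach, you must actually record that compatibility; otherwise, replacing your step (c) by the adjoint-of-Kraus-operators argument closes the gap.
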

\begin{proof}
(i) This follows as a special consequence of Theorem \ref{perfect_strats_imp_games}.

(ii) Assume the notation and setup as in the proof of Theorem \ref{perfect_strats_imp_games}; by the aforementioned theorem, we already know that $\Gamma[\cl{E}]$ is a perfect strategy for the graph homomorphism game $\cl{U}_{2}\rightarrow \cl{V}_{2}$. To show that $\Gamma[\cl{E}]^{*}$ is a perfect strategy for $\cl{V}_{2}\rightarrow \cl{U}_{2}$, first note that $\Gamma[\cl{E}]^{*} = \Gamma^{*}[\cl{E}^{*}]$ (by construction of simulated channels). Furthermore, by definition of $\cl{U}_{P_{2}, Q_{2}}$ we may easily verify that $\cl{U}_{Q_{2}, P_{2}} = \overline{\cl{U}}_{P_{2}, Q_{2}}$. With the use of \cite[Lemma 2.1]{gt_two}, and as $\theta^{-1}(N_{p}[M_{r}]) \in \cl{U}_{P_{2}, Q_{2}}$, we have that $\theta^{-1}((N_{p}[M_{r}])^{*}) = \theta^{-1}(N_{p}^{*}[M_{r}^{*}]) \in \overline{\cl{U}}_{P_{2}, Q_{2}} = \cl{U}_{Q_{2}, P_{2}}, p = 1, \hdots, t$ and $r = 1, \hdots, s$. Thus, by Lemma \ref{containment_of_kraus} once more we see $\tilde{\cl{K}}_{\Gamma[\cl{E}]^{*}} \subseteq \tilde{\cl{U}}_{Q_{2}, P_{2}}$. This means $\Gamma[\cl{E}]^{*}$ is a perfect strategy for $\cl{V}_{2}\rightarrow \cl{U}_{2}$. That $\Gamma[\cl{E}], \Gamma[\cl{E}]^{*}$ are both in $\cl{Q}_{\rm t}$ follows from Theorem \ref{bi_strat_transp}. Together, these show $\cl{U}_{2} \cong_{\rm t} \cl{V}_{2}$ via $\Gamma[\cl{E}]$, as claimed. 
\end{proof}


\section{Characterization of SQNS correlations and applications to concurrent games}\label{s_char_sqns}

In the final section, we wish to link strongly quantum no-signalling correlations to the (multi-variate) tensor product of operator systems. In order to do so, we first briefly recall some facts about operator systems and related constructions. If $\cl{S}$ and $\cl{T}$ are operator systems, we call $\cl{S}$ and $\cl{T}$ \textit{completely order isomorphic} and write $\cl{S} \cong_{\rm c.o.i} \cl{T}$ if there exists a unital completely positive bijection $\phi: \cl{S}\rightarrow \cl{T}$ with completely positive inverse. We write $\cl{S} \subseteq_{\rm c.o.i} \cl{T}$ if $\cl{S} \subseteq \cl{T}$ and the inclusion map $\iota: \cl{S}\hookrightarrow \cl{T}$ is a complete order isomorphism onto its range. The three main types of operator system tensor products that will be used in the sequel are given as follows:
\begin{itemize}
	\item[(i)] the \textit{minimal} operator system tensor product $\cl{S}\otimes_{\rm min} \cl{T}$ arises from viewing $\cl{S}\otimes \cl{T}$ as a subspace of $\cl{B}(H\otimes K)$, when we concretely realize $\cl{S} \subseteq \cl{B}(H)$ and $\cl{T} \subseteq \cl{B}(K)$ for Hilbert spaces $H, K$;
	\item[(ii)] the \textit{commuting} tensor product $\cl{S}\otimes_{\rm c} \cl{T}$ has the smallest family of matricial cones which make the maps $\phi\cdot \psi$, where $\phi: \cl{S}\rightarrow \cl{B}(H)$ and $\psi: \cl{T}\rightarrow \cl{B}(H)$ are completely positive with commuting ranges, completely positive; note that $(\phi\cdot \psi)(x\otimes y) = \phi(x)\psi(y), x \in \cl{S}, y \in \cl{T}$;
	\item[(iii)] the \textit{maximal} operator system tensor product $\cl{S}\otimes_{\rm max} \cl{T}$ has matricial cones generated by the elementary tensors of the form $S\otimes T$, where $S \in M_{n}(\cl{S})^{+}$ and $T \in M_{m}(\cl{T})^{+}, n, m \in \bb{N}$.
\end{itemize}
\noindent More details about each tensor product may be found in \cite{kptt}; the construction of the multivariate tensor product of each type, with explicit descriptions of their matricial cones, may be found in \cite[Section 7]{gt_one}. 

We recall the notion of a coproduct of operator systems: if $\cl{S}$ and $\cl{T}$ are two operator systems, their coproduct $\cl{S} \oplus_{1} \cl{T}$ is the unique (up to isomorphism) operator system equipped with complete order embeddings $\iota_{\cl{S}}: \cl{S}\rightarrow \cl{S}\oplus_{1} \cl{T}$ and $\iota_{\cl{T}}: \cl{T}\rightarrow \cl{S}\oplus_{1}\cl{T}$ which satisfies the following unversal property: For each ucp map $\phi: \cl{S}\rightarrow \cl{R}$ and $\psi: \cl{T}\rightarrow \cl{R}$, where $\cl{R}$ is an operator system, there exists a unique ucp map $\varphi: \cl{S}\oplus_{1}\cl{T}\rightarrow \cl{R}$ such that $\varphi(\iota_{\cl{S}}(s)) = \phi(s)$ and $\varphi(\iota_{\cl{T}}(t)) = \psi(t)$ for every $s \in \cl{S}, t \in \cl{T}$. For more properties of the coproduct of operator systems, we refer the reader to \cite[Section 8]{kavruk}.

We also will need to recall the universal operator system for stochastic operator matrices, introduced in \cite{tt}. A ${\it ternary \; ring \; of \; operators \; (TRO)}$ is a subspace $\cl{V} \subseteq \cl{B}(H, K)$ for some Hilbert spaces $H$ and $K$, such that $ST^{*}R \in \cl{V}$ whenever $S, T, R \in \cl{V}$ (see \cite[Section 4.3]{bm} or \cite{zettl}). Let $X$ and $A$ be finite sets, and let $\cl{V}_{X, A}$ be the universal TRO generated by the entries $v_{a, x}$ of a block operator isometry $V = (v_{a, x})_{a \in A, x \in X}$. That is, $\cl{V}_{X, A}$ is the universal TRO with generators $v_{a, x}, a \in A, x \in X$, ternary operator $[\cdot, \cdot, \cdot]: \cl{V}_{X, A}\times \cl{V}_{X, A}\times \cl{V}_{X, A}\rightarrow \cl{V}_{X, A}$ given by $[u, v, w] = uv^{*}w$ for $u, v, w \in \cl{V}_{X, A}$, and relations
\begin{gather*}
	\sum\limits_{a \in A}[v_{a'', x''}, v_{a, x}, v_{a, x'}] = \delta_{x, x'}v_{a'', x''}, \;\;\;\; x, x', x'' \in X, a'' \in A.
\end{gather*}
Let $\mathfrak{C}_{X, A}$ be the unital $*$-algebra, generated by the set $\{v_{a, x}^{*}v_{a', x'}: \; x, x' \in X, a, a' \in A\}$, and $\cl{C}_{X, A}$ be the ${\it right \; C^{*}-algebra}$ of $\cl{V}_{X, A}$; up to a $*$-isomorphism, we may view
\begin{gather*}
	\cl{C}_{X, A} = \overline{{\rm span}\{\theta(S)^{*}\theta(T): \; S, T \in \cl{V}_{X, A}\}}
\end{gather*}
\noindent for any faithful ternary representation $\theta: \cl{V}_{X, A}\rightarrow \cl{B}(H, K)$ (where $H, K$ are Hilbert spaces). Set
\begin{gather*}
	e_{x, x', a, a'} = v_{a, x}^{*}v_{a', x'}, \;\;\;\; x, x' \in X, \; a, a' \in A
\end{gather*}
\noindent where the latter is considered either as an element in $\mathfrak{C}_{X, A}$ or $\cl{C}_{X, A}$, depending on the context. The \textit{Brown-Cuntz operator system} (see \cite{tt}) inside $\cl{C}_{X, A}$ is given by
\begin{gather*}
	\cl{T}_{X, A} := {\rm span}\{e_{x, x', a, a'}: \; x, x' \in X, \; a, a' \in A\}.
\end{gather*}
\noindent We also will consider the space
\begin{gather*}
	\cl{S}_{X, A} := {\rm span}\{e_{x, x, a, a}: \; x \in X, a \in A\},
\end{gather*}
\noindent viewed as an operator subsystem inside $\cl{T}_{X, A}$. To help distinguish between operator systems $\cl{T}_{X, A}$ and $\cl{T}_{Y, B}$, we will denote the canonical generators of $\cl{T}_{X, A}$ by $e_{x, x', a, a'}, x, x' \in X, a, a' \in A$ and $\cl{T}_{Y, B}$ by $f_{y, y', b, b'}, y, y' \in Y, b, b' \in B$. 

Similarly, there are canonical operator algebras and operator systems which corresponding to bistochastic operator matrices (first introduced in \cite[Section 3]{bhtt}); these will be needed to show an analogous result to Proposition \ref{dilate_ssop}. Let $\cl{V}_{X}$ be the universal TRO generated by the entries $v_{a, x}$ of a block operator bi-isometry $V = (v_{a, x})_{a \in A, x \in X}$ (see \cite{bhtt}). Let $\cl{C}_{X}$ be the right ${\rm C}^{*}$-algebra of $\cl{V}_{X}$, set $e_{x, x', a, a'} = v_{a, x}^{*}v_{a, x}$ (where the latter is considered as an element of $\cl{C}_{X}$) and 
\begin{gather*}
	\cl{T}_{X} := {\rm span}\{e_{x, x', a, a'}: \; x, x' \in X, a, a' \in A\}
\end{gather*}
\noindent be viewed as an operator system in $\cl{C}_{X}$. Furthermore, we let
\begin{gather*}
	\cl{S}_{X} := {\rm span}\{e_{x, x, a, a}: \; x \in X, a \in A\}
\end{gather*}
\noindent be viewed as an operator subsystem of $\cl{T}_{X}$.

\begin{proposition}\label{dilate_ssop}
If $P = (P_{xx', yy'}^{aa', bb'})_{xx', yy', aa', bb'}$ is a dilatable strongly stochastic operator matrix acting on the Hilbert space $H$, then there exists a unital completely positive map $\gamma: \cl{T}_{X, A}\otimes_{\rm c} \cl{T}_{Y, B}\rightarrow \cl{B}(H)$ such that $\gamma(e_{x, x', a, a'}\otimes f_{y, y', b, b'}) = P_{xx', yy'}^{aa', bb'}$. Conversely, if $\gamma: \cl{T}_{X, A}\otimes_{\rm c}\cl{T}_{Y, B}\rightarrow \cl{B}(H)$ is a unital completely positive map, then $\big(\gamma(e_{x, x', a, a'}\otimes f_{y, y', b, b'})\big)_{xx', yy', aa', bb'}$ is a dilatable strongly stochastic operator matrix. 
\end{proposition}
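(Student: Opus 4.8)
The plan is to run everything through the correspondence, established in \cite[Section 3]{tt}, between stochastic operator matrices over $(X, A)$ acting on a Hilbert space $K$ and unital completely positive maps $\cl{T}_{X, A}\rightarrow \cl{B}(K)$: a stochastic operator matrix $(E_{xx', aa'})$ corresponds to the ucp map $\phi_{E}$ determined by $\phi_{E}(e_{x, x', a, a'}) = E_{xx', aa'}$, and conversely every such ucp map arises this way. The two implications are then handled separately, the forward one by assembling $\gamma$ out of the dilation data, and the converse one by dilating $\gamma$.

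For the forward implication, suppose $P$ is dilatable, so by (\ref{dilate_eqn}) there are mutually commuting stochastic operator matrices $E = (E_{xx', aa'})$ and $F = (F_{yy', bb'})$ on some $K$ and an isometry $V: H\rightarrow K$ with $P_{xx', yy'}^{aa', bb'} = V^{*}E_{xx', aa'}F_{yy', bb'}V$. Passing through the above correspondence yields ucp maps $\phi_{E}: \cl{T}_{X, A}\rightarrow \cl{B}(K)$ and $\phi_{F}: \cl{T}_{Y, B}\rightarrow \cl{B}(K)$; since each $E_{xx', aa'}$ commutes with each $F_{yy', bb'}$ and these generating families span the respective ranges, the ranges of $\phi_{E}$ and $\phi_{F}$ commute. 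By the defining property (ii) of the commuting tensor product, the map $\phi_{E}\cdot \phi_{F}: \cl{T}_{X, A}\otimes_{\rm c} \cl{T}_{Y, B}\rightarrow \cl{B}(K)$ with $(\phi_{E}\cdot \phi_{F})(e_{x, x', a, a'}\otimes f_{y, y', b, b'}) = E_{xx', aa'}F_{yy', bb'}$ is ucp. Setting $\gamma(\cdot) = V^{*}(\phi_{E}\cdot \phi_{F})(\cdot)V$, which is ucp as the composition of $\phi_{E}\cdot \phi_{F}$ with compression by the isometry $V$, gives a map with $\gamma(e_{x, x', a, a'}\otimes f_{y, y', b, b'}) = P_{xx', yy'}^{aa', bb'}$, as desired.

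For the converse, let $\gamma: \cl{T}_{X, A}\otimes_{\rm c} \cl{T}_{Y, B}\rightarrow \cl{B}(H)$ be ucp. The key structural tool is the dilation theorem for the commuting tensor product: writing $C^{*}_{\rm u}(\cl{R})$ for the universal ${\rm C}^{*}$-algebra of an operator system $\cl{R}$, one has $C^{*}_{\rm u}(\cl{T}_{X, A}\otimes_{\rm c} \cl{T}_{Y, B}) = C^{*}_{\rm u}(\cl{T}_{X, A})\otimes_{\rm max} C^{*}_{\rm u}(\cl{T}_{Y, B})$ (see \cite{kptt}). Thus $\gamma$ extends to a ucp map on this maximal ${\rm C}^{*}$-tensor product, and Stinespring's theorem, together with the fact that a representation of a maximal ${\rm C}^{*}$-tensor product splits as a pair of commuting representations of the two factors, produces a Hilbert space $K$, an isometry $V: H\rightarrow K$, and ucp maps $\hat{\phi}: \cl{T}_{X, A}\rightarrow \cl{B}(K)$ and $\hat{\psi}: \cl{T}_{Y, B}\rightarrow \cl{B}(K)$ with commuting ranges such that $\gamma(s\otimes t) = V^{*}\hat{\phi}(s)\hat{\psi}(t)V$. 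Putting $E_{xx', aa'} = \hat{\phi}(e_{x, x', a, a'})$ and $F_{yy', bb'} = \hat{\psi}(f_{y, y', b, b'})$, the correspondence of \cite{tt} makes $E$ and $F$ commuting stochastic operator matrices on $K$, and $P_{xx', yy'}^{aa', bb'} := \gamma(e_{x, x', a, a'}\otimes f_{y, y', b, b'}) = V^{*}E_{xx', aa'}F_{yy', bb'}V$ is precisely the dilation relation (\ref{dilate_eqn}). It remains to confirm that $P$ is strongly stochastic, which is a short computation: using ${\rm Tr}_{B}(F) = I_{Y}\otimes I_{K}$ one finds, for any state $\sigma_{Y}$, that ${\rm Tr}_{B}(L_{\sigma_{Y}}(P))_{xx', aa'} = V^{*}E_{xx', aa'}V$, which is independent of $\sigma_{Y}$ and is manifestly stochastic over $(X, A)$ as a compression of $E$; the symmetric calculation with ${\rm Tr}_{X}$ and $L_{\sigma_{X}}$ handles the other marginal.

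I expect the main obstacle to be the converse direction, specifically invoking the dilation theorem for $\otimes_{\rm c}$ that realizes an arbitrary ucp map as the compression of a product of two commuting ucp maps. Once this structural result and the stochastic-matrix/ucp-map dictionary of \cite{tt} are in place, the remaining work—the two marginal computations verifying strong stochasticity, and the forward assembly—is routine.
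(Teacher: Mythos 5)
Your proof is correct and follows essentially the same route as the paper's: the forward direction assembles $\gamma$ as $V^{*}(\phi_{E}\cdot\phi_{F})(\cdot)V$ exactly as in the paper, and the converse uses the complete order embedding $\cl{T}_{X,A}\otimes_{\rm c}\cl{T}_{Y,B}\subseteq_{\rm c.o.i.}\cl{C}_{X,A}\otimes_{\rm max}\cl{C}_{Y,B}$ (from \cite{tt} and \cite{kptt}) followed by a ucp extension and Stinespring, just as the paper does. The only differences are cosmetic: you phrase the extension step via an identification of the universal ${\rm C}^{*}$-cover of the commuting tensor product rather than via Arveson's extension theorem applied to the embedding, and you explicitly carry out the marginal computation verifying strong stochasticity of the resulting matrix, which the paper leaves implicit.
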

\begin{proof}
Let $K$ be a Hilbert space, $V: H\rightarrow K$ be an isometry, and
$$ (E_{xx', aa'})_{x, x' \in X, a, a' \in A}, \;\;\;\; (F_{yy', bb'})_{y, y' \in Y, b, b' \in B}$$
be mutually commuting stochastic operator matrices on $K$ satisfying (\ref{dilate_eqn}). Define linear maps $\phi: \cl{T}_{X, A}\rightarrow \cl{B}(K)$ (resp. $\psi: \cl{T}_{Y, B}\rightarrow \cl{B}(K)$) by $\phi(e_{x, x', a, a'}) = E_{xx', aa'}$ (resp. $\psi(f_{y, y', b, b'}) = F_{yy', bb'}$). By \cite[Theorem 5.2]{tt}, both $\phi$ and $\psi$ are unital and completely positive. By the definition of the commuting tensor product of operator spaces, the map $\phi\cdot \psi: \cl{T}_{X, A}\otimes_{\rm c}\cl{T}_{Y, B}\rightarrow \cl{B}(K)$ given by $(\phi\cdot \psi)(u\otimes v) = \phi(u)\psi(v)$ is (unital and) completely positive as well. Set
\begin{gather*}
	\gamma(w) = V^{*}(\phi\cdot \psi)(w)V, \;\;\;\; w \in \cl{T}_{X, A}\otimes_{\rm c}\cl{T}_{Y, B};
\end{gather*}
\noindent it is easy to verify that $\gamma$ is unital and completely positive. Furthermore, we have $\gamma(e_{x, x', a, a'}\otimes f_{y, y', b, b'}) = P_{xx', yy'}^{aa', bb'}, x, x' \in X, y, y' \in Y, a, a' \in A, b, b' \in B$. 

Conversely, suppose that $\gamma: \cl{T}_{X, A}\otimes_{\rm c}\cl{T}_{Y, B}\rightarrow \cl{B}(H)$ is a unital completely positive map. By \cite[Corollary 5.3]{tt} and \cite[Theorem 6.4]{kptt}, $\cl{T}_{X, A}\otimes_{\rm c} \cl{T}_{Y, B} \subseteq_{\rm c.o.i} \cl{C}_{X, A}\otimes_{\rm max} \cl{C}_{Y, B}$. Apply Arveson's Extension Theorem to obtain a completely positive map $\tilde{\gamma}: \cl{C}_{X, A}\otimes_{\rm max}\cl{C}_{Y, B}\rightarrow \cl{B}(H)$ extending $\gamma$. If we then apply Stinespring's Theorem, we may write
\begin{gather*}
	\tilde{\gamma}(w) = V^{*}\pi(w)V, \;\;\;\; w \in \cl{C}_{X, A}\otimes_{\rm max} \cl{C}_{Y, B},
\end{gather*}
\noindent where $\pi: \cl{C}_{X, A}\otimes_{\rm max} \cl{C}_{Y, B}\rightarrow \cl{B}(K)$ is a $*$-representation on some Hilbert space $K$, and $V:H\rightarrow K$ is an isometry. Set $E_{xx', aa'} = \pi(e_{x, x', a, a'}\otimes 1), F_{yy', bb'} = \pi(1\otimes f_{y, y', b, b'})$ for $x, x' \in X, a, a' \in A, y, y' \in Y$ and $b, b' \in B$. Doing so gives us a dilatable representation of the matrix $(\gamma(e_{x, x', a, a'}\otimes f_{y, y', b, b'}))_{xx', yy', aa', bb'}$. 
\end{proof}

\begin{remark}
\rm By Remark \ref{ns_generalization} and \cite[Remark 5.3]{gt_one}, there exist strongly stochastic operator matrices which are not dilatable.
\end{remark}

\begin{proposition}\label{dilate_bissop}
If $P = (P_{xx', yy'}^{aa', bb'})_{xx', yy', aa', bb'}$ is a dilatable strongly bistochastic operator matrix acting on the Hilbert space $H$, then there exists a unital completely positive map $\gamma: \cl{T}_{X}\otimes_{\rm c} \cl{T}_{Y}\rightarrow \cl{B}(H)$ such that $\gamma(e_{x, x', a, a'}\otimes f_{y, y', b, b'}) = P_{xx', yy'}^{aa', bb'}$. Conversely, if $\gamma: \cl{T}_{X}\otimes_{\rm c} \cl{T}_{Y}\rightarrow \cl{B}(H)$ is a unital completely positive map, then $(\gamma(e_{x, x', a, a'}\otimes f_{y, y', b, b'}))_{xx', yy', aa', bb'}$ is a dilatable strongly bistochastic operator matrix.
\end{proposition}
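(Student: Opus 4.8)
The plan is to follow the proof of Proposition \ref{dilate_ssop} essentially verbatim, substituting the bistochastic analogues of the structural results used there: the universal operator systems $\cl{T}_X, \cl{T}_Y$ and their right ${\rm C}^{*}$-algebras $\cl{C}_X, \cl{C}_Y$ in place of $\cl{T}_{X, A}, \cl{T}_{Y, B}$ and $\cl{C}_{X, A}, \cl{C}_{Y, B}$, and the correspondence from \cite{bhtt} (in place of \cite[Theorem 5.2]{tt}) between unital completely positive maps on $\cl{T}_X$ and bistochastic operator matrices.

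For the forward direction, given a dilation of $P$ as in (\ref{dilate_eqn}) with isometry $V: H\rightarrow K$ and mutually commuting bistochastic operator matrices $(E_{xx', aa'})$ and $(F_{yy', bb'})$ on $K$, I would define $\phi: \cl{T}_X\rightarrow \cl{B}(K)$ and $\psi: \cl{T}_Y\rightarrow \cl{B}(K)$ on the canonical generators by $\phi(e_{x, x', a, a'}) = E_{xx', aa'}$ and $\psi(f_{y, y', b, b'}) = F_{yy', bb'}$. The bistochastic version of \cite[Theorem 5.2]{tt} established in \cite[Section 3]{bhtt} guarantees that $\phi$ and $\psi$ are unital and completely positive. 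Since the ranges of $\phi$ and $\psi$ commute, the map $\phi\cdot \psi: \cl{T}_X\otimes_{\rm c}\cl{T}_Y\rightarrow \cl{B}(K)$ with $(\phi\cdot \psi)(u\otimes v) = \phi(u)\psi(v)$ is unital completely positive by the defining property of $\otimes_{\rm c}$, and setting $\gamma(w) = V^{*}(\phi\cdot \psi)(w)V$ yields a ucp map with $\gamma(e_{x, x', a, a'}\otimes f_{y, y', b, b'}) = P_{xx', yy'}^{aa', bb'}$.

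For the converse, I would first record the bistochastic analogue of the embedding $\cl{T}_{X, A}\otimes_{\rm c}\cl{T}_{Y, B}\subseteq_{\rm c.o.i}\cl{C}_{X, A}\otimes_{\rm max}\cl{C}_{Y, B}$, namely $\cl{T}_X\otimes_{\rm c}\cl{T}_Y\subseteq_{\rm c.o.i}\cl{C}_X\otimes_{\rm max}\cl{C}_Y$, which follows from the corresponding results of \cite{bhtt} together with \cite[Theorem 6.4]{kptt}. Given a ucp map $\gamma: \cl{T}_X\otimes_{\rm c}\cl{T}_Y\rightarrow \cl{B}(H)$, I would apply Arveson's extension theorem to extend it to a cp map $\tilde{\gamma}: \cl{C}_X\otimes_{\rm max}\cl{C}_Y\rightarrow \cl{B}(H)$, and then Stinespring's theorem to write $\tilde{\gamma}(w) = V^{*}\pi(w)V$ for a $*$-representation $\pi$ on a Hilbert space $K$ and an isometry $V: H\rightarrow K$. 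Setting $E_{xx', aa'} = \pi(e_{x, x', a, a'}\otimes 1)$ and $F_{yy', bb'} = \pi(1\otimes f_{y, y', b, b'})$ produces mutually commuting operator matrices realizing the dilation (\ref{dilate_eqn}), so $(\gamma(e_{x, x', a, a'}\otimes f_{y, y', b, b'}))$ is a dilatable strongly bistochastic operator matrix.

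The one place where the argument genuinely differs from Proposition \ref{dilate_ssop} is in verifying that the operator matrices $E_{xx', aa'} = \pi(e_{x, x', a, a'}\otimes 1)$ obtained from Stinespring are bistochastic rather than merely stochastic, i.e. that both partial traces ${\rm Tr}_{A}(E)$ and ${\rm Tr}_{X}(E)$ equal the identity. This is precisely the extra symmetry built into the universal TRO $\cl{V}_X$ of a block operator \emph{bi}-isometry: both the isometry and co-isometry relations defining the generators $v_{a, x}$ pass through the $*$-representation $\pi$ and yield the two trace conditions. Provided the bistochastic counterparts of \cite[Theorem 5.2, Corollary 5.3]{tt} are available from \cite[Section 3]{bhtt}, the remainder is routine and identical to the stochastic case.
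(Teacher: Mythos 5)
Your proposal is correct and follows essentially the same route as the paper, which likewise argues exactly as in Proposition \ref{dilate_ssop} after substituting \cite[Theorem 3.4]{bhtt} (together with \cite[Theorem 6.4]{kptt}) for the stochastic ingredients. Your closing remark about verifying that both partial traces of the Stinespring-derived matrices equal the identity, via the bi-isometry relations in $\cl{V}_{X}$, is a worthwhile explicit check that the paper leaves implicit.
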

\begin{proof}
Using \cite[Theorem 3.4]{bhtt} and \cite[Theorem 6.4]{kptt}, argue exactly as in Proposition \ref{dilate_ssop} but with bistochastic operator matrices in place of stochastic operator matrices and $\cl{T}_{X}$ (resp. $\cl{T}_{Y}$) in place of $\cl{T}_{X, A}$ (resp. $\cl{T}_{Y, B}$).
\end{proof}


\subsection{Representations of SQNS correlations via operator systems}

Let $\cl{S}$ be an operator system. Recall that the \textit{universal $C^{*}$-cover} of $\cl{S}$ (see \cite{kw}) is the pair $(C_{u}^{*}(\cl{S}), \iota)$ where $C_{u}^{*}(\cl{S})$ is a unital ${\rm C}^{*}$-algebra, $\iota: \cl{S}\rightarrow C_{u}^{*}(\cl{S})$ is a unital complete order embedding such that $\iota(\cl{S})$ generates $C_{u}^{*}(\cl{S})$ as a ${\rm C}^{*}$-algebra, and whenever $H$ is a Hilbert space with $\phi: \cl{S}\rightarrow \cl{B}(H)$ a unital completely positive map, there exists a $*$-representation $\pi_{\phi}: C_{u}^{*}(\cl{S})\rightarrow \cl{B}(H)$ such that $\pi_{\phi} \circ \iota = \phi$. We will briefly introduce a multivariate extension of both the maximal and the commuting tensor product (as discussed in the beginning of this section) in the category of operator systems. If $\cl{S}_{1}, \hdots, \cl{S}_{k}$ are operator systems, as the maximal tensor product of operator systems is associative (via \cite[Theorem 5.5]{kptt}) we may give an unambiguous meaning to the $k$-fold maximal tensor product
\begin{gather*}
	{\rm max}-\otimes_{j=1}^{k}\cl{S}_{j} := \cl{S}_{1}\otimes_{\rm max} \cdots \otimes_{\rm max} \cl{S}_{k}.
\end{gather*}
\noindent The $k$-fold commuting tensor product was initially defined in \cite[Section 7]{gt_one}: if $H$ is a Hilbert space and $\phi_{j}: \cl{S}_{j}\rightarrow \cl{B}(H)$ are completely positive maps for $j = 1, \hdots, k$, we call the family $(\phi_{j})_{j=1}^{k}$ \textit{commuting} if $\phi_{i}$ and $\phi_{j}$ have mutually commuting ranges whenever $i \neq j$, for $1 \leq i, j \leq k$. Given a commuting family $(\phi_{j})_{j=1}^{k}$, we define the linear map $\prod_{j=1}^{k}\phi_{j}: \otimes_{j=1}^{k}\cl{S}_{j}\rightarrow \cl{B}(H)$ via 
\begin{gather*}
	\bigg(\prod_{j=1}^{k}\phi_{j}\bigg)(\otimes_{j=1}^{k}u_{j}) := \prod_{j=1}^{k}\phi_{j}(u_{j}), \;\;\;\; u_{j} \in \cl{S}_{j}, j \in [k].
\end{gather*}
\noindent The positive cones for ${\rm c}-\otimes_{j=1}^{k}\cl{S}_{j}$ are then determined by all elements $u \in M_{n}(\otimes_{j=1}^{k}\cl{S}_{j})$ which are positive in $M_{n}(\cl{B}(H))$ under all mutually commuting families and for all choices of Hilbert space $H$.

The following result shows how we can understand the structure of the multivariate commuting tensor product of operator systems when we view them inside the maximal tensor product of their universal ${\rm C}^{*}$-covers. 

\begin{theorem}\label{op_sys_incl_to_env}
Let $\cl{S}_{1}, \hdots, \cl{S}_{k}, k \in \bb{N}$ be operator systems. The operator system arising from the inclusion of $\otimes_{j=1}^{k}-\cl{S}_{j}$ into ${\rm max}-\otimes_{j=1}^{k}C_{u}^{*}(\cl{S}_{j})$ coincides with ${\rm c}-\otimes_{j=1}^{k}\cl{S}_{j}$. 
\end{theorem}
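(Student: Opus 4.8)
The plan is to show that the identity map on the algebraic tensor product $\cl{S}_1 \otimes \cdots \otimes \cl{S}_k$ is a unital complete order isomorphism between the two operator system structures in question: the structure $\cl{O}_{\rm ind}$ induced by the inclusion into ${\rm max}-\otimes_{j=1}^k C_u^*(\cl{S}_j)$, and the multivariate commuting tensor product ${\rm c}-\otimes_{j=1}^k \cl{S}_j$. Throughout I shall use the concrete description of the matricial cones of the latter from \cite[Section 7]{gt_one}: an element $u \in M_n(\cl{S}_1 \otimes \cdots \otimes \cl{S}_k)$ is positive in ${\rm c}-\otimes_{j=1}^k \cl{S}_j$ if and only if $(\phi_1 \cdot \cdots \cdot \phi_k)^{(n)}(u) \geq 0$ for every Hilbert space $H$ and every family of unital completely positive maps $\phi_j \colon \cl{S}_j \to \cl{B}(H)$ with pairwise commuting ranges, where $(\phi_1 \cdot \cdots \cdot \phi_k)(s_1 \otimes \cdots \otimes s_k) = \phi_1(s_1)\cdots \phi_k(s_k)$. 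Both structures share the order unit $1 \otimes \cdots \otimes 1$, so it remains only to match the cones at each matrix level; unitality will then be automatic.

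For the inclusion ${\rm c}-\otimes_{j=1}^k \cl{S}_j \to {\rm max}-\otimes_{j=1}^k C_u^*(\cl{S}_j)$ to be completely positive, I would take $u$ positive in the commuting tensor product and verify positivity in the maximal $C^*$-tensor product by testing against an arbitrary $*$-representation $\pi$ of ${\rm max}-\otimes_{j=1}^k C_u^*(\cl{S}_j)$ on some $\cl{B}(H)$ (positivity in a $C^*$-algebra being detected by a faithful representation). By the universal property of the maximal $C^*$-tensor product, such a $\pi$ is determined by $*$-representations $\pi_j \colon C_u^*(\cl{S}_j) \to \cl{B}(H)$ with pairwise commuting ranges and $\pi(a_1 \otimes \cdots \otimes a_k) = \pi_1(a_1)\cdots \pi_k(a_k)$. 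Setting $\phi_j = \pi_j|_{\cl{S}_j}$ gives ucp maps with commuting ranges, and $\pi$ restricted to $\cl{S}_1 \otimes \cdots \otimes \cl{S}_k$ equals $\phi_1 \cdot \cdots \cdot \phi_k$; hence $\pi^{(n)}(u) = (\phi_1 \cdot \cdots \cdot \phi_k)^{(n)}(u) \geq 0$, and since $\pi$ was arbitrary, $u \geq 0$ in the maximal $C^*$-tensor product.

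Conversely, to see that the identity map $\cl{O}_{\rm ind} \to {\rm c}-\otimes_{j=1}^k \cl{S}_j$ is completely positive, I would take $u$ positive in $\cl{O}_{\rm ind}$ and any family of ucp maps $\phi_j \colon \cl{S}_j \to \cl{B}(H)$ with pairwise commuting ranges, and show $(\phi_1 \cdot \cdots \cdot \phi_k)^{(n)}(u) \geq 0$. By the universal property of the universal $C^*$-cover, each $\phi_j$ extends to a $*$-representation $\pi_j \colon C_u^*(\cl{S}_j) \to \cl{B}(H)$ with $\pi_j \circ \iota_j = \phi_j$. The key step is that the $\pi_j$ again have pairwise commuting ranges: since $\cl{S}_j$ generates $C_u^*(\cl{S}_j)$, the range $\pi_j(C_u^*(\cl{S}_j))$ is the $C^*$-algebra generated by $\phi_j(\cl{S}_j)$, and a commutant argument (the relative commutant of a self-adjoint set is a $C^*$-algebra) upgrades the commutation of the generating operator systems $\phi_i(\cl{S}_i)$ and $\phi_j(\cl{S}_j)$ to commutation of the generated $C^*$-algebras. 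The multivariate universal property of $\otimes_{\rm max}$ then assembles the $\pi_j$ into a single $*$-representation $\pi$ of ${\rm max}-\otimes_{j=1}^k C_u^*(\cl{S}_j)$ whose restriction to the algebraic tensor product is $\phi_1 \cdot \cdots \cdot \phi_k$; applying the completely positive map $\pi^{(n)}$ to $u \geq 0$ yields the claim.

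The main obstacle is precisely the commutation-lifting step together with the decision to route everything through the \emph{maximal} $C^*$-tensor product of the universal covers. Iterating the bivariate identity $\cl{S} \otimes_{\rm c} \cl{T} \subseteq_{\rm c.o.i} C_u^*(\cl{S}) \otimes_{\rm max} C_u^*(\cl{T})$ (from \cite[Theorem 6.4]{kptt}) is unavailable here, because the commuting operator system tensor product is not known to be associative --- exactly the subtlety flagged in Remark \ref{change_in_defn}. Working directly with the multivariate $\otimes_{\rm max}$ of $C^*$-algebras circumvents this, since that $C^*$-tensor product \emph{is} associative and carries the clean universal property for families with pairwise commuting ranges that both directions of the argument require; the remaining verifications, namely injectivity of the inclusion $\cl{S}_1 \otimes \cdots \otimes \cl{S}_k \hookrightarrow {\rm max}-\otimes_{j=1}^k C_u^*(\cl{S}_j)$ and preservation of the unit, are routine.
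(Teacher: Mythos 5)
Your proposal is correct and follows essentially the same route as the paper's proof: one direction tests positivity in ${\rm max}-\otimes_{j=1}^{k}C_{u}^{*}(\cl{S}_{j})$ against $*$-representations, which decompose by the universal property into mutually commuting representations $\pi_{j}$ whose restrictions to the $\cl{S}_{j}$ are ucp with commuting ranges; the other direction extends ucp maps with commuting ranges to $*$-representations of the universal covers and lifts the commutation using that $\cl{S}_{j}$ generates $C_{u}^{*}(\cl{S}_{j})$. Your explicit commutant argument for the lifting step and your remark on why one must avoid iterating the bivariate identity (non-associativity of $\otimes_{\rm c}$) match the paper's reasoning, with only cosmetic differences (faithful representation versus ucp-plus-Stinespring in the first direction).
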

\begin{proof}

For the sake of brevity, set $\cl{S} = {\rm c}-\otimes_{j=1}^{k}\cl{S}_{j}$. First, suppose $u \in M_{n}(\cl{S})^{+}$ for some $n \in \bb{N}$. We wish to show that $u \in M_{n}({\rm max}-\otimes_{j=1}^{k}C_{u}^{*}(\cl{S}_{j}))^{+}$. By \cite[Lemma 4.1]{kptt}, it is sufficient to prove that $\phi^{(n)}(u) \geq 0$ for each unital completely positive $\phi: {\rm max}-\otimes_{j=1}^{k}C_{u}^{*}(\cl{S}_{j})\rightarrow \cl{B}(H)$. By Stinespring's Theorem, we may also without loss of generality assume $\phi$ is a $*$-homomorphism. By \cite[Proposition 7.4]{gt_one}, associativity, and the universal property of the maximal tensor product of ${\rm C}^{*}$-algebras, each such $\phi$ is equivalent to $\prod_{j=1}^{k}\pi_{j}$, where $\pi_{j}: C_{u}^{*}(\cl{S}_{j})\rightarrow \cl{B}(H)$ is a $*$-homomorphism for $j = 1, \hdots, k$ and all have mutually commuting ranges. As the restriction of $\pi_{j}$ to $\cl{S}_{j}$ remains completely positive for $j = 1, \hdots, k$, the result follows. 

Conversely, let $\tau$ be the operator system structure on $\otimes_{j=1}^{k}\cl{S}_{j}$ arising from the inclusion $\otimes_{j=1}^{k}\cl{S}_{j} \subseteq {\rm max}-\otimes_{j=1}^{k}C_{u}^{*}(\cl{S}_{j})$. Suppose that $u \in M_{n}(\tau-\otimes_{j=1}^{k}\cl{S}_{j})^{+}$, with $\phi_{j}: \cl{S}_{j}\rightarrow \cl{B}(H)$ completely positive maps with mutually commuting ranges for $j = 1, \hdots, k$. By the definition of $C_{u}^{*}(\cl{S}_{j})$, there exists unique $*$-homomorphisms $\pi_{j}: C_{u}^{*}(\cl{S}_{j})\rightarrow \cl{B}(H)$ extending $\phi_{j}$, for $j = 1, \hdots, k$. As $\cl{S}_{j}$ generates $C_{u}^{*}(\cl{S}_{j})$ as a ${\rm C}^{*}$-algebra, we may then conclude the ranges of $\pi_{j}$ are mutually commuting as well for $j = 1, \hdots, k$. Thus, $(\otimes_{j=1}^{k}\phi_{j})^{(n)}(u) \geq 0$, which implies $u \in M_{n}(\cl{S})^{+}$- completing the proof.
\end{proof}

Recall (see \cite{ptt}) that for any Archimedean ordered unit (AOU) space $V$, there exists a unique operator system ${\rm OMIN}(V)$ (respectively, ${\rm OMAX}(V)$) with underlying space $V$, called the \textit{minimal} (respectively, \textit{maximal}) operator system of $V$ which has the universal property that every positive map $\phi: \cl{T}\rightarrow V$ (respectively, $\phi: V\rightarrow \cl{T}$) where $\cl{T}$ is an operator system, is completely positive from $\cl{T} \rightarrow {\rm OMIN}(V)$ (respectively, ${\rm OMAX}(V)\rightarrow \cl{T}$). 
\begin{lemma}\label{omax_state_extension}
Let $V_{1}, \hdots, V_{k}$, $k \in \bb{N}$ be finite dimensional AOU spaces, with units $e_{j}, j = 1, \hdots, k$ respectively. An element $u \in {\rm max}-\otimes_{j=1}^{k}{\rm OMAX}(V_{j})$ is positive if and only if $u = \sum\limits_{i=1}^{m}v_{i}^{(1)}\otimes\hdots \otimes v_{i}^{(k)}$, for some $v_{i}^{(j)} \in V_{j}^{+}, i = 1, \hdots, m$ and $j = 1, \hdots, k$. 
\end{lemma}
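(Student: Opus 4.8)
The plan is to prove both implications at once by identifying the positive cone of $\cl{M} := {\rm max}\text{-}\otimes_{j=1}^{k}{\rm OMAX}(V_{j})$ with the cone
$$ C := \bigg\{\sum_{i=1}^{m}v_{i}^{(1)}\otimes \cdots \otimes v_{i}^{(k)} : m \in \bb{N}, \; v_{i}^{(j)} \in V_{j}^{+}\bigg\} \subseteq V_{1}\otimes \cdots \otimes V_{k}. $$
Write $W = V_{1}\otimes \cdots \otimes V_{k}$ and $e = e_{1}\otimes \cdots \otimes e_{k}$. By the explicit description of the multivariate maximal operator system tensor product in \cite[Section 7]{gt_one}, the positive cone of $\cl{M}$ is the Archimedeanization (with respect to $e$) of the generating cone
$$ D := \big\{\alpha(P_{1}\otimes \cdots \otimes P_{k})\alpha^{*} : P_{j} \in M_{p_{j}}({\rm OMAX}(V_{j}))^{+}, \; \alpha \in M_{1, p_{1}\cdots p_{k}}\big\}. $$
The inclusion $C \subseteq D$ is immediate (take each $p_{j} = 1$, $P_{j} = v^{(j)}$, $\alpha = 1$, and use that $D$ is closed under addition via block-diagonal sums), and this already yields the ``if'' direction once we know $D$ lies in the positive cone of $\cl{M}$. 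The substance of the lemma is therefore the reverse inclusion $D \subseteq C$, together with the verification that passing to the Archimedeanization does not enlarge $C$.

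For $D \subseteq C$, I would first recall from \cite{ptt} that in finite dimensions the matricial cone of ${\rm OMAX}(V_{j})$ has the concrete form $M_{p_{j}}({\rm OMAX}(V_{j}))^{+} = \{\sum_{s}A_{s}\otimes v_{s} : A_{s} \in M_{p_{j}}^{+}, \; v_{s} \in V_{j}^{+}\}$, no Archimedean closure being needed (justified by the compact-base argument below). Writing $P_{j} = \sum_{s}A_{j,s}\otimes v_{j,s}$ in this form, expanding, and reorganizing the matrix legs via $M_{p_{1}}\otimes \cdots \otimes M_{p_{k}} \cong M_{p_{1}\cdots p_{k}}$, one obtains
$$ \alpha(P_{1}\otimes \cdots \otimes P_{k})\alpha^{*} = \sum_{s_{1}, \dots, s_{k}}\alpha\big(A_{1, s_{1}}\otimes \cdots \otimes A_{k, s_{k}}\big)\alpha^{*}\,\big(v_{1, s_{1}}\otimes \cdots \otimes v_{k, s_{k}}\big). $$
Each scalar coefficient $\alpha(A_{1,s_{1}}\otimes \cdots \otimes A_{k,s_{k}})\alpha^{*}$ is nonnegative, being $\alpha M \alpha^{*}$ for the positive matrix $M = A_{1,s_{1}}\otimes \cdots \otimes A_{k,s_{k}} \in M_{p_{1}\cdots p_{k}}^{+}$, so the right-hand side is a nonnegative combination of positive decomposable tensors, hence an element of $C$. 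This gives $D = C$.

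Finally, I would show $C$ is closed, so that its Archimedeanization equals $C$ itself and the positive cone of $\cl{M}$ coincides with $C$, giving the ``only if'' direction. Choosing a faithful state $f_{j}$ on each $V_{j}$ (an interior point of the dual cone, available since each $V_{j}^{+}$ is proper with order unit $e_{j}$), the set $K = \{v^{(1)}\otimes \cdots \otimes v^{(k)} : v^{(j)} \in V_{j}^{+}, \; f_{j}(v^{(j)}) = 1\}$ is compact and lies in the affine hyperplane $\{f_{1}\otimes \cdots \otimes f_{k} = 1\}$, which misses the origin; hence $C = {\rm cone}(\conv(K))$ is closed, and the identical argument establishes closedness of the ${\rm OMAX}$ cones invoked above. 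The main obstacle is precisely this passage through the Archimedeanization: one must confirm that neither the ${\rm OMAX}$ matricial cones nor $C$ require an Archimedean enlargement, which is exactly where finite-dimensionality and the compact-base argument are essential. The tensor-leg bookkeeping in the displayed expansion is routine but must be carried out carefully so as to track which factors carry the matrix coefficients $A_{j,s}$ and which carry the elements $v_{j,s}$ of the $V_{j}$.
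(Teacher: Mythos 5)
Your proposal is correct and follows essentially the same route as the paper's proof: the central step in both is the expansion of $\alpha(P_{1}\otimes \cdots \otimes P_{k})\alpha^{*}$ into a nonnegative combination of elementary positive tensors, and both dispose of the Archimedean $\epsilon$-perturbations by a normalization--Carath\'eodory--compactness argument. The only organizational difference is that you package that compactness argument once, as the statement that $C$ (and each ${\rm OMAX}$ matricial cone) has a compact base and is therefore closed, whereas the paper keeps the perturbations explicit and iterates an ``approximate membership implies membership'' lemma over the tensor factors; both are valid.
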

\begin{proof}
This proof relies on similar ideas as the proof of \cite[Lemma 6.6]{tt}; we include the details for the convenience of the reader. We only consider the case when $k = 3$; all others will follow similarly. Let $D$ be the set containing all sums of elementary tensors $v_{1}\otimes v_{2}\otimes v_{3}$ with $v_{i} \in V_{i}^{+}, \; i = 1, 2, 3$. We claim that if, for every $\epsilon > 0$, there exists $u_{\epsilon} \in D$ such that $\|u_{\epsilon}\|\rightarrow 0$ as $\epsilon \rightarrow 0$ and $u+u_{\epsilon} \in D$ for each $\epsilon > 0$, then $u \in D$. We may, without loss of generality, further assume that $\|u_{\epsilon}\| \leq 1$ for all $\epsilon > 0$. Set $K = 2{\rm dim}(V_{1}){\rm dim}(V_{2}){\rm dim}(V_{3})+1$ and, using Carathéodory's Theorem, start by writing
\begin{gather*}
	u+u_{\epsilon} = \sum\limits_{j=1}^{K}v_{1, j}^{(\epsilon)}\otimes v_{2, j}^{(\epsilon)}\otimes v_{3, j}^{(\epsilon)},
\end{gather*}
\noindent where $v_{i, j}^{(\epsilon)} \in V_{i}^{+}$ and $\|v_{1, j}^{(\epsilon)}\| = \|v_{2, j}^{(\epsilon)}\| = \|v_{3, j}^{(\epsilon)}\|$ for $i = 1, 2, 3$, $j = 1, \hdots, K$ and all $\epsilon > 0$. As $v_{1, j}^{(\epsilon)}\otimes v_{2, j}^{(\epsilon)} \otimes v_{3, j}^{(\epsilon)} \leq u+u_{\epsilon}$ and $\|u+u_{\epsilon}\| \leq \|u\|+1$ for all $\epsilon > 0$, we get that $\|v_{i, j}^{(\epsilon)}\| \leq \sqrt[3]{\|u\|+1}$ for $i = 1, 2, 3$ and $j = 1, \hdots, K$. Using the fact that all of our AOU spaces are finite-dimensional, by compactness we can also assume $v_{i, j}^{(\epsilon)} \rightarrow v_{i, j}$ as $\epsilon \rightarrow 0$ for $i = 1, 2, 3$ and all $j = 1, \hdots, K$. We may then conclude that $u = \sum_{j=1}^{K}v_{1, j}\otimes v_{2, j}\otimes v_{3, j} \in D$.

Let
\begin{gather}\label{elementary_positive_tensors}
	S_{0} = \sum\limits_{p=1}^{\ell}a_{p}\otimes v_{p}^{(1)}, \;\;\;\; T_{0} = \sum\limits_{q = 1}^{s}b_{q}\otimes v_{q}^{(2)}, \;\;\;\; U_{0} = \sum\limits_{r = 1}^{t}c_{r}\otimes v_{r}^{(3)},
\end{gather}
for some $a_{p} \in M_{n}^{+}, v_{p}^{(1)} \in V_{1}^{+}$, $p = 1, \hdots, \ell$, $b_{q} \in M_{m}^{+}, v_{q}^{(2)} \in V_{2}^{+}, q = 1, \hdots, s$ and $c_{r} \in M_{d}^{+}, v_{r}^{(3)} \in V_{3}^{+}, r = 1, \hdots, t$. If $\alpha \in M_{1, nmd}$, then
\begin{gather*}
	\alpha(S_{0}\otimes T_{0}\otimes U_{0})\alpha^{*} = \sum\limits_{p=1}^{\ell}\sum\limits_{q=1}^{s}\sum\limits_{r=1}^{t}(\alpha(a_{p}\otimes b_{q}\otimes c_{r})\alpha^{*})v_{p}^{(1)}\otimes v_{q}^{(2)}\otimes v_{r}^{(3)} \in D.
\end{gather*}
\noindent Now suppose first that $S \in M_{n}({\rm OMAX}(V_{1}))^{+}$ and $\alpha \in M_{1, nmd}$. By the description of the multivariate maximal tensor product as provided in \cite[Proposition 7.2]{gt_one}, if $\epsilon > 0$ then $S+\epsilon 1_{n}$ has the form of $S_{0}$ as in (\ref{elementary_positive_tensors}). Therefore, 
\begin{gather*}
	\alpha(S\otimes T_{0}\otimes U_{0})\alpha^{*}+\epsilon \alpha(1_{n}\otimes T_{0}\otimes U_{0})\alpha^{*} = \alpha((S+\epsilon 1_{n})\otimes T_{0}\otimes U_{0})\alpha^{*} \in D.
\end{gather*}
\noindent Since $\alpha(1_{n}\otimes T_{0}\otimes U_{0})\alpha^{*} \in D$ (using \cite{ptt}), using our initial arguments we conclude that
\begin{gather*}
	\alpha(S\otimes T_{0}\otimes U_{0})\alpha^{*} \in D.
\end{gather*}
\noindent If we now pick $T \in M_{m}({\rm OMAX}(V_{2}))^{+}$ and $U \in M_{d}({\rm OMAX}(V_{3}))^{+}$, using similar arguments as before we may conclude that $\alpha(S\otimes T\otimes U)\alpha^{*} \in D$. 

Let $u \in {\rm max}-\otimes_{j=1}^{3}{\rm OMAX}(V_{j})$ be positive; by the definition of the multivariate maximal tensor product \cite{gt_one}, for every $\epsilon > 0$ there exists $n, m, d \in \bb{N}, S \in M_{n}({\rm OMAX}(V_{1}))^{+}, T \in M_{m}({\rm OMAX}(V_{2}))^{+}, U \in M_{d}({\rm OMAX}(V_{3}))^{+}$ and $\alpha \in M_{1, nmd}$ such that $u+\epsilon 1 = \alpha(S\otimes T\otimes U)\alpha^{*}$. By the previous and first paragraphs, this implies $u \in D$. 
\end{proof}

For a linear functional
\begin{gather*}
	s: \cl{T}_{X_{2}, X_{1}}\otimes \cl{T}_{Y_{2}, Y_{1}}\otimes \cl{T}_{A_{1}, A_{2}}\otimes \cl{T}_{B_{1}, B_{2}}\rightarrow \bb{C}, 
\end{gather*}
\noindent let $\Gamma_{s}: M_{X_{2}Y_{2}A_{1}B_{1}}\rightarrow M_{X_{1}Y_{1}A_{2}B_{2}}$ be the linear map whose Choi matrix coincides with
\begin{gather}\label{choi_matrix_formula}
	\big(s(e_{x_{2}, x_{2}', x_{1}, x_{1}'}\otimes e_{y_{2}, y_{2}', y_{1}, y_{1}'}\otimes e_{a_{1}, a_{1}', a_{2}, a_{2}'}\otimes e_{b_{1}, b_{1}', b_{2}, b_{2}'})\big)_{x_{2}x_{2}', y_{2}y_{2}', a_{1}a_{1}', b_{1}b_{1}'}^{x_{1}x_{1}', y_{1}y_{1}', a_{2}a_{2}', b_{2}b_{2}'}.
\end{gather}
\begin{theorem}\label{state_correspondence}
The map $s \mapsto \Gamma_{s}$ is an affine isomorphism from 
\begin{itemize}
	\item[(i)] the state space of $\cl{T}_{X_{2}, X_{1}}\otimes_{\rm max} \cl{T}_{Y_{2}, Y_{1}}\otimes_{\rm max} \cl{T}_{A_{1}, A_{2}}\otimes_{\rm max} \cl{T}_{B_{1}, B_{2}}$ onto $\cl{Q}_{\rm sns}$;
	\item[(ii)] the state space of $\cl{T}_{X_{2}, X_{1}}\otimes_{\rm c} \cl{T}_{Y_{2}, Y_{1}}\otimes_{\rm c} \cl{T}_{A_{1}, A_{2}}\otimes_{\rm c} \cl{T}_{B_{1}, B_{2}}$ onto $\cl{Q}_{\rm sqc}$;
	\item[(iii)] the state space of $\cl{T}_{X_{2}, X_{1}}\otimes_{\rm min} \cl{T}_{Y_{2}, Y_{1}}\otimes_{\rm min} \cl{T}_{A_{1}, A_{2}}\otimes_{\rm min} \cl{T}_{B_{1}, B_{2}}$ onto $\cl{Q}_{\rm sqa}$.
	\item[(iv)] the state space of ${\rm OMIN}(\cl{T}_{X_{2}, X_{1}})\otimes_{\rm min} {\rm OMIN}(\cl{T}_{Y_{2}, Y_{1}})\otimes_{\rm min} {\rm OMIN}(\cl{T}_{A_{1}, A_{2}})\otimes_{\rm min} {\rm OMIN}(\cl{T}_{B_{1}, B_{2}})$ onto $\cl{Q}_{\rm sloc}$. 
\end{itemize}
\end{theorem}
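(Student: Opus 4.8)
The plan is to view $s \mapsto \Gamma_s$ as an affine bijection at the level of bare linear data and then match the two positivity structures one tensor norm at a time. The correspondence is affine and injective because a functional on the algebraic tensor product $\cl{T}_{X_2,X_1}\otimes\cl{T}_{Y_2,Y_1}\otimes\cl{T}_{A_1,A_2}\otimes\cl{T}_{B_1,B_2}$ is determined by its values on the spanning elementary products of generators, and these values are exactly the entries displayed in (\ref{choi_matrix_formula}). Two facts then hold for every such $s$, regardless of the tensor norm. First, each $\cl{T}_{X_2,X_1}$ carries the relation $\sum_{x_1}e_{x_2,x_2',x_1,x_1}=\delta_{x_2,x_2'}1$ (from $V^*V=I$ for the defining isometry); applying $s$ to this relation tensored with the other three generators and noting that summing the Choi matrix over its diagonal $X_1$-index computes ${\rm Tr}_{X_1}$ shows that $\sum_{x_1}\Gamma_s(\dots)$ equals $\delta_{x_2,x_2'}$ times a quantity independent of $x_2$, which is exactly (\ref{pt_one}); the analogous relations for $Y,A,B$ give (\ref{pt_two})--(\ref{pt_four}). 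Thus $\Gamma_s$ is automatically SQNS, and (summing over all four diagonals) trace preserving precisely when $s(1)=1$. The whole theorem therefore reduces to showing, in each of the four topologies, that $s$ is positive for the indicated tensor structure if and only if the Choi matrix of $\Gamma_s$ satisfies the defining positivity of the corresponding class.

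I would establish (ii) first, as the structural model. For the forward direction, from a state $s$ on the commuting tensor product I produce the four commuting stochastic operator matrices of Definition \ref{sqns_subclass_defn}(i) exactly as in Proposition \ref{dilate_ssop}: by \cite[Theorem 5.2]{tt} the four coordinate restrictions of the GNS data are stochastic operator matrices, while the defining property of the multivariate commuting tensor product forces their entries to commute mutually; assembling $E=E_X\cdot E_Y$, $F=F_A\cdot F_B$ and reading off the vector state gives $\Gamma_s=\Gamma_{E,F,\xi}\in\cl{Q}_{\rm sqc}$. For the converse I would use Theorem \ref{op_sys_incl_to_env} to realise the commuting tensor product completely order isomorphically inside the maximal tensor product of the universal C*-covers, extend $s$ by Arveson's theorem, and apply Stinespring; the resulting representation factors as a product of four $*$-representations with mutually commuting ranges, whose restrictions to the four operator systems are the four mutually commuting stochastic operator matrices required by the corrected Definition \ref{sqns_subclass_defn}(i).

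Cases (iii) and (iv) I would treat by the standard concrete and approximation arguments. For (iii), a unit vector against a product of finite-dimensional representations of the four systems defines, after regrouping the $X,A$ and $Y,B$ factors into the two blocks of Definition \ref{sqns_subclass_defn}(ii), a quantum SQNS correlation, and such a state is positive on the spatial minimal tensor product; conversely, as in the two-leg treatment of \cite{tt}, every state on the minimal tensor product of these finite-dimensional systems is a weak* limit of such product states, so $s\mapsto\Gamma_s$ maps its state space onto the weak* closure $\cl{Q}_{\rm sqa}$ of the quantum SQNS correlations. For (iv) I would pass to duals: by ${\rm OMIN}(V)^d={\rm OMAX}(V^d)$ \cite{ptt}, min--max duality \cite{kptt}, and associativity, states on ${\rm OMIN}(\cl{T}_{X_2,X_1})\otimes_{\rm min}\cdots\otimes_{\rm min}{\rm OMIN}(\cl{T}_{B_1,B_2})$ correspond to positive elements of a maximal tensor product of ${\rm OMAX}$ spaces, which by Lemma \ref{omax_state_extension} are exactly the sums of elementary positive tensors; each such tensor yields a product of four quantum channels via the single-leg correspondence, so $\Gamma_s$ is a convex combination of the form (\ref{sloc_convex_comb}) and hence lies in $\cl{Q}_{\rm sloc}$ by Remark \ref{sloc_convex_remark}.

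For (i) the dilation technique is unavailable --- extending through C*-covers would force commuting representations and reach only $\cl{Q}_{\rm sqc}$ --- so I would argue by operator-system duality. Using the identification $\cl{T}_{X,A}^d\cong\cl{L}_{X,A}$ from \cite{tt}, min--max duality, and associativity, the state space of $\cl{T}_{X_2,X_1}\otimes_{\rm max}\cdots\otimes_{\rm max}\cl{T}_{B_1,B_2}$ is identified, through the very pairing that reads off (\ref{choi_matrix_formula}), with the normalised positive cone of $\cl{L}_{X_2,X_1}\otimes_{\rm min}\cdots\otimes_{\rm min}\cl{L}_{B_1,B_2}$; since the minimal tensor product respects operator subsystems, that cone consists of the elements of $\cl{L}_{X_2,X_1}\otimes\cdots\otimes\cl{L}_{B_1,B_2}$ that are positive as operators, and membership in the $\cl{L}$'s is precisely the SQNS marginal condition already recorded, so these are exactly the positive Choi matrices of SQNS channels, namely $\cl{Q}_{\rm sns}$. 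I expect the principal difficulty to lie not here but in the converse of (ii): extracting \emph{four} mutually commuting stochastic operator matrices, as the corrected Definition \ref{sqns_subclass_defn}(i) demands, from a single Stinespring dilation. This is exactly the point at which \cite[Lemma 5.8]{gt_one} failed, and it is Theorem \ref{op_sys_incl_to_env} that makes the four-fold commuting structure (rather than merely pairwise) available and reconciles the argument with the revised definition; checking that all of the multivariate machinery --- associativity of each tensor norm, the completely order isomorphic inclusions into the maximal tensor product of the enveloping C*-algebras, and the four-leg duality --- assembles consistently across the four legs is the real crux.
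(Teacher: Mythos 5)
Your proposal follows the paper's proof essentially step for step: (i) via the duality $(\cl{T}_{X_{2},X_{1}}\otimes_{\rm max}\cdots\otimes_{\rm max}\cl{T}_{B_{1},B_{2}})^{\rm d}\cong_{\rm c.o.i.}\cl{L}_{X_{2},X_{1}}\otimes_{\rm min}\cdots\otimes_{\rm min}\cl{L}_{B_{1},B_{2}}$ together with injectivity of the minimal tensor product and the observation that membership of the Choi matrix in the $\cl{L}$'s encodes the SQNS marginal conditions; (ii) via Theorem \ref{op_sys_incl_to_env} and a GNS/Stinespring dilation over the maximal tensor product of the right C$^{*}$-algebras; (iii) via finite-dimensional representations, the canonical shuffle, and weak$^{*}$ approximation by convex combinations of vector states; (iv) via the ${\rm OMIN}$/${\rm OMAX}$ duality and Lemma \ref{omax_state_extension}. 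The one place where the write-up slips is in (ii): the two ``directions'' you describe are in fact the same direction --- both start from a state $s$ and extract the four mutually commuting stochastic operator matrices --- so the surjectivity half (given $\Gamma\in\cl{Q}_{\rm sqc}$, produce a state $s$ with $\Gamma=\Gamma_{s}$) is never actually argued. That half needs neither Arveson nor Stinespring: the four mutually commuting stochastic operator matrices supplied by Definition \ref{sqns_subclass_defn}(i) yield, via \cite[Theorem 5.2]{tt}, four $*$-representations of $\cl{C}_{X_{2},X_{1}},\cl{C}_{Y_{2},Y_{1}},\cl{C}_{A_{1},A_{2}},\cl{C}_{B_{1},B_{2}}$ with mutually commuting ranges, hence a single $*$-representation of their maximal tensor product, and the associated vector state restricts, by Theorem \ref{op_sys_incl_to_env}, to the required state on the commuting tensor product of the $\cl{T}$'s. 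With that direction supplied, the argument matches the paper's.
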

\begin{proof}
(i) Let $\Gamma \in \cl{Q}_{\rm sns}$. If 
\begin{gather*}
	C = \big(C_{x_{2}x_{2}', y_{2}y_{2}', a_{1}a_{1}', b_{1}b_{1}'}^{x_{1}x_{1}', y_{1}y_{1}', a_{2}a_{2}', b_{2}b_{2}'}\big)
\end{gather*}
\noindent is the Choi matrix of $\Gamma$ (where the indices above range over the corresponding sets $X_{i}, Y_{i}, A_{i}, B_{i}, i=1, 2$), then $C \in M_{X_{2}Y_{2}A_{1}B_{1}, X_{1}Y_{1}A_{2}B_{2}}^{+}$. We also note that condition (\ref{pt_one}) implies that for $y_{i}, y_{i}' \in Y_{i}, a_{i}, a_{i}' \in A_{i}$ and $b_{i}, b_{i}' \in B_{i}, i = 1, 2$ there exists a constant $C_{y_{2}y_{2}', a_{1}a_{1}', b_{1}b_{1}'}^{y_{1}y_{1}', a_{2}a_{2}', b_{2}b_{2}'} \in \bb{C}$ such that 
\begin{gather*}
	\sum\limits_{x_{1}}C_{x_{2}x_{2}', y_{2}y_{2}', a_{1}a_{1}', b_{1}b_{1}'}^{x_{1}x_{1}, y_{1}y_{1}', a_{2}a_{2}', b_{2}b_{2}'} = \delta_{x_{2}, x_{2}'}C_{y_{2}y_{2}', a_{1}a_{1}', b_{1}b_{1}'}^{y_{1}y_{1}', a_{2}a_{2}', b_{2}b_{2}'},
\end{gather*}
\noindent for all $x_{2}, x_{2}' \in X_{2}$. This equality implies
\begin{gather*}
	L_{\rho}(C) \in \cl{L}_{X_{2}, X_{1}} \text{ for all } \rho \in M_{Y_{2}Y_{1}A_{1}A_{2}B_{1}B_{2}}.
\end{gather*}
Similarly, using conditions (\ref{pt_two})-(\ref{pt_four}) we can make a similar argument showing
\begin{gather*}
	L_{\rho}(C) \in \cl{L}_{Y_{2}, Y_{1}} \text{ for all } \rho \in M_{X_{2}X_{1}A_{1}A_{2}B_{1}B_{2}}, 
\\	L_{\rho}(C) \in \cl{L}_{A_{1}, A_{2}} \text{ for all } \rho \in M_{X_{2}X_{1}Y_{2}Y_{1}B_{1}B_{2}},
\end{gather*}
\noindent and
\begin{gather*}
	L_{\rho}(C) \in \cl{L}_{B_{1}, B_{2}} \text{ for all } \rho \in M_{X_{2}X_{1}Y_{2}Y_{1}A_{1}A_{2}}.
\end{gather*}
\noindent Thus, 
\begin{gather*}
	C \in (\cl{L}_{X_{2}, X_{1}}\otimes \cl{L}_{Y_{2}, Y_{1}}\otimes \cl{L}_{A_{1}, A_{2}}\otimes \cl{L}_{B_{1}, B_{2}}) \cap M_{X_{2}Y_{2}A_{1}B_{1}, X_{1}Y_{1}A_{2}B_{2}}^{+};
\end{gather*}
\noindent by the injectivity of the minimal operator system tensor product, $C \in (\cl{L}_{X_{2}, X_{1}}\otimes_{\rm min} \cl{L}_{Y_{2}, Y_{1}}\otimes_{\rm min} \cl{L}_{A_{1}, A_{2}}\otimes_{\rm min} \cl{L}_{B_{1}, B_{2}})^{+}$. By \cite[Proposition 1.9]{fp}, and \cite[Proposition 5.5]{tt}, 
\begin{gather*}
	(\cl{T}_{X_{2}, X_{1}}\otimes_{\rm max} \cl{T}_{Y_{2}, Y_{1}}\otimes_{\rm max} \cl{T}_{A_{1}, A_{2}}\otimes_{\rm max} \cl{T}_{B_{1}, B_{2}})^{\rm d} \cong_{\rm c.o.i.} 
\\	\;\;\;\;\;\;\;\; \cl{L}_{X_{2}, X_{1}}\otimes_{\rm min} \cl{L}_{Y_{2}, Y_{1}}\otimes_{\rm min} \cl{L}_{A_{1}, A_{2}}\otimes_{\rm min} \cl{L}_{B_{1}, B_{2}}.
\end{gather*}
Arguing now as in \cite[Theorem 6.2]{tt}, this establishes the claim.

(ii) First suppose that
\begin{gather*}
	s: \cl{T}_{X_{2}, X_{1}}\otimes_{\rm c} \cl{T}_{Y_{2}, Y_{1}}\otimes_{\rm c} \cl{T}_{A_{1}, A_{2}}\otimes_{\rm c} \cl{T}_{B_{1}, B_{2}}\rightarrow \bb{C}
\end{gather*}
\noindent is a state. By Theorem \ref{op_sys_incl_to_env}, we may consider the state $s$ as the restriction of a state
\begin{gather*}
	\tilde{s}: \cl{C}_{X_{2}, X_{1}}\otimes_{\rm max} \cl{C}_{Y_{2}, Y_{1}}\otimes_{\rm max} \cl{C}_{A_{1}, A_{2}}\otimes_{\rm max} \cl{C}_{B_{1}, B_{2}}\rightarrow \bb{C}.
\end{gather*}
\noindent Applying the GNS construction to $\tilde{s}$ and using \cite[Theorem 5.2]{tt}, we obtain a Hilbert space $H$, a unit vector $\xi \in H$ and mutually commuting stochastic operator matrices
\begin{gather*}
	(E_{x_{2}, x_{2}', x_{1}, x_{1}'})_{x_{2}, x_{2}'}^{x_{1}, x_{1}'} \in M_{X_{2}X_{1}}\otimes \cl{B}(H), \;\;\;\; (E_{y_{2}, y_{2}', y_{1}, y_{1}'})_{y_{2}, y_{2}'}^{y_{1}, y_{1}'} \in M_{Y_{2}Y_{1}}\otimes \cl{B}(H), 
\\	(F_{a_{1}, a_{1}', a_{2}, a_{2}'})_{a_{1}, a_{1}'}^{a_{2}, a_{2}'} \in M_{A_{1}A_{2}}\otimes \cl{B}(H), \;\;\;\; (F_{b_{1}, b_{1}', b_{2}, b_{2}'})_{b_{1}, b_{1}'}^{b_{2}, b_{2}'} \in M_{B_{1}B_{2}}\otimes \cl{B}(H)
\end{gather*}
\noindent satisfying
\begin{gather*}
	s(e_{x_{2}, x_{2}', x_{1}, x_{1}}\otimes e_{y_{2}, y_{2}', y_{1}, y_{1}'}\otimes e_{a_{1}, a_{1}', a_{2}, a_{2}'}\otimes e_{b_{1}, b_{1}', b_{2}, b_{2}'}) = 
\\ \;\;\;\;\;\;\;\; \langle E_{x_{2}, x_{2}', x_{1}, x_{1}'}E_{y_{2}, y_{2}', y_{1}, y_{1}'}F_{a_{1}, a_{1}', a_{2}, a_{2}'}F_{b_{1}, b_{1}', b_{2}, b_{2}'}, \xi\xi^{*}\rangle
\end{gather*}
\noindent for all $x_{i}, x_{i}', y_{i}, y_{i}', a_{i}, a_{i}', b_{i}, b_{i}', i = 1, 2$. Setting $E_{x_{2}x_{2}', y_{2}y_{2}'}^{x_{1}x_{1}', y_{1}y_{1}'} = E_{x_{2}, x_{2}', x_{1}, x_{1}'}E_{y_{2}, y_{2}', y_{1}, y_{1}'}$ and $F_{a_{1}a_{1}', b_{1}b_{1}'}^{a_{2}a_{2}', b_{2}b_{2}'} = F_{a_{1}, a_{1}', a_{2}, a_{2}'}F_{b_{1}, b_{1}', b_{2}, b_{2}'}$, we have that the stochastic operator matrices 
\begin{gather*}
	E = \big(E_{x_{2}x_{2}', y_{2}y_{2}'}^{x_{1}x_{1}', y_{1}y_{1}'}\big)_{x_{2}x_{2}', y_{2}y_{2}'}^{x_{1}x_{1}', y_{1}y_{1}'}, \;\;\;\; F = \big(F_{a_{1}a_{1}', b_{1}b_{1}'}^{a_{2}a_{2}', b_{2}b_{2}'}\big)_{a_{1}a_{1}', b_{1}b_{1}'}^{a_{2}a_{2}', b_{2}b_{2}'}
\end{gather*}
\noindent are dilatable and have mutually commuting entries. Thus, $\Gamma_{s} = \Gamma_{E, F, \xi}$ with $\Gamma_{s} \in \cl{Q}_{\rm sqc}$. 

Conversely, suppose that $\Gamma \in \cl{Q}_{\rm sqc}$; by definition, there exists a Hilbert space $K$, a unit vector $\eta \in K$ and mutually commuting strongly stochastic operator matrices
\begin{gather*}
	E_{X} = (E_{x_{2}x_{2}', x_{1}x_{1}'})_{x_{2}x_{2}', x_{1}x_{1}'}, \;\;\;\; E_{Y} = (E_{y_{2}y_{2}', y_{1}y_{1}'})_{y_{2}y_{2}', y_{1}y_{1}'}, 
\\	F_{A} = (F_{a_{1}a_{1}', a_{2}a_{2}'})_{a_{1}a_{1}', a_{2}a_{2}'}, \;\;\;\; F_{B} = (F_{b_{1}b_{1}', b_{2}b_{2}'})_{b_{1}b_{1}', b_{2}b_{2}'}
\end{gather*}
\noindent acting on $K$ so that $\Gamma = \Gamma_{E, F, \eta}$, for $E = E_{X}\cdot E_{Y}$ and $F = F_{A} \cdot F_{B}$. Let $\pi_{X}, \pi_{Y}, \pi_{A}$ and $\pi_{B}$ be the (unital) $*$-representations of $\cl{C}_{X_{2}, X_{1}}, \cl{C}_{Y_{2}, Y_{1}}, \cl{C}_{A_{1}, A_{2}}$ and $\cl{C}_{B_{1}, B_{2}}$ on $\cl{B}(K)$ arising from $E_{X}, E_{Y}, F_{A}$, and $F_{B}$ respectively (see \cite[Theorem 5.2]{tt}). Then $\pi := \pi_{X}\otimes \pi_{Y}\otimes \pi_{A}\otimes \pi_{B}$ is a unital $*$-representation of $\cl{C}_{X_{2}, X_{1}}\otimes_{\rm max} \cl{C}_{Y_{2}, Y_{1}}\otimes_{\rm max} \cl{C}_{A_{1}, A_{2}}\otimes_{\rm max} \cl{C}_{B_{1}, B_{2}}$ on $\cl{B}(K)$. Using Theorem \ref{op_sys_incl_to_env} once more, if we let $s$ be the restriction to $\cl{T}_{X_{2}, X_{1}}\otimes_{\rm c}\cl{T}_{Y_{2}, Y_{1}}\otimes_{\rm c} \cl{T}_{A_{1}, A_{2}}\otimes_{\rm c} \cl{T}_{B_{1}, B_{2}}$ of the state
\begin{gather*}
	\tilde{s}: \cl{C}_{X_{2}, X_{1}}\otimes_{\rm max} \cl{C}_{Y_{2}, Y_{1}}\otimes_{\rm max} \cl{C}_{A_{1}, A_{2}}\otimes_{\rm max} \cl{C}_{B_{1}, B_{2}} \rightarrow \bb{C}, 
\\ 	w \mapsto \langle \pi(w)\eta, \eta\rangle,
\end{gather*}
\noindent it is clear that $\Gamma = \Gamma_{s}$. 

(iii) First, let $\Gamma$ be a quantum SQNS correlation. By definition, there exists stochastic operator matrices
\begin{gather*}
	M_{X} = (M_{x_{2}x_{2}', x_{1}x_{1}'})_{x_{2}x_{2}', x_{1}x_{1}'}, \;\;\;\; M_{A} = (M_{a_{1}a_{1}', a_{2}a_{2}'})_{a_{1}a_{1}', a_{2}a_{2}'}, 
\\ 	N_{Y} = (N_{y_{2}y_{2}', y_{1}y_{1}'})_{y_{2}y_{2}', y_{1}y_{1}'}, \;\;\;\; N_{B} = (N_{b_{1}b_{1}', b_{2}b_{2}'})_{b_{1}b_{1}', b_{2}b_{2}'}
\end{gather*}
\noindent acting on finite dimensional Hilbert spaces $H_{X}, H_{A}, H_{Y}$ and $H_{B}$, respectively, along with unit vector $\eta \in H_{X}\otimes H_{A}\otimes H_{Y}\otimes H_{B}$ such that
\begin{gather*}
	\langle \Gamma(\epsilon_{x_{2}x_{2}'}\otimes \epsilon_{y_{2}y_{2}'}\otimes \epsilon_{a_{1}a_{1}'}\otimes \epsilon_{b_{1}b_{1}'}), \epsilon_{x_{1}x_{1}'}\otimes \epsilon_{y_{1}y_{1}'}\otimes \epsilon_{a_{2}a_{2}'}\otimes \epsilon_{b_{2}b_{2}'}\rangle =
\\ \bigg\langle \big(M_{x_{2}x_{2}', x_{1}x_{1}'}\otimes M_{a_{1}a_{1}', a_{2}a_{2}'}\otimes N_{y_{2}y_{2}', y_{1}y_{1}'}\otimes N_{b_{1}b_{1}', b_{2}b_{2}'}\big), \eta\eta^{*}\bigg\rangle
\end{gather*}
\noindent for $x_{i}, x_{i}' \in X_{i}, y_{i}, y_{i}' \in Y_{i}, a_{i}, a_{i}' \in A_{i}, b_{i}, b_{i}' \in B_{i}, i = 1, 2$. Let $\pi_{X}: \cl{C}_{X_{2}, X_{1}}\rightarrow \cl{B}(H_{X}), \pi_{Y}: \cl{C}_{Y_{2}, Y_{1}}\rightarrow \cl{B}(H_{Y}), \pi_{A}: \cl{C}_{A_{1}, A_{2}}\rightarrow \cl{B}(H_{A})$ and $\pi_{B}: \cl{C}_{B_{1}, B_{2}}\rightarrow \cl{B}(H_{B})$ be the unital $*$-representations arising from $M_{X}, M_{A}, N_{Y}$, and $N_{B}$ respectively. Let $\pi := \pi_{X}\otimes \pi_{Y}\otimes \pi_{A}\otimes \pi_{B}$, $\tilde{\eta} \in H_{X}\otimes H_{Y}\otimes H_{A}\otimes H_{B}$ be the unit vector obtained from applying the canonical shuffle to $\eta$, $\tilde{s}$ be the state given by
\begin{gather*}
	\tilde{s}: \cl{C}_{X_{2}, X_{1}}\otimes_{\rm min} \cl{C}_{Y_{2}, Y_{1}}\otimes_{\rm min} \cl{C}_{A_{1}, A_{2}}\otimes_{\rm min} \cl{C}_{B_{1}, B_{2}}\rightarrow \bb{C}, 
\\	w \mapsto \langle \pi(w)\tilde{\eta}, \tilde{\eta}\rangle,
\end{gather*}
\noindent and $s$ be the restriction of $\tilde{s}$ to $\cl{T}_{X_{2}, X_{1}}\otimes_{\rm min} \cl{T}_{Y_{2}, Y_{1}}\otimes_{\rm min} \cl{T}_{A_{1}, A_{2}}\otimes_{\rm min}\cl{T}_{B_{1}, B_{2}}$; it is clear that $\Gamma = \Gamma_{s}$ in this case. 

If $\Gamma \in \cl{Q}_{\rm sqa}$, let $(\Gamma_{n})_{n \in \bb{N}}$ be a sequence of quantum SQNS correlations with $\Gamma_{n} \rightarrow \Gamma$ as $n\rightarrow \infty$. By the previous paragraph, for each $n \in \bb{N}$ we may pick a state $s_{n}: \cl{T}_{X_{2}, X_{1}}\otimes_{\rm min} \cl{T}_{Y_{2}, Y_{1}}\otimes_{\rm min} \cl{T}_{A_{1}, A_{2}}\otimes_{\rm min}\cl{T}_{B_{1}, B_{2}}\rightarrow \bb{C}$ such that $\Gamma_{n} = \Gamma_{s_{n}}$. If $s$ is a cluster point (in the ${\rm weak}^{*}$ topology) of the sequence of states $(s_{n})_{n \in \bb{N}}$, we have $\Gamma = \Gamma_{s}$. 

Now let $s: \cl{T}_{X_{2}, X_{1}}\otimes_{\rm min} \cl{T}_{Y_{2}, Y_{1}}\otimes_{\rm min} \cl{T}_{A_{1}, A_{2}}\otimes_{\rm min} \cl{T}_{B_{1}, B_{2}}\rightarrow \bb{C}$ be a state, and using the fact that the minimal operator system tensor product is injective (see \cite{gt_one}) let $\tilde{s}: \cl{C}_{X_{2}, X_{1}}\otimes_{\rm min} \cl{C}_{Y_{2}, Y_{1}}\otimes_{\rm min} \cl{C}_{A_{1}, A_{2}}\otimes_{\rm min} \cl{C}_{B_{1}, B_{2}}\rightarrow \bb{C}$ be an extension of $s$. By \cite[Corollary 4.3.10]{kr},  $\tilde{s}$ can be approximated in the ${\rm weak}^{*}$ topology by elements of the convex hull of vector states on $\pi_{X}(\cl{C}_{X_{2}, X_{1}})\otimes_{\rm min} \pi_{Y}(\cl{C}_{Y_{2}, Y_{1}})\otimes_{\rm min} \pi_{A}(\cl{C}_{A_{1}, A_{2}})\otimes_{\rm min} \pi_{B}(\cl{C}_{B_{1}, B_{2}})$ (where $\pi_{X}, \pi_{Y}, \pi_{A},$ and $\pi_{B}$ are unital $*$-representations of $\cl{C}_{X_{2}, X_{1}}, \cl{C}_{Y_{2}, Y_{1}}, \cl{C}_{A_{1}, A_{2}}$ and $\cl{C}_{B_{1}, B_{2}}$, respectively). Using an argument similar to the proof of \cite[Theorem 5.6]{bhtt} or \cite[Theorem 6.5]{tt}, we can show that $\Gamma$ is a limit of quantum SQNS correlations. 

(iv) This proof is along the lines of the proof for \cite[Theorem 6.7]{tt}; we include the details for the convenience of the reader. We first let $s$ be a state on
\begin{gather*}
	{\rm OMIN}(\cl{T}_{X_{2}, X_{1}})\otimes_{\rm min} {\rm OMIN}(\cl{T}_{Y_{2}, Y_{1}})\otimes_{\rm min} {\rm OMIN}(\cl{T}_{A_{1}, A_{2}})\otimes_{\rm min} {\rm OMIN}(\cl{T}_{B_{1}, B_{2}}).
\end{gather*}
\noindent By \cite[Theorem 9.9]{kavruk_thesis} and \cite[Proposition 1.9]{fp}, we may consider $s$ as an element of
\begin{gather*}
	\hspace{-0.65cm}\bigg({\rm OMAX}(\cl{T}_{X_{2}, X_{1}})\otimes_{\rm max} {\rm OMAX}(\cl{T}_{Y_{2}, Y_{1}})\otimes_{\rm max} {\rm OMAX}(\cl{T}_{A_{1}, A_{2}})\otimes_{\rm max} {\rm OMAX}(\cl{T}_{B_{1}, B_{2}})\bigg)^{+}.
\end{gather*}
\noindent By Lemma \ref{omax_state_extension}, there exist states $\phi_{X}^{(j)} \in \big(\cl{T}_{X_{2}, X_{1}}\big)^{+}, \phi_{Y}^{(j)} \in \big(\cl{T}_{Y_{2}, Y_{1}}\big)^{+}, \phi_{A}^{(j)} \in \big(\cl{T}_{A_{1}, A_{2}}\big)^{+}$ and $\phi_{B}^{(j)} \in \big(\cl{T}_{B_{1}, B_{2}}\big)^{+}$, and non-negative scalars $\lambda_{j}, j= 1, \hdots, k$ such that $s = \sum\limits_{j=1}^{k}\lambda_{j}\phi_{X}^{(j)}\otimes \phi_{Y}^{(j)}\otimes\phi_{A}^{(j)}\otimes\phi_{B}^{(j)}$. Set
\begin{gather*}
	E_{X}^{(j)} = (\phi_{X}^{(j)}(e_{x_{2}, x_{2}', x_{1}, x_{1}'}))_{x_{2}x_{2}', x_{1}x_{1}'}, \;\;\;\; E_{Y}^{(j)} = (\phi_{Y}^{(j)}(e_{y_{2}, y_{2}', y_{1}, y_{1}'}))_{y_{2}y_{2}', y_{1}y_{1}'}, 
\\	E_{A}^{(j)} = (\phi_{A}^{(j)}(e_{a_{1}, a_{1}', a_{2}, a_{2}'}))_{a_{1}a_{1}', a_{2}a_{2}'}, \;\;\;\; E_{B}^{(j)} = (\phi_{B}^{(j)}(e_{b_{1}, b_{1}', b_{2}, b_{2}'}))_{b_{1}b_{1}', b_{2}b_{2}'},
\end{gather*}
\noindent and let $\Phi_{X}^{(j)}: M_{X_{2}}\rightarrow M_{X_{1}}, \Phi_{Y}^{(j)}: M_{Y_{2}}\rightarrow M_{Y_{1}}, \Phi_{A}^{(j)}: M_{A_{1}}\rightarrow M_{A_{2}}$, and $\Phi_{B}^{(j)}: M_{B_{1}}\rightarrow M_{B_{2}}$ be the quantum channels with Choi matrices $E_{X}^{(j)}, E_{Y}^{(j)}, E_{A}^{(j)}$, and $E_{B}^{(j)}$ (respectively) for $j = 1, \hdots, k$. Clearly,
\begin{gather}\label{convex_sum_loc_channels}
	\Gamma_{s} = \sum\limits_{j=1}^{k}\lambda_{j}\Phi_{X}^{(j)}\otimes \Phi_{Y}^{(j)}\otimes \Phi_{A}^{(j)}\otimes \Phi_{B}^{(j)}.
\end{gather}
\noindent By Remark \ref{sloc_convex_remark}, this shows $\Gamma_{s} \in \cl{Q}_{\rm sloc}$.

Now suppose $\Gamma$ is of the form (\ref{convex_sum_loc_channels}), with $s$ a functional on $\cl{T}_{X_{2}, X_{1}}\otimes \cl{T}_{Y_{2}, Y_{1}}\otimes \cl{T}_{A_{1}, A_{2}}\otimes \cl{T}_{B_{1}, B_{2}}$ such that $\Gamma = \Gamma_{s}$. Let $E_{X}^{(j)} \in (M_{X_{2}}\otimes M_{X_{1}})^{+}$ (resp. $E_{Y}^{(j)} \in (M_{Y_{2}}\otimes M_{Y_{1}})^{+}, E_{A}^{(j)} \in (M_{A_{1}}\otimes M_{A_{2}})^{+}$, and $E_{B}^{(j)} \in (M_{B_{1}}\otimes M_{B_{2}})^{+}$) be the Choi matrix of $\Phi_{X}^{(j)}$ (resp. $\Phi_{Y}^{(j)}, \Phi_{A}^{(j)}$ and $\Phi_{B}^{(j)}$); then $E_{X}^{(j)}, E_{Y}^{(j)}, E_{A}^{(j)}$, and $E_{B}^{(j)}$ are stochastic operator matrices acting on $\bb{C}$. By \cite[Theorem 5.2]{tt}, there exist positive functionals $\phi_{X}^{(j)}: \cl{T}_{X_{2}, X_{1}}\rightarrow \bb{C}, \phi_{Y}^{(j)}: \cl{T}_{Y_{2}, Y_{1}}\rightarrow \bb{C}, \phi_{A}^{(j)}: \cl{T}_{A_{1}, A_{2}}\rightarrow \bb{C}$, and $\phi_{B}^{(j)}: \cl{T}_{B_{1}, B_{2}}\rightarrow \bb{C}$ such that
\begin{gather*}
	E_{X}^{(j)} = (\phi_{X}^{(j)}(e_{x_{2}, x_{2}', x_{1}, x_{1}'}))_{x_{2}x_{2}', x_{1}x_{1}'}, \;\;\;\; E_{Y}^{(j)} = (\phi_{Y}^{(j)}(e_{y_{2}, y_{2}', y_{1}, y_{1}'}))_{y_{2}y_{2}', y_{1}y_{1}'}, 
\\	E_{A}^{(j)} = (\phi_{A}^{(j)}(e_{a_{1}, a_{1}', a_{2}, a_{2}'}))_{a_{1}a_{1}', a_{2}a_{2}'}, \;\;\;\; E_{B}^{(j)} = (\phi_{B}^{(j)}(e_{b_{1}, b_{1}', b_{2}, b_{2}'}))_{b_{1}b_{1}', b_{2}b_{2}'},
\end{gather*}
\noindent for $j = 1, \hdots, k$. It is thus straightforward to see that $s$ is the functional corresponding to 
\begin{gather*}
	\sum\limits_{j=1}^{k}\lambda_{j}\phi_{X}^{(j)}\otimes \phi_{Y}^{(j)}\otimes \phi_{A}^{(j)}\otimes \phi_{B}^{(j)},
\end{gather*}
\noindent and thus, by Lemma \ref{omax_state_extension} is a state on 
\begin{gather*}
	{\rm OMIN}(\cl{T}_{X_{2}, X_{1}})\otimes_{\rm min} {\rm OMIN}(\cl{T}_{Y_{2}, Y_{1}})\otimes_{\rm min} {\rm OMIN}(\cl{T}_{A_{1}, A_{2}})\otimes_{\rm min} {\rm OMIN}(\cl{T}_{B_{1}, B_{2}}),
\end{gather*}
\noindent as claimed.
\end{proof}
\begin{remark}
\rm We note that, using an almost identical argument as in (iv), we have an affine isomorphism between the state space of ${\rm OMIN}(\cl{S}_{X_{2}, X_{1}})\otimes_{\rm min} {\rm OMIN}(\cl{S}_{Y_{2}, Y_{1}})\otimes_{\rm min} {\rm OMIN}(\cl{S}_{A_{1}, A_{2}})\otimes_{\rm min} {\rm OMIN}(\cl{S}_{B_{1}, B_{2}})$ and $\cl{C}_{\rm sloc}$; the classical analogue for cases (i)-(iii) are addressed in \cite[Theorem 7.11]{gt_one}.
\end{remark}
\begin{corollary}\label{qsqc_closed}
The set $\cl{Q}_{\rm sqc}$ is closed and convex.
\end{corollary}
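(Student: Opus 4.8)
The plan is to deduce the corollary directly from the affine isomorphism established in Theorem \ref{state_correspondence}(ii), which identifies $\cl{Q}_{\rm sqc}$ with the state space of the operator system
$$\cl{S} := \cl{T}_{X_{2}, X_{1}}\otimes_{\rm c} \cl{T}_{Y_{2}, Y_{1}}\otimes_{\rm c} \cl{T}_{A_{1}, A_{2}}\otimes_{\rm c} \cl{T}_{B_{1}, B_{2}}$$
through the parametrization $s \mapsto \Gamma_{s}$. Convexity will be immediate: the state space of any operator system is convex, and by the Choi-matrix formula (\ref{choi_matrix_formula}) the map $s \mapsto \Gamma_{s}$ is affine (indeed linear in $s$), so it carries convex combinations of states to convex combinations of correlations; hence its range $\cl{Q}_{\rm sqc}$ is convex.

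For closedness I would first record that each of $\cl{T}_{X_{2}, X_{1}}$, $\cl{T}_{Y_{2}, Y_{1}}$, $\cl{T}_{A_{1}, A_{2}}$ and $\cl{T}_{B_{1}, B_{2}}$ is finite dimensional, being the span of the finitely many generators $e_{\cdot, \cdot, \cdot, \cdot}$; consequently $\cl{S}$ is finite dimensional and its state space is compact (in the weak$^{*}$, equivalently the norm, topology) by the Banach--Alaoglu theorem together with the fact that the states form a weak$^{*}$-closed subset of the dual unit ball. Next I would verify that $s \mapsto \Gamma_{s}$ is continuous: by (\ref{choi_matrix_formula}) every entry of the Choi matrix of $\Gamma_{s}$ is the value of $s$ at a fixed element $e_{\cdot}\otimes e_{\cdot}\otimes e_{\cdot}\otimes e_{\cdot}$ of $\cl{S}$, and evaluation at a fixed element is weak$^{*}$-continuous, so $s \mapsto \Gamma_{s}$ is continuous into the finite-dimensional space of linear maps $M_{X_{2}Y_{2}A_{1}B_{1}}\rightarrow M_{X_{1}Y_{1}A_{2}B_{2}}$.

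Finally, the continuous image of a compact set is compact, so $\cl{Q}_{\rm sqc}$ is a compact subset of a finite-dimensional (hence Hausdorff) space and is therefore closed. I do not anticipate a genuine obstacle here, as the mathematical content is entirely carried by Theorem \ref{state_correspondence}(ii); the only point deserving a moment of care is the compactness of the state space, which rests on the finite dimensionality of the four generating operator systems.
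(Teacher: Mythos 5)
Your argument is correct and follows essentially the same route as the paper: both deduce the result from the affine parametrization of $\cl{Q}_{\rm sqc}$ by the state space of $\cl{T}_{X_{2}, X_{1}}\otimes_{\rm c} \cl{T}_{Y_{2}, Y_{1}}\otimes_{\rm c} \cl{T}_{A_{1}, A_{2}}\otimes_{\rm c} \cl{T}_{B_{1}, B_{2}}$ in Theorem \ref{state_correspondence}(ii), together with weak$^{*}$-compactness of that state space. Your version supplies slightly more detail (continuity of $s\mapsto\Gamma_{s}$ via the Choi-matrix entries, and needing only continuity rather than the homeomorphism the paper invokes), but the mathematical content is identical.
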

\begin{proof}
By Theorem \ref{state_correspondence}, it is straightforward to show that the affine mapping $s \mapsto \Gamma_{s}$ is also homeomorphism when the state space of $\cl{T}_{X_{2}X_{1}}\otimes_{\rm c} \cl{T}_{Y_{2}Y_{1}}\otimes_{\rm c} \cl{T}_{A_{1}A_{2}}\otimes_{\rm c}\cl{T}_{B_{1}B_{2}}$ is equipped with the ${\rm weak}^{*}$-topology. As the state space will be ${\rm weak}^{*}$-compact, the range of this homeomorphism must be (convex and) closed. 
\end{proof}
\begin{remark}
\rm By \cite[Theorem 7.11]{gt_one}, we may use an identical argument as in Corollary \ref{qsqc_closed} to show that $\cl{C}_{\rm sqc}$ is closed and convex. 
\end{remark} 

 \begin{remark}\label{reduction_remark}
\rm Suppose that for finite sets $X_{i}, Y_{i}, A_{i}, B_{i}, i = 1, 2$ we make the additional restriction that $Y_{i} = B_{i} = [1], i = 1, 2$. One may easily verify that for ${\rm t} \in \{\rm loc, q, qa, qc, ns\}$ we have the reduction(s)
\begin{gather*}
	\cl{Q}_{\rm st}(X_{2}[1], A_{1}[1], X_{1}[1], A_{2}[1]) = \cl{Q}_{\rm t}(X_{2}, A_{1}, X_{1}, A_{2}),
\\	\cl{C}_{\rm st}(X_{2}[1], A_{1}[1], X_{1}[1], A_{2}[1]) = \cl{C}_{\rm t}(X_{2}, A_{1}, X_{1}, A_{2}).
\end{gather*}
\end{remark}
\begin{theorem}
For all finite sets $X_{i}, Y_{i}, A_{i}, B_{i}, i = 1, 2$ of sufficiently large cardinality, the following hold true:
\begin{itemize}
	\item[(i)] $\cl{Q}_{\rm sqa}(X_{2}Y_{2}, A_{1}B_{1}, X_{1}Y_{1}, A_{2}B_{2}) \neq \cl{Q}_{\rm sqc}(X_{2}Y_{2}, A_{1}B_{1}, X_{1}Y_{1}, A_{2}B_{2})$.
	\item[(ii)] $\cl{T}_{X_{2}, X_{1}}\otimes_{\rm c} \cl{T}_{Y_{2}, Y_{1}}\otimes_{\rm c} \cl{T}_{A_{1}, A_{2}}\otimes_{\rm c} \cl{T}_{B_{1}, B_{2}} \neq \cl{T}_{X_{2}, X_{1}}\otimes_{\rm min}\cl{T}_{Y_{2}, Y_{1}}\otimes_{\rm min} \cl{T}_{A_{1}, A_{2}}\otimes_{\rm min}\cl{T}_{B_{1}, B_{2}}$. 
	\item[(iii)] $\cl{C}_{\rm sqa}(X_{2}Y_{2}, A_{1}B_{1}, X_{1}Y_{1}, A_{2}B_{2}) \neq \cl{C}_{\rm sqc}(X_{2}Y_{2}, A_{1}B_{1}, X_{1}Y_{1}, A_{2}B_{2})$.
	\item[(iv)] $\cl{S}_{X_{2}, X_{1}}\otimes_{\rm c} \cl{S}_{Y_{2}, Y_{1}}\otimes_{\rm c} \cl{S}_{A_{1}, A_{2}}\otimes_{\rm c} \cl{S}_{B_{1}, B_{2}} \neq \cl{S}_{X_{2}, X_{1}}\otimes_{\rm min}\cl{S}_{Y_{2}, Y_{1}}\otimes_{\rm min} \cl{S}_{A_{1}, A_{2}}\otimes_{\rm min}\cl{S}_{B_{1}, B_{2}}$
\end{itemize}
\end{theorem}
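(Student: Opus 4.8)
The plan is to prove the four assertions in the order (iii) $\Rightarrow$ (i) $\Rightarrow$ (ii) and (iii) $\Rightarrow$ (iv), so that the only genuinely hard input is isolated in (iii) and everything else is a formal consequence of correspondences already in hand. Throughout I use the affine isomorphisms $s\mapsto\Gamma_{s}$ of Theorem \ref{state_correspondence}, Proposition \ref{st_vs_qst}, and the reduction of Remark \ref{reduction_remark}.

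For (iii), I would reduce to the separation $\cl{C}_{\rm qa}\neq\cl{C}_{\rm qc}$ for ordinary bipartite correlations over sufficiently large question and answer sets, which is a consequence of the negative resolution of Connes' embedding problem \cite{mipre}. By Remark \ref{reduction_remark}, specializing $Y_{i}=B_{i}=[1]$ identifies $\cl{C}_{\rm sqa}(X_{2}[1], A_{1}[1], X_{1}[1], A_{2}[1]) = \cl{C}_{\rm qa}(X_{2}, A_{1}, X_{1}, A_{2})$ and likewise for ${\rm qc}$, so the separation holds in this degenerate configuration. To upgrade to all sets of large cardinality I would run a monotonicity (inflation/restriction) argument: given a witness $p'\in\cl{C}_{\rm sqc}(C')\setminus\cl{C}_{\rm sqa}(C')$ over a small configuration $C'$ embedded coordinatewise in a larger configuration $C$, inflate $p'$ to a correlation $p$ over $C$ by appending fixed product behaviour on the extra input/output coordinates, noting that inflation preserves membership in $\cl{C}_{\rm sqc}$; conversely, restricting any element of $\cl{C}_{\rm sqa}(C)$ to the subsets comprising $C'$ yields an element of $\cl{C}_{\rm sqa}(C')$. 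If $\cl{C}_{\rm sqc}(C)=\cl{C}_{\rm sqa}(C)$ held, composing the two operations would force $p'\in\cl{C}_{\rm sqa}(C')$, a contradiction.

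From (iii) the rest follows formally. For (i), Proposition \ref{st_vs_qst} says an SNS correlation $p$ lies in $\cl{C}_{\rm st}$ exactly when $\Gamma_{p}\in\cl{Q}_{\rm st}$; applying this to a witness $p\in\cl{C}_{\rm sqc}\setminus\cl{C}_{\rm sqa}$ produces $\Gamma_{p}\in\cl{Q}_{\rm sqc}\setminus\cl{Q}_{\rm sqa}$. For (ii), Theorem \ref{state_correspondence} carries the state space of $\cl{T}_{X_{2}, X_{1}}\otimes_{\rm c}\cl{T}_{Y_{2}, Y_{1}}\otimes_{\rm c}\cl{T}_{A_{1}, A_{2}}\otimes_{\rm c}\cl{T}_{B_{1}, B_{2}}$ onto $\cl{Q}_{\rm sqc}$ and that of the corresponding $\otimes_{\rm min}$ product onto $\cl{Q}_{\rm sqa}$, both via the one formula (\ref{choi_matrix_formula}); if the two tensor products coincided as operator systems their state spaces would be identical subsets of the dual, forcing $\cl{Q}_{\rm sqc}=\cl{Q}_{\rm sqa}$ and contradicting (i). (In finite dimensions a closed cone is recovered from its dual cone, so equal state spaces already force equal ground-level positive cones.) For (iv) I would repeat this verbatim with the classical analogue of Theorem \ref{state_correspondence} for the systems $\cl{S}_{\bullet,\bullet}$, namely \cite[Theorem 7.11]{gt_one} read with the amended Definition \ref{sqc_amended_definition} as explained in Remark \ref{change_in_defn}; this identifies the state spaces of $\cl{S}_{X_{2}, X_{1}}\otimes_{\rm c}\cdots$ and $\cl{S}_{X_{2}, X_{1}}\otimes_{\rm min}\cdots$ with $\cl{C}_{\rm sqc}$ and $\cl{C}_{\rm sqa}$, so the separation (iii) forbids the two tensor products from agreeing.

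The main obstacle is (iii). Although the deepest ingredient, the bipartite separation $\cl{C}_{\rm qa}\neq\cl{C}_{\rm qc}$, is imported wholesale, care is needed to verify that inflation preserves the quantum-commuting class while restriction preserves the approximately-quantum class, so that the separation genuinely propagates from the degenerate configuration of Remark \ref{reduction_remark} (with $Y_{i}=B_{i}=[1]$) to \emph{all} sufficiently large $X_{i}, Y_{i}, A_{i}, B_{i}$. Once (iii) is secured, (i), (ii) and (iv) are immediate from the established state correspondences.
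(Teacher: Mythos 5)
Your proof is correct and follows essentially the same route as the paper: reduce to the known separation of approximately quantum and quantum commuting correlations via Remark \ref{reduction_remark}, then transfer to the tensor-product statements (ii) and (iv) through Theorem \ref{state_correspondence} and its classical analogue. The only cosmetic differences are that you derive (i) from (iii) via Proposition \ref{st_vs_qst} rather than obtaining (i) directly from the quantum separation $\cl{Q}_{\rm qa}\neq \cl{Q}_{\rm qc}$ of \cite[Theorem 8.3]{tt} as the paper does, and that you make explicit the inflation/restriction step needed to pass from the degenerate configuration $Y_{i}=B_{i}=[1]$ to arbitrary sufficiently large sets, which the paper leaves implicit.
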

\begin{proof}
Statement (i) follows from Remark \ref{reduction_remark}, and an application of \cite[Theorem 8.3]{tt}. Statement (ii) follows from (i), and Theorem \ref{state_correspondence}. Statements (iii) and (iv) follow similarly, when paired with \cite[Remark 8.1]{tt}.
\end{proof}

As in the setup before Theorem \ref{state_correspondence}, for a linear functional
\begin{gather*}
	s: \cl{T}_{X}\otimes \cl{T}_{Y}\otimes \cl{T}_{A}\otimes \cl{T}_{B}\rightarrow \bb{C},
\end{gather*}
\noindent let $\Gamma_{s}: M_{XYAB}\rightarrow M_{XYAB}$ be the linear map whose Choi matrix coincides with (\ref{choi_matrix_formula}). 
\begin{theorem}
The map $s \mapsto \Gamma_{s}$ is an affine isomorphism from
\begin{itemize}
	\item[(i)] the state space of $\cl{T}_{X}\otimes_{\rm max}\cl{T}_{Y}\otimes_{\rm max}\cl{T}_{A}\otimes_{\rm max}\cl{T}_{B}$ onto $\cl{Q}_{\rm sns}^{\rm bi}$;
	\item[(ii)] the state space of $\cl{T}_{X}\otimes_{\rm c}\cl{T}_{Y}\otimes_{\rm c} \cl{T}_{A}\otimes_{\rm c}\cl{T}_{B}$ onto $\cl{Q}_{\rm sqc}^{\rm bi}$;
	\item[(iii)] the state space of $\cl{T}_{X}\otimes_{\rm min}\cl{T}_{Y}\otimes_{\rm min}\cl{T}_{A}\otimes_{\rm min}\cl{T}_{B}$ onto $\cl{Q}_{\rm sqa}^{\rm bi}$;
	\item[(iv)] the state space of ${\rm OMIN}(\cl{T}_{X})\otimes_{\rm min} {\rm OMIN}(\cl{T}_{Y})\otimes_{\rm min} {\rm OMIN}(\cl{T}_{A})\otimes_{\rm min} {\rm OMIN}(\cl{T}_{B})$ onto $\cl{Q}_{\rm sloc}^{\rm bi}$.
\end{itemize}
\end{theorem}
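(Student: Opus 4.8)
The plan is to follow the proof of Theorem \ref{state_correspondence} essentially line by line, replacing throughout the stochastic universal objects $\cl{T}_{X_2, X_1}, \cl{C}_{X_2, X_1}, \cl{L}_{X_2, X_1}$ (and their $Y, A, B$ analogues) by their bistochastic counterparts $\cl{T}_X, \cl{C}_X, \cl{L}_X$, substituting ``bistochastic'' for ``stochastic'' everywhere, and citing the bistochastic analogues from \cite{bhtt} in place of the results from \cite{tt}. Concretely, I would use \cite[Theorem 3.4]{bhtt} wherever the stochastic proof used \cite[Theorem 5.2]{tt}, Proposition \ref{dilate_bissop} in place of Proposition \ref{dilate_ssop}, and the embedding $\cl{T}_X \subseteq_{\rm c.o.i.} \cl{C}_X$ from \cite{bhtt} in place of \cite[Corollary 5.3]{tt}. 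The three structural ingredients---Theorem \ref{op_sys_incl_to_env}, Lemma \ref{omax_state_extension}, and (for (iv)) Remark \ref{bisloc_convex_remark}---are already stated for arbitrary operator systems and AOU spaces, and so transfer unchanged.

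For (i), I would start from $\Gamma \in \cl{Q}_{\rm sns}^{\rm bi}$ with Choi matrix $C \geq 0$. The SQNS conditions (\ref{pt_one})--(\ref{pt_four}) place the slices $L_\rho(C)$ in the first-constraint part of each of $\cl{L}_X, \cl{L}_Y, \cl{L}_A, \cl{L}_B$, exactly as in Theorem \ref{state_correspondence}(i). The extra bicorrelation hypotheses---that $\Gamma$ is unital and that $\Gamma^*$ is again SQNS---yield the dual partial-trace identities, and these are what supply the \emph{second} defining constraint $\sum_x \lambda_{x,x,a,a'} = \delta_{a,a'}c$ of each $\cl{L}$-system; together they place $C$ in $\cl{L}_X \otimes_{\rm min} \cl{L}_Y \otimes_{\rm min} \cl{L}_A \otimes_{\rm min} \cl{L}_B$. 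I would then invoke the duality $(\cl{T}_X \otimes_{\rm max} \cl{T}_Y \otimes_{\rm max} \cl{T}_A \otimes_{\rm max} \cl{T}_B)^{\rm d} \cong_{\rm c.o.i.} \cl{L}_X \otimes_{\rm min} \cl{L}_Y \otimes_{\rm min} \cl{L}_A \otimes_{\rm min} \cl{L}_B$ (from \cite[Theorem 5.4]{bhtt} together with \cite[Proposition 1.9]{fp}) and argue as in \cite[Theorem 6.2]{tt} to conclude.

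For (ii) and (iii) I would reproduce the two directions of Theorem \ref{state_correspondence}(ii),(iii). In (ii) one extends a state on the commuting tensor product to $\cl{C}_X \otimes_{\rm max} \cl{C}_Y \otimes_{\rm max} \cl{C}_A \otimes_{\rm max} \cl{C}_B$ through Theorem \ref{op_sys_incl_to_env}, applies the GNS construction, and uses \cite[Theorem 3.4]{bhtt} to recover mutually commuting bistochastic operator matrices, so that $\Gamma_s \in \cl{Q}_{\rm sqc}^{\rm bi}$; the converse assembles $\pi_X \otimes \pi_Y \otimes \pi_A \otimes \pi_B$ from the representations attached to the defining bistochastic matrices. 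In (iii) the quantum case uses finite-dimensional representations reassembled by the canonical shuffle, the approximate case passes to a ${\rm weak}^*$ cluster point, and the converse combines injectivity of $\otimes_{\rm min}$ with \cite[Corollary 4.3.10]{kr} to approximate by vector states. For (iv), I would apply Lemma \ref{omax_state_extension} to decompose a state as $\sum_j \lambda_j \phi_X^{(j)} \otimes \phi_Y^{(j)} \otimes \phi_A^{(j)} \otimes \phi_B^{(j)}$; the associated channels are now unital precisely because the matrices are bistochastic, so Remark \ref{bisloc_convex_remark} identifies $\Gamma_s$ with an element of $\cl{Q}_{\rm sloc}^{\rm bi}$, and the converse reverses the computation.

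The one step I expect to be the genuine obstacle is the claim in (i) that the unitality of $\Gamma$ together with the SQNS property of $\Gamma^*$ deliver \emph{precisely} the second constraint of each $\cl{L}$-system---equivalently, that the dilating operator matrices come out bistochastic rather than merely stochastic. This is the only place where the ``bi'' structure enters in an essential way; once it is verified that these dual conditions correspond exactly to bistochasticity (via \cite[Theorem 3.4]{bhtt}), the remainder is a faithful transcription of Theorem \ref{state_correspondence} with \cite{bhtt} substituted for \cite{tt}.
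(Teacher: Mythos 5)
Your proposal matches the paper's proof essentially verbatim: the paper likewise derives part (i) by using the SNS conditions for both $\Gamma$ and $\Gamma^{*}$ (noting the Choi matrices of the two are index-transposes of one another) to place the slices in the full bistochastic systems $\cl{L}_{X},\dots,\cl{L}_{B}$, then invokes \cite[Proposition 3.6]{bhtt} and the argument of Theorem \ref{state_correspondence}(i), and it dispatches (ii)--(iv) exactly as you describe, by substituting \cite[Theorem 3.4]{bhtt} for \cite[Theorem 5.2]{tt}, Remark \ref{bisloc_convex_remark} for Remark \ref{sloc_convex_remark}, and the bistochastic universal objects for their stochastic counterparts. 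The step you single out as the genuine obstacle is indeed the only place the ``bi'' structure enters, and your treatment of it is the paper's.
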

\begin{proof}
(i) Let $\Gamma \in \cl{Q}_{\rm sns}^{\rm bi}$. If
\begin{gather*}
	C = \big(C_{x_{2}x_{2}', y_{2}y_{2}', a_{1}a_{1}', b_{1}b_{1}'}^{x_{1}x_{1}', y_{1}y_{1}', a_{2}a_{2}', b_{2}b_{2}'}\big)
\end{gather*}
\noindent is the Choi matrix of $\Gamma$ and
\begin{gather*}
	\tilde{C} = \big(\tilde{C}_{x_{1}x_{1}', y_{1}y_{1}', a_{2}a_{2}', b_{2}b_{2}'}^{x_{2}x_{2}', y_{2}y_{2}', a_{1}a_{1}', b_{1}b_{1}'}\big)
\end{gather*}
\noindent is the Choi matrix of $\Gamma^{*}$ (where the indices above range over the corresponding sets $X, Y, A$, and $B$) then $C, \tilde{C} \in M_{XYABXYAB}^{+}$. As $\Gamma$ and $\Gamma^{*}$ are both SNS, for $y_{i}, y_{i}' \in Y, a_{i}, a_{i}' \in A, b_{i}, b_{i}' \in B, i = 1, 2$ there exist constants $C_{y_{2}y_{2}', a_{1}a_{1}', b_{1}b_{1}'}^{y_{1}y_{1}', a_{2}a_{2}', b_{2}b_{2}'}, \tilde{C}_{y_{1}y_{1}', a_{2}a_{2}', b_{2}b_{2}'}^{y_{2}y_{2}', a_{1}a_{1}', b_{1}b_{1}'} \in \bb{C}$ such that
\begin{gather*}
	\sum\limits_{x_{1}}C_{x_{2}x_{2}', y_{2}y_{2}', a_{1}a_{1}', b_{1}b_{1}'}^{x_{1}x_{1}', y_{1}y_{1}', a_{2}a_{2}', b_{2}b_{2}'} = \delta_{x_{2}, x_{2}'}C_{y_{2}y_{2}', a_{1}a_{1}', b_{1}b_{1}'}^{y_{1}y_{1}', a_{2}a_{2}', b_{2}b_{2}'}, 
\\ 	\sum\limits_{x_{2}}\tilde{C}_{x_{1}x_{1}', y_{1}y_{1}', a_{2}a_{2}', b_{2}b_{2}'}^{x_{2}x_{2}', y_{2}y_{2}', a_{1}a_{1}', b_{1}b_{1}'} = \delta_{x_{1}, x_{1}'}\tilde{C}_{y_{1}y_{1}', a_{2}a_{2}', b_{2}b_{2}'}^{y_{2}y_{2}', a_{1}a_{1}', b_{1}b_{1}'},
\end{gather*}
\noindent for all $x_{i}, x_{i}' \in X, i = 1, 2$. Note that
\begin{gather*}
	\tilde{C}_{x_{1}x_{1}', y_{1}y_{1}', a_{2}a_{2}', b_{2}b_{2}'}^{x_{2}x_{2}', y_{2}y_{2}', a_{1}a_{1}', b_{1}b_{1}'} = C_{x_{2}x_{2}', y_{2}y_{2}', a_{1}a_{1}', b_{1}b_{1}'}^{x_{1}x_{1}', y_{1}y_{1}', a_{2}a_{2}', b_{2}b_{2}'},
\end{gather*}
\noindent for $x_{i}, x_{i}' \in X, y_{i}, y_{i}' \in Y, a_{i}, a_{i}' \in A, b_{i}, b_{i}' \in B, i = 1, 2$. These together imply
\begin{gather*}
	L_{\rho}(C) \in \cl{L}_{X} \text{ for all } \rho \in M_{YAB}.
\end{gather*}
\noindent We then may use the other SNS conditions for $\Gamma$ and $\Gamma^{*}$ to show that
\begin{gather*}
	L_{\rho}(C) \in \cl{L}_{Y} \text{ for all } \rho \in M_{XAB},
\\	L_{\rho}(C) \in \cl{L}_{A} \text{ for all } \rho \in M_{XYB},
\end{gather*}
\noindent and 
\begin{gather*}
	L_{\rho}(C) \in \cl{L}_{B} \text{ for all } \rho \in M_{XYA}.
\end{gather*}
\noindent Together, these imply $C \in (\cl{L}_{X}\otimes \cl{L}_{Y}\otimes \cl{L}_{A}\otimes \cl{L}_{B})^{+}$. Use \cite[Proposition 3.6]{bhtt} and argue as in Theorem \ref{state_correspondence} (i) to finish.

(ii)-(iv) The proofs for (ii)-(iv) follow exactly the same as in the proofs of Theorem \ref{state_correspondence} (ii)-(iv), just by replacing the use of \cite[Theorem 5.2]{tt} with \cite[Theorem 3.4]{bhtt}, the use of Remark \ref{sloc_convex_remark} with Remark \ref{bisloc_convex_remark}, and by replacing $\cl{T}_{X_{2}, X_{1}}$ and $\cl{C}_{X_{2}, X_{1}}$ (resp. $\cl{T}_{Y_{2}, Y_{1}}$ and $\cl{C}_{Y_{2}, Y_{1}}$, $\cl{T}_{A_{1}, A_{2}}$ and $\cl{C}_{A_{1}, A_{2}}$, $\cl{T}_{B_{1}, B_{2}}$ and $\cl{C}_{B_{1}, B_{2}}$) with $\cl{T}_{X}$ and $\cl{C}_{X}$ (resp. $\cl{T}_{Y}$ and $\cl{C}_{Y}$, $\cl{T}_{A}$ and $\cl{C}_{A}$, $\cl{T}_{B}$ and $\cl{C}_{B}$). 
\end{proof}

We end this subsection with the following (obvious) result.

\begin{corollary}
The set $\cl{Q}_{\rm sqc}^{\rm bi}$ is closed and convex.
\end{corollary}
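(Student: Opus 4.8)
The plan is to mirror the proof of Corollary \ref{qsqc_closed} essentially verbatim, using the bicorrelation state correspondence in place of Theorem \ref{state_correspondence}. First I would invoke part (ii) of the theorem immediately preceding this corollary, which provides an affine bijection $s \mapsto \Gamma_{s}$ from the state space of $\cl{T}_{X}\otimes_{\rm c}\cl{T}_{Y}\otimes_{\rm c}\cl{T}_{A}\otimes_{\rm c}\cl{T}_{B}$ onto $\cl{Q}_{\rm sqc}^{\rm bi}$. Convexity of $\cl{Q}_{\rm sqc}^{\rm bi}$ is then immediate: the state space of any operator system is convex, and an affine image of a convex set is convex.

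Next I would check that this affine bijection is in fact a homeomorphism when the state space carries the ${\rm weak}^{*}$ topology and $\cl{Q}_{\rm sqc}^{\rm bi}$ carries the topology of entrywise convergence of Choi matrices. This is the only point requiring any verification, and it is routine: by (\ref{choi_matrix_formula}) the entries of the Choi matrix of $\Gamma_{s}$ are precisely the evaluations of $s$ on the elementary tensors $e_{x_{2},x_{2}',x_{1},x_{1}'}\otimes f_{y_{2},y_{2}',y_{1},y_{1}'}\otimes\cdots$ of the generators, so a net of states $s_{\lambda}$ converges ${\rm weak}^{*}$ to $s$ exactly when the corresponding Choi matrices converge entrywise to that of $\Gamma_{s}$. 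Since these generators span the (finite-dimensional) operator system, this equivalence gives continuity of $s\mapsto\Gamma_{s}$ and of its inverse simultaneously.

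Finally, the state space is ${\rm weak}^{*}$-compact by the Banach--Alaoglu theorem, so its image under this homeomorphism is compact, and therefore closed, in the Hausdorff space of linear maps $M_{XYAB}\rightarrow M_{XYAB}$. This establishes that $\cl{Q}_{\rm sqc}^{\rm bi}$ is both closed and convex. I anticipate no genuine obstacle: the argument is structurally identical to that of Corollary \ref{qsqc_closed}, the only substantive input being the affine isomorphism furnished by the preceding theorem, and the homeomorphism bookkeeping is unchanged in passing from the stochastic to the bistochastic setting.
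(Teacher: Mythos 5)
Your proposal is correct and is exactly the argument the paper intends: the corollary is stated without proof as an ``obvious'' consequence of part (ii) of the bicorrelation state-correspondence theorem, mirroring Corollary \ref{qsqc_closed}. The homeomorphism bookkeeping and the Banach--Alaoglu compactness step are precisely what the paper's earlier proof uses, so nothing further is needed.
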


\subsection{Concurrent strategies}
The goal of this subsection is to apply the results connecting subclasses of strongly QNS correlations with state spaces on tensor products of canonical operator systems previously developed, to the study of a particular class of quantum input-output game which are called \textit{concurrent games}. A quantum non-local game $\varphi: \cl{P}_{XX}\rightarrow \cl{P}_{AA}$ is \textit{concurrent} if $\varphi(\tilde{J}_{X}) = \tilde{J}_{A}$. A QNS correlation $\Phi \in \cl{Q}_{\rm ns}$ is called concurrent if $\Phi(\tilde{J}_{X}) = \tilde{J}_{A}$; equivalently, if $\Phi$ is a perfect strategy for the (trivial) implication game $\varphi_{\tilde{J}_{X}\rightarrow \tilde{J}_{A}}$. 

The study of concurrent games and their perfect strategies has grown in recent years (see \cite{synch_bhtt, bhtt, tt}), and is of particular interest for its connections to the study of quantum automorphism groups and compact quantum groups, along with its ramifications for the study of synchronous games; indeed--- concurrent games were introduced as a ``quantization" of synchronous games. The reasoning behind this claim comes from \cite[Remark 2.1]{synch_bhtt}: for classical non-local game $\lambda: X\times X\times A\times A\rightarrow \{0, 1\}$, then $\lambda$ is synchronous if and only if $\varphi_{\lambda}(J_{X}^{\rm cl}) \leq J_{A}^{\rm cl}$. Concurrency allows us to see how a correlation affects the ``non-classical" portion of the (unnormalized) maximally entangled state- ideally, sending it to a state supported on $\tilde{J}_{A}$. 

Let $\tau: \cl{C}_{X, A}\rightarrow \bb{C}$ be a tracial state; the linear map $\Gamma_{\tau}: M_{XX}\rightarrow M_{AA}$ given by
\begin{gather*}
	\Gamma_{\tau}(\epsilon_{xx'}\otimes \epsilon_{yy'}) = \sum\limits_{a, a', b, b' \in A}\tau(e_{x, x', a, a'}e_{y', y, b', b})\epsilon_{aa'}\otimes \epsilon_{bb'},
\end{gather*}
\noindent is a QNS correlation. This particular type of a QNS correlation is called \textit{tracial}, and was first introduced in \cite{tt}. Subclasses of \textit{quantum tracial} and \textit{locally tracial} QNS correlations are defined by requiring that $\tau$ factors through a finite-dimensional and abelian $*$-representation, respectively. We focus on such QNS correlations for their connection to concurrent games: it was first established in \cite[Theorem 4.1]{synch_bhtt} that any perfect strategy $\Gamma$ for a concurrent game $\varphi$ must be tracial, in the sense defined above. In the rest of this section, we will develop a notion of when SQNS correlations are tracial, and describe their structure. 

Let $\Gamma: M_{X_{2}X_{2}\times A_{1}A_{1}}\rightarrow M_{X_{1}X_{1}\times A_{2}A_{2}}$ be an SQNS correlation, and let $\sigma \in M_{A_{1}A_{1}}$ be an arbitrary state. Define the mapping $\Gamma_{X_{2}X_{2}\rightarrow X_{1}X_{1}}^{\sigma}: M_{X_{2}X_{2}}\rightarrow M_{X_{1}X_{1}}$ via
\begin{gather*}
	\Gamma_{X_{2}X_{2}\rightarrow X_{1}X_{1}}^{\sigma}(\rho) := {\rm Tr}_{A_{2}A_{2}}\Gamma(\rho\otimes \sigma), \;\;\;\; \rho \in M_{X_{2}X_{2}}.
\end{gather*} 
\noindent One easily shows that $\Gamma_{X_{2}X_{2}\rightarrow X_{1}X_{1}}^{\sigma}$ is a well-defined QNS correlation; furthermore, $\Gamma_{X_{2}X_{2}\rightarrow X_{1}X_{1}}^{\sigma} = \Gamma_{X_{2}X_{2}\rightarrow X_{1}X_{1}}^{\sigma'}$ for any states $\sigma, \sigma' \in M_{A_{1}A_{1}}$. Thus, we let $\Gamma_{X_{2}X_{2}\rightarrow X_{1}X_{1}}$ denote the right marginal channel of $\Gamma$, for an arbitrary fixed state $\sigma \in M_{A_{1}A_{1}}$. An analogous argument shows that $\Gamma^{A_{1}A_{1}\rightarrow A_{2}A_{2}}$ is a well-defined left marginal QNS correlation for $\Gamma$, for an arbitrary fixed state $\rho \in M_{X_{2}X_{2}}$. 

\begin{remark}
\rm Note that, in the event $\Gamma \in \cl{C}_{\rm st}$ for ${\rm t} \in \{\rm loc, q, qa, qc, ns\}$, then the marginal channels previously discussed (when considering $\Gamma$ inside $\cl{Q}_{\rm st}$) coincide with the marginal channels discussed in \cite[Section 5]{gt_one}.
\end{remark}

\begin{definition}
An SQNS correlation $\Gamma$ over $(X_{2}X_{2}, A_{1}A_{1}, X_{1}X_{1}, A_{2}A_{2})$ is called {\rm jointly tracial} if $\Gamma_{X_{2}X_{2}\rightarrow X_{1}X_{1}}$ and $\Gamma^{A_{1}A_{1}\rightarrow A_{2}A_{2}}$ are tracial QNS correlations. 
\end{definition}

\begin{remark}\label{hom_from_max_tensor_prod}
\rm There exists a surjective $*$-homomorphism $\pi: \cl{C}_{X_{2}, A_{2}}\rightarrow \cl{C}_{X_{2}, X_{1}}\otimes_{\rm max} \cl{C}_{A_{1}, A_{2}}\otimes_{\rm max} \cl{C}_{X_{1}, A_{1}}$. Indeed: for $x_{2}, x_{2}' \in X_{2}, a_{2}, a_{2}' \in A_{2}$ let
\begin{gather*}
	\tilde{e}_{x_{2}x_{2}', a_{2}a_{2}'} := \sum\limits_{x_{1}, x_{1}'}\sum\limits_{a_{1}, a_{1}'}e_{x_{2}x_{2}', x_{1}x_{1}'}\otimes e_{a_{1}a_{1}', a_{2}a_{2}'}\otimes f_{x_{1}x_{1}', a_{1}a_{1}'}.
\end{gather*}
\noindent One easily checks that $\tilde{e}_{x_{2}x_{2}, a_{2}a_{2}} \geq 0$ and $\sum_{a_{2}}\tilde{e}_{x_{2}x_{2}', a_{2}a_{2}} = \delta_{x_{2}x_{2}'}1$ for all $x_{2}, x_{2}' \in X_{2}, a_{2} \in A_{2}$. Therefore, by universality there exists a surjective $*$-homomorphism $\pi: \cl{C}_{X_{2}, A_{2}} \rightarrow \cl{C}_{X_{2}, X_{1}}\otimes_{\rm max} \cl{C}_{A_{1}, A_{2}}\otimes_{\rm max} \cl{C}_{X_{1}, A_{1}}$ which sends $\pi(e_{x_{2}x_{2}', a_{2}a_{2}'}) = \tilde{e}_{x_{2}x_{2}', a_{2}a_{2}'}, x_{2}, x_{2}' \in X_{2}, a_{2}, a_{2}' \in A_{2}$. 
\end{remark}
For a linear functional $\mathsf{T}$ on $\cl{C}_{X_{2}, X_{1}}\otimes \cl{C}_{A_{1}, A_{2}}$, set
\begin{gather*}
	\Gamma_{\mathsf{T}}(x_{1}x_{1}', y_{1}y_{1}', a_{2}a_{2}', b_{2}b_{2}'|x_{2}x_{2}', y_{2}y_{2}', a_{1}a_{1}', b_{1}b_{1}') = \\\mathsf{T}(e_{x_{2}x_{2}', x_{1}x_{1}'}e_{y_{2}'y_{2}, y_{1}'y_{1}}\otimes e_{a_{1}a_{1}', a_{2}a_{2}'}e_{b_{1}'b_{1}, b_{2}'b_{2}}),
\end{gather*}
\noindent where $x_{i}, x_{i}', y_{i}, y_{i}' \in X_{i}, a_{i}, a_{i}', b_{i}, b_{i}' \in A_{i}, i = 1, 2$. 
\begin{theorem}\label{joint_tracial_strategies_thm}
The following hold:
\begin{itemize}
	\item[(i)] If $\Gamma$ is a jointly tracial and quantum commuting SQNS correlation, then there exists a trace $\mathsf{T}$ on $\cl{C}_{X_{2}, X_{1}}\otimes_{\rm max} \cl{C}_{A_{1}, A_{2}}$ such that $\Gamma = \Gamma_{\mathsf{T}}$. 
	\item[(ii)] If $\Gamma$ is a jointly tracial and approximately quantum SQNS correlation, then there exists an amenable trace $\mathsf{T}$ on $\cl{C}_{X_{2}, X_{1}}\otimes_{\rm min} \cl{C}_{A_{1}, A_{2}}$ such that $\Gamma = \Gamma_{\mathsf{T}}$.
	\item[(iii)] If $\Gamma$ is a jointly tracial and quantum SQNS correlation, then there exists a trace $\mathsf{T}$ factoring through a finite-dimensional $*$-representation of $\cl{C}_{X_{2}, X_{1}}\otimes_{\rm min} \cl{C}_{A_{1}, A_{2}}$ such that $\Gamma = \Gamma_{\mathsf{T}}$.
	\item[(iv)] If $\Gamma$ is a jointly tracial and local SQNS correlation, then there exists a trace $\mathsf{T}$ factoring through an abelian $*$-representation of $\cl{C}_{X_{2}, X_{1}}\otimes_{\rm max} \cl{C}_{A_{1}, A_{2}}$ such that $\Gamma = \Gamma_{\mathsf{T}}$.
\end{itemize}
\end{theorem}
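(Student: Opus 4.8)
The plan is to reduce statement (iv) to the explicit local structure and then build the desired trace inside the abelian $\mathrm{C}^{*}$-algebra generated by a locally dilatable realization of $\Gamma$, running the same scheme as in parts (i)--(iii) but observing that locality forces every representation in sight to be abelian. Concretely, I would first invoke Remark \ref{sloc_convex_remark}: since $\Gamma \in \cl{Q}_{\rm sloc}$ is jointly tracial and local (here $Y_i = X_i$, $B_i = A_i$), I choose a locally dilatable realization giving mutually commuting stochastic operator matrices $E_X, F_Y$ over $(X_2,X_1)$ and $E_A, F_B$ over $(A_1,A_2)$, acting on a Hilbert space $H$, together with a unit vector $\xi$, so that
\begin{gather*}
	\Gamma(x_1x_1', y_1y_1', a_2a_2', b_2b_2'|x_2x_2', y_2y_2', a_1a_1', b_1b_1') = \langle (E_X)_{x_2,x_2'}^{x_1,x_1'} (F_Y)_{y_2,y_2'}^{y_1,y_1'} (E_A)_{a_1,a_1'}^{a_2,a_2'} (F_B)_{b_1,b_1'}^{b_2,b_2'}\xi, \xi\rangle .
\end{gather*}
All four families commute, so they generate a single \emph{abelian} $\mathrm{C}^{*}$-algebra $\cl{A} \subseteq \cl{B}(H)$.

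Next I would record what joint traciality says about the two marginals. By Proposition \ref{sqns_marginals} the right marginal $\Gamma_{X_2X_2\to X_1X_1}$ is well defined, and tracing out the $A$-leg (using stochasticity of $E_A, F_B$, which turns the $A$-factors into $\delta I$) collapses it to $\langle (E_X)_{x_2,x_2'}^{x_1,x_1'}(F_Y)_{y_2,y_2'}^{y_1,y_1'}\xi,\xi\rangle$; by hypothesis this equals $\Gamma_{\tau_X}$ for a tracial state $\tau_X$ on $\cl{C}_{X_2,X_1}$, and symmetrically on the $A$-side with $\tau_A$ on $\cl{C}_{A_1,A_2}$. The central move is then to merge, on each leg, the two commuting families into a single $*$-representation of the relevant universal $\mathrm{C}^{*}$-algebra, so that the product of generators in the definition of $\Gamma_{\mathsf{T}}$ is reproduced. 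Using $e_{y_2',y_2,y_1',y_1} = e_{y_2,y_2',y_1,y_1'}^{*}$ (and likewise on the $B$-side), I would identify $F_Y$ with the adjoint/opposite realization of $E_X$ inside $\cl{A}$, replacing the chosen local realization by the canonical one coming from the tracial GNS data of $\tau_X$; this produces a single $*$-representation $\pi_X \colon \cl{C}_{X_2,X_1}\to\cl{A}$ with $\pi_X(e_{x_2,x_2',x_1,x_1'}) = (E_X)_{x_2,x_2'}^{x_1,x_1'}$ and $\pi_X(e_{y_2',y_2,y_1',y_1}) = (F_Y)_{y_2,y_2'}^{y_1,y_1'}$, and likewise $\pi_A\colon \cl{C}_{A_1,A_2}\to\cl{A}$. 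Since $\cl{A}$ is abelian, $\pi_X$ and $\pi_A$ have commuting ranges and assemble into a unital $*$-representation $\pi = \pi_X\otimes\pi_A$ of $\cl{C}_{X_2,X_1}\otimes_{\rm max}\cl{C}_{A_1,A_2}$ (the max norm coinciding with min by nuclearity of abelian algebras). Setting $\mathsf{T}(w) = \langle \pi(w)\xi,\xi\rangle$ gives a state that factors through the abelian representation $\pi$; it is therefore automatically a trace, and reordering by commutativity verifies $\Gamma = \Gamma_{\mathsf{T}}$, which is exactly the conclusion of (iv).

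The step I expect to be the real obstacle is the merging of the two commuting families on each leg into one representation, i.e.\ establishing the flip identity $(F_Y)_{y_2,y_2'}^{y_1,y_1'} = (E_X)_{y_2',y_2}^{y_1',y_1}$ on the cyclic subspace generated by $\xi$, together with its $B$-side analogue. Locality alone does not give this, since an abelian marginal is tracial for free and so carries no extra information; it is precisely the jointly tracial hypothesis---that the two marginals genuinely equal $\Gamma_{\tau_X}$ and $\Gamma_{\tau_A}$ with the prescribed swapped-index pattern---that pins $F_Y$ to the adjoint realization of $E_X$. I would handle this by passing to the GNS representation of the trace $\tau_X$, where the swapped-index generators implement the commuting right action and the trace property forces the identification, and then transporting the resulting canonical realization into $\cl{A}$. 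This mirrors the corresponding merging step in the proofs of (i)--(iii), the only new feature being that here the ambient algebra is abelian, so that no tensor-norm subtleties arise and the resulting $\mathsf{T}$ factors through an abelian representation as required.
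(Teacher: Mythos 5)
Your proposal addresses only part (iv) in detail and defers (i)--(iii) to ``the same scheme,'' but (i) is where the substance of the theorem lives: the paper proves (ii)--(iv) by constructing an appropriate state $\tilde{s}$ on the four-fold tensor product of the algebras $\cl{C}_{X_2,X_1}$ and $\cl{C}_{A_1,A_2}$ and then \emph{invoking the argument of (i)} to show that $\mathsf{T}(u\otimes v):=\tilde{s}(u\otimes 1\otimes v\otimes 1)$ is tracial and reproduces $\Gamma$. So even granting your treatment of (iv), the proof of the theorem is incomplete without an argument for (i).

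More importantly, the step you yourself flag as ``the real obstacle'' --- identifying $(F_Y)_{y_2,y_2'}^{y_1,y_1'}$ with $(E_X)_{y_2',y_2}^{y_1',y_1}$ on the cyclic subspace --- is not closed by the mechanism you propose. The jointly tracial hypothesis only says that the \emph{numbers} $\langle (E_X)_u (F_Y)_v\,\xi,\xi\rangle$ equal $\tau_X(u\,v^{\mathrm{flip}})$ for single generators $u,v$; this pins down the mixed moments of bidegree $(1,1)$ and nothing more. It does not determine the operators $(F_Y)_v$ on $\overline{\cl{A}\xi}$, and ``passing to the GNS representation of $\tau_X$ and transporting the canonical realization into $\cl{A}$'' is not a well-defined operation: the GNS space of $\tau_X$ carries no record of the joint distribution with the $A$-side operators, which is exactly what must be preserved to conclude $\Gamma=\Gamma_{\mathsf T}$. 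What actually forces the identification in the paper is a positivity argument: one first establishes the operator inequality $\sum_{x_2,x_2',x_1,x_1'}e_{x_2'x_2,x_1'x_1}e_{x_2x_2',x_1x_1'}\leq |X_2||X_1|\,1$ in $\cl{C}_{X_2,X_1}$, then shows $\sum \tilde{s}(h^{*}h)\leq 0$ for $h=e_{x_2x_2',x_1x_1'}\otimes 1-1\otimes e_{x_2'x_2,x_1'x_1}$ (combining that inequality with the $(1,1)$-moment data from $\tau_X$), and finally applies Cauchy--Schwarz to get $\tilde{s}(whw')=0$ for all $w,w'$. Only after this does one obtain $h\xi=0$, which in the abelian setting of (iv) propagates to the cyclic subspace. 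Your write-up replaces this quantitative argument with an appeal to ``the trace property,'' which does not suffice; you need to supply (or import) the inequality-plus-Cauchy--Schwarz step to make the merging legitimate, and you need to carry out the analogous argument in the noncommutative settings of (i)--(iii), where the ranges are no longer abelian and the equivalences $e\otimes 1\otimes v\sim 1\otimes e^{\mathrm{flip}}\otimes v$ must be established word by word before $\mathsf{T}$ can be shown to be a trace.
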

\begin{proof}
(i) First, assume $\Gamma \in \cl{Q}_{\rm sqc}$ is jointly tracial. For notational simplification, write
\begin{gather*}
	\mathfrak{B} = \cl{C}_{X_{2}, X_{1}}\otimes_{\rm max} \cl{C}_{X_{2}, X_{1}}\otimes_{\rm max} \cl{C}_{A_{1}, A_{2}}\otimes_{\rm max} \cl{C}_{A_{1}, A_{2}}
\end{gather*}
\noindent and
\begin{gather*}
	\mathfrak{U} = \cl{C}_{X_{2}, X_{1}}\otimes_{\rm max} \cl{C}_{A_{1}, A_{2}}.
\end{gather*}
\noindent We note that, up to a flip of tensor legs, $\mathfrak{B} = \mathfrak{U}\otimes_{\rm max} \mathfrak{U}$; in the sequel, we will use this identification without explicitly mentioning it. 

By (the proof of) Theorem \ref{state_correspondence}, there exists a state $\tilde{s}: \mathfrak{B}\rightarrow \bb{C}$ such that $\Gamma = \Gamma_{\tilde{s}}$. For $v, w \in \mathfrak{U}\otimes_{\rm max} \mathfrak{U}$, write $v \sim w$ if $\tilde{s}(v-w) = 0$. Let $V = (v_{x_{1}, x_{2}})_{x_{1}, x_{2}}$ be the isometry such that $e_{x_{2}x_{2}', x_{1}x_{1}'} = v_{x_{1}, x_{2}}^{*}v_{x_{1}', x_{2}'}$. Then
\begin{gather*}
	VV^{*} = \bigg(\sum\limits_{x_{2}}v_{x_{1}, x_{2}}v_{x_{1}', x_{2}}^{*}\bigg)_{x_{1}, x_{1}'}
\end{gather*}
\noindent is a projection, with $\sum\limits_{x_{2}}v_{x_{1}, x_{2}}v_{x_{1}, x_{2}}^{*} \leq 1$ for all $x_{1} \in X_{1}$. Using this, we see that
\begin{gather*}
	\sum\limits_{x_{2} \in X_{2}}e_{x_{2}'x_{2}, x_{1}'x_{1}}e_{x_{2}x_{2}', x_{1}x_{1}'} = \sum\limits_{x_{2} \in X_{2}}v_{x_{1}', x_{2}'}^{*}v_{x_{1}, x_{2}}v_{x_{1}, x_{2}}^{*}v_{x_{1}', x_{2}'} \\
\;\;\;\;\;\;\;\;\;\;\;\;\;\;\;\;\;\;\;\;\;\;\;\;\;\;\;\;\;\;\;\;\; \leq v_{x_{1}', x_{2}'}^{*}v_{x_{1}', x_{2}'} = e_{x_{2}'x_{2}', x_{1}'x_{1}'}. 
\end{gather*}
\noindent From this, we compute
\begin{gather}\label{inequality_of_generators_one}
	\sum\limits_{x_{2}, x_{2}' \in X_{2}}\sum\limits_{x_{1}, x_{1}' \in X_{1}}e_{x_{2}'x_{2}, x_{1}'x_{1}}e_{x_{2}x_{2}', x_{1}x_{1}'} \leq |X_{2}||X_{1}|1.
\end{gather} 
\noindent Let $\tau_{X}$ be the tracial state on $\cl{C}_{X_{2}, X_{1}}$ corresponding to marginal channel $\Gamma_{X_{2}X_{2}\rightarrow X_{1}X_{1}}$. Without loss of generality, fix $a_{1}, b_{1} \in A_{1}$. We see
\begin{eqnarray*}
& &
	\tilde{s}(e_{x_{2}x_{2}', x_{1}x_{1}'}\otimes 1 \otimes 1 \otimes 1) \\
& = &
	\sum\limits_{y_{1} \in X_{1}}\sum\limits_{a_{2}, b_{2} \in A_{2}}\tilde{s}(e_{x_{2}x_{2}', x_{1}x_{1}'}\otimes e_{x_{2}x_{2}, y_{1}y_{1}}\otimes e_{a_{1}a_{1}, a_{2}a_{2}}\otimes e_{b_{1}b_{1}, b_{2}b_{2}}) \\
& = &
	\sum\limits_{a_{2}, b_{2} \in A_{2}}\sum\limits_{y_{1} \in X_{1}}\tau_{X}(e_{x_{2}x_{2}', x_{1}x_{1}'}e_{x_{2}x_{2}, y_{1}y_{1}}) \\
& = &
	\tau_{X}(e_{x_{2}x_{2}', x_{1}x_{1}'}).
\end{eqnarray*}
\noindent Similarly, 
\begin{eqnarray*}
& &
	\tilde{s}(1\otimes e_{x_{2}'x_{2}, x_{1}'x_{1}}\otimes 1\otimes 1) \\
& = &
	\sum\limits_{y_{1} \in X_{1}}\sum\limits_{a_{2}, b_{2} \in A_{2}}\tilde{s}(e_{x_{2}'x_{2}', y_{1}y_{1}}\otimes e_{x_{2}'x_{2}, x_{1}'x_{1}}\otimes e_{a_{1}a_{1}, a_{2}a_{2}}\otimes e_{b_{1}b_{1}, b_{2}b_{2}}) \\
& = &
	\sum\limits_{a_{2}, b_{2} \in A_{2}}\sum\limits_{y_{1} \in X_{1}}\tau_{X}(e_{x_{2}'x_{2}', y_{1}y_{1}}e_{x_{2}x_{2}', x_{1}x_{1}'}) \\
& = &
	\tau_{X}(e_{x_{2}x_{2}', x_{1}x_{1}'}).
\end{eqnarray*}
\noindent Therefore, 
\begin{gather*}
	e_{x_{2}x_{2}', x_{1}x_{1}'}\otimes 1 \otimes 1 \otimes 1 \sim 1\otimes e_{x_{2}'x_{2}, x_{1}'x_{1}}\otimes 1 \otimes 1, \;\;\;\; x_{i}, x_{i}' \in X_{i}, i = 1, 2.
\end{gather*}
\noindent Write 
\begin{gather*}
	h_{x_{2}x_{2}', x_{1}x_{1}'} = e_{x_{2}x_{2}', x_{1}x_{1}'}\otimes 1\otimes 1\otimes 1 - 1\otimes e_{x_{2}'x_{2}, x_{1}'x_{1}}\otimes 1\otimes 1,
\end{gather*}
\noindent for $x_{i}, x_{i}' \in X_{i}, i = 1, 2$; we note that
\begin{eqnarray*}
& &
	h_{x_{2}x_{2}', x_{1}x_{1}'}^{*}h_{x_{2}x_{2}', x_{1}x_{1}'} \\
& = &
	(e_{x_{2}'x_{2}, x_{1}'x_{1}}e_{x_{2}x_{2}', x_{1}x_{1}'}\otimes 1 \otimes 1 \otimes 1+1\otimes e_{x_{2}x_{2}', x_{1}x_{1}'}e_{x_{2}'x_{2}, x_{1}'x_{1}}\otimes 1 \otimes 1) \\
& - &
	 (e_{x_{2}'x_{2}, x_{1}'x_{1}}\otimes e_{x_{2}'x_{2}, x_{1}'x_{1}}\otimes 1 \otimes 1+e_{x_{2}x_{2}', x_{1}x_{1}'}\otimes e_{x_{2}x_{2}', x_{1}x_{1}'}\otimes 1\otimes 1).
\end{eqnarray*}
\noindent By (\ref{inequality_of_generators_one}), we have
\begin{eqnarray*}
& &
	\sum\limits_{x_{2}, x_{2}', x_{1}, x_{1}'}\tilde{s}(h_{x_{2}x_{2}', x_{1}x_{1}'}^{*}h_{x_{2}x_{2}', x_{1}x_{1}'}) \\
& = &
	\sum\limits_{x_{2}, x_{2}', x_{1}, x_{1}'}\tilde{s}(e_{x_{2}'x_{2}, x_{1}'x_{1}}e_{x_{2}x_{2}', x_{1}x_{1}'}\otimes 1 \otimes 1 \otimes 1+1\otimes e_{x_{2}x_{2}', x_{1}x_{1}'}e_{x_{2}'x_{2}, x_{1}'x_{1}}\otimes 1\otimes 1) \\
& - &
	\tilde{s}(e_{x_{2}'x_{2}, x_{1}'x_{1}}\otimes e_{x_{2}'x_{2}, x_{1}'x_{1}}\otimes 1 \otimes 1+e_{x_{2}x_{2}', x_{1}x_{1}'}\otimes e_{x_{2}x_{2}', x_{1}x_{1}'}\otimes 1\otimes 1) \\
& \leq &
	2|X_{2}||X_{1}|1-2|X_{2}||X_{1}|1 = 0.
\end{eqnarray*}
\noindent From this, we have
\begin{gather}\label{h_star_eqn}
	\tilde{s}(h_{x_{2}x_{2}', x_{1}x_{1}}^{*}h_{x_{2}x_{2}', x_{1}x_{1}'}) = 0, \;\;\;\; x_{i}, x_{i}' \in X_{i}, i = 1, 2.
\end{gather}
By applying Cauchy-Schwarz inequality in conjunction with (\ref{h_star_eqn}), we have $wh_{x_{2}x_{2}', x_{1}x_{1}'} \sim 0$ and $h_{x_{2}x_{2}', x_{1}x_{1}'}w\sim 0$ for all $w \in \mathfrak{B}$. In particular, for all $x_{i}, x_{i}' \in X_{i}, i = 1, 2$ we have
\begin{gather}\label{equiv_one}
	ze_{x_{2}x_{2}', x_{1}x_{1}'}\otimes 1 \otimes v \sim z\otimes e_{x_{2}'x_{2}, x_{1}'x_{1}}\otimes v \sim e_{x_{2}x_{2}', x_{1}x_{1}'}z\otimes 1 \otimes v,
\end{gather}
\noindent for $z \in \cl{C}_{X_{2}, X_{1}}, v \in \mathfrak{U}$. If instead we use $\tau_{A}$ for the tracial state corresponding to marginal channel $\Gamma^{A_{1}A_{1}\rightarrow A_{2}A_{2}}$, we can make an almost identical argument to show that
\begin{gather}\label{equiv_two}
	u \otimes ze_{a_{1}a_{1}', a_{2}a_{2}'}\otimes 1 \sim u \otimes z \otimes e_{a_{1}'a_{1}, a_{2}'a_{2}} \sim u\otimes e_{a_{1}a_{1}', a_{2}a_{2}'},
\end{gather}
\noindent for $a_{i}, a_{i}' \in A_{i}, z \in \cl{C}_{A_{1}, A_{2}}$ and $u \in \mathfrak{U}$, $i = 1, 2$. Equations (\ref{equiv_one}) and (\ref{equiv_two}) imply
\begin{gather}\label{equiv_three}
	ze_{x_{2}x_{2}', x_{1}x_{1}'}\otimes 1 \otimes z'e_{a_{1}a_{1}', a_{2}a_{2}'}\otimes 1 \sim z \otimes e_{x_{2}'x_{2}, x_{1}'x_{1}}\otimes z' \otimes e_{a_{1}'a_{1}, a_{2}'a_{2}} 
\\\sim e_{x_{2}x_{2}', x_{1}x_{1}'}z\otimes 1\otimes e_{a_{1}a_{1}', a_{2}a_{2}'}z'\otimes 1,
\end{gather}
\noindent for all $z \in \cl{C}_{X_{2}, X_{1}}, z' \in \cl{C}_{A_{1}, A_{2}}$, and $x_{i}, x_{i}' \in X_{i}, a_{i}, a_{i}' \in A_{i}, i = 1, 2$. An induction argument on the lengths of the words $w$ on $\{e_{x_{2}x_{2}', x_{1}x_{1}'}: \; x_{i}, x_{i}' \in X_{i}, i = 1, 2\}$ and $w'$ on $\{e_{a_{1}a_{1}', a_{2}a_{2}'}: \; a_{i}, a_{i}' \in A_{i}, i = 1, 2\}$ whose base step is provided by (\ref{equiv_three}) shows that
\begin{gather*}
	zw\otimes 1 \otimes z'w'\otimes 1 \sim wz\otimes 1 \otimes w'z'\otimes 1, \;\;\;\; z, w \in \cl{C}_{X_{2}, X_{1}}, z', w' \in \cl{C}_{A_{1}, A_{2}}
\end{gather*}
\noindent (see, for instance, the proofs of \cite[Theorem 3.2, Theorem 4.1]{synch_bhtt}). We may then conclude that the functional $\mathsf{T}$ on $\cl{C}_{X_{2}, X_{1}}\otimes_{\rm max} \cl{C}_{A_{1}, A_{2}}$, given by
\begin{gather*}
	\mathsf{T}(u\otimes v) = \tilde{s}(u\otimes 1 \otimes v\otimes 1), \;\;\;\; u \in \cl{C}_{X_{2}, X_{1}}, v \in \cl{C}_{A_{1}, A_{2}},
\end{gather*}
\noindent is a tracial state. Furthermore, (\ref{equiv_three}) implies $\Gamma = \Gamma_{\mathsf{T}}$. 

(ii) If $\Gamma$ is an approximately quantum jointly tracial correlation, again by the proof of Theorem \ref{state_correspondence} there exists a state $s :\cl{C}_{X_{2}, X_{1}}\otimes_{\rm min} \cl{C}_{X_{2}, X_{1}}\otimes_{\rm min} \cl{C}_{A_{1}, A_{2}}\otimes_{\rm min} \cl{C}_{A_{1}, A_{2}}\rightarrow \bb{C}$ such that $\Gamma = \Gamma_{s}$. As each approximately quantum SQNS correlation is quantum commuting, by (i) there exists some trace $\mathsf{T} : \cl{C}_{X_{2}, X_{1}}\otimes_{\rm max} \cl{C}_{A_{1}, A_{2}}\rightarrow \bb{C}$ such that $\Gamma = \Gamma_{\mathsf{T}}$ as well. By \cite[Lemma 9.2]{tt}, for any finite sets $X, Y$ there exists a $*$-isomorphism $\partial: \cl{C}_{X, Y}\rightarrow \cl{C}_{X, Y}^{\rm op}$ such that
\begin{gather*}
	\partial(e_{xx', yy'}) = e_{x'x, y'y}^{\rm op}, \;\;\;\; x, x' \in X, y, y' \in Y.
\end{gather*}
\noindent Let $\partial_{X}: \cl{C}_{X_{2}, X_{1}}\rightarrow \cl{C}_{X_{2}, X_{1}}^{\rm op}$ and $\partial_{A}: \cl{C}_{A_{1}, A_{2}}\rightarrow \cl{C}_{A_{1}, A_{2}}^{\rm op}$ be the $*$-isomorphisms corresponding to each ${\rm C}^{*}$-algebra. Let
\begin{gather*}
	\cl{F}: \cl{C}_{X_{2}, X_{1}}\otimes_{\rm min} \cl{C}_{X_{2}, X_{1}}\otimes_{\rm min} \cl{C}_{A_{1}, A_{2}}\otimes_{\rm min} \cl{C}_{A_{1}, A_{2}} \\\rightarrow \cl{C}_{X_{2}, X_{1}}\otimes_{\rm min} \cl{C}_{A_{1}, A_{2}}\otimes_{\rm min} \cl{C}_{X_{2}, X_{1}}\otimes_{\rm min} \cl{C}_{A_{1}, A_{2}}
\end{gather*}
\noindent be the flip operation, and 
\begin{gather*}
	q: \cl{C}_{X_{2}, X_{1}}\otimes_{\rm max} \cl{C}_{A_{1}, A_{2}}\rightarrow \cl{C}_{X_{2}, X_{1}}\otimes_{\rm min} \cl{C}_{A_{1}, A_{2}}
\end{gather*}
\noindent be the quotient map. Define linear functional
\begin{gather*}
	\mu: (\cl{C}_{X_{2}, X_{1}}\otimes_{\rm max} \cl{C}_{A_{1}, A_{2}})\otimes _{\rm min} (\cl{C}_{X_{2}, X_{1}}\otimes_{\rm max} \cl{C}_{A_{1}, A_{2}})^{\rm op} \rightarrow \bb{C}
\end{gather*}
\noindent via
\begin{gather*}
	\mu = s\circ \cl{F}\circ (q\otimes (q\circ (\partial_{X}^{-1}\otimes \partial_{A}^{-1}))).
\end{gather*}
\noindent One may easily check that $\mu(u \otimes v^{\rm op}) = \mathsf{T}(uv)$. Indeed: on the canonical generators we have
\begin{eqnarray*}
& &
	\mu(e_{x_{2}x_{2}', x_{1}x_{1}'}\otimes e_{a_{1}a_{1}', a_{2}a_{2}'}\otimes (e_{y_{2}'y_{2}, y_{1}'y_{1}}\otimes e_{b_{1}'b_{1}, b_{2}'b_{2}})^{\rm op}) \\
& = &
	s\circ \cl{F}(e_{x_{2}x_{2}', x_{1}x_{1}'}\otimes e_{a_{1}a_{1}', a_{2}a_{2}'}\otimes e_{y_{2}y_{2}', y_{1}y_{1}'}\otimes e_{b_{1}b_{1}', b_{2}b_{2}'}) \\
& = &
	s(e_{x_{2}x_{2}', x_{1}x_{1}'}\otimes e_{y_{2}y_{2}', y_{1}y_{1}'}\otimes e_{a_{1}a_{1}', a_{2}a_{2}'}\otimes e_{b_{1}b_{1}', b_{2}b_{2}'}) \\
& = &
	\mathsf{T}(e_{x_{2}x_{2}', x_{1}x_{1}'}e_{y_{2}'y_{2}, y_{1}'y_{1}}\otimes e_{a_{1}a_{1}', a_{2}a_{2}'}e_{b_{1}'b_{1}, b_{2}'b_{2}}).
\end{eqnarray*}
\noindent By \cite[Theorem 6.2.7]{bo}, trace $\mathsf{T}$ is amenable.  

(iii) Let $\Gamma$ be a perfect concurrent strategy in $\cl{Q}_{\rm sq}$. By the proof of Theorem \ref{state_correspondence} (iii), there exist finite-dimensional Hilbert spaces $H_{X}, H_{X'}, H_{A}, H_{A'}$, $*$-representations $\pi_{X}: \cl{C}_{X_{2}, X_{1}}\rightarrow \cl{B}(H_{X}), \pi_{X'}: \cl{C}_{X_{2}, X_{1}}\rightarrow \cl{B}(H_{X'}), \pi_{A}: \cl{C}_{A_{1}, A_{2}}\rightarrow \cl{B}(H_{A})$ and $\pi_{A'}: \cl{C}_{A_{1}, A_{2}}\rightarrow \cl{B}(H_{A'})$ along with a unit vector $\xi \in H_{X}\otimes H_{X'}\otimes H_{A}\otimes H_{A'}$ such that $\Gamma = \Gamma_{\tilde{s}}$, where 
\begin{gather*}
	\tilde{s}: \cl{C}_{X_{2}, X_{1}}\otimes_{\rm min} \cl{C}_{X_{2}, X_{1}}\otimes_{\rm min} \cl{C}_{A_{1}, A_{2}}\otimes_{\rm min} \cl{C}_{A_{1}, A_{2}}\rightarrow \bb{C}
\end{gather*}
\noindent is a state given by
\begin{gather}\label{q_concurrent_state_eqn}
	\tilde{s}(u) = \langle (\pi_{X}\otimes \pi_{X'}\otimes \pi_{A}\otimes \pi_{A'})(u))\xi, \xi\rangle,
\end{gather}
\noindent for $u \in \cl{C}_{X_{2}, X_{1}}\otimes_{\rm min}\cl{C}_{X_{2}, X_{1}}\otimes_{\rm min} \cl{C}_{A_{1}, A_{2}}\otimes_{\rm min} \cl{C}_{A_{1}, A_{2}}$. By the proof of (i), the state $\mathsf{T}: \cl{C}_{X_{2}, X_{1}}\otimes_{\rm min} \cl{C}_{A_{1}, A_{2}} \rightarrow \bb{C}$ constructed from $\tilde{s}$ is tracial, and in this case factors through a finite-dimensional $*$-representation. 

(iv) The proof of this statement is similar to the proof of Proposition \ref{qloc_prop}; we include the details for the benefit of the reader. Assume $\Gamma \in \cl{Q}_{\rm sloc}$ is a perfect concurrent strategy; by Remark \ref{sloc_convex_remark}, $\Gamma = \sum_{j=1}^{k}\lambda_{j}\Phi_{X}^{(j)}\otimes \Phi_{X'}^{(j)}\otimes\Phi_{A}^{(j)}\otimes \Phi_{A'}^{(j)}$ as a convex combination of quantum channels $\Phi_{X}^{(j)}, \Phi_{X'}^{(j)}: M_{X_{2}}\rightarrow M_{X_{1}}, \Phi_{A}^{(j)}, \Phi_{A'}^{(j)}: M_{A_{1}}\rightarrow M_{A_{2}}, j = 1, \hdots, k$. Let 
\begin{gather*}
	(\lambda_{x_{2}x_{2}', x_{1}x_{1}'}^{(j)})_{x_{1}, x_{1}'} = \Phi_{X}^{(j)}(\epsilon_{x_{2}x_{2}'}), \;\;\;\; (\lambda_{y_{2}y_{2}', y_{1}y_{1}'}^{(j)})_{y_{1}, y_{1}'} = \Phi_{X'}^{(j)}(\epsilon_{y_{2}y_{2}'}),  \\
	(\mu_{a_{1}a_{1}', a_{2}a_{2}'}^{(j)})_{a_{2}, a_{2}'} = \Phi_{A}^{(j)}(\epsilon_{a_{1}a_{1}'}), \;\;\;\; (\mu_{b_{1}b_{1}', b_{2}b_{2}'}^{(j)})_{b_{2}, b_{2}'} = \Phi_{A'}^{(j)}(\epsilon_{b_{1}b_{1}'}),
\end{gather*}
\noindent for $x_{2}, x_{2}', y_{2}, y_{2}' \in X_{2}, a_{1}, a_{1}', b_{1}, b_{1}' \in A_{1}$, and $\pi_{X}^{(j)}, \pi_{X'}^{(j)}: \cl{C}_{X_{2}, X_{1}}\rightarrow \bb{C}, \pi_{A}^{(j)}, \pi_{A'}^{(j)}: \cl{C}_{A_{1}, A_{2}}\rightarrow \bb{C}$ be the $*$-representations given by
\begin{gather*}
	\pi_{X}^{(j)}(e_{x_{2}x_{2}', x_{1}x_{1}'}) = \lambda_{x_{2}x_{2}', x_{1}x_{1}'}^{(j)}, \;\;\;\; \pi_{X'}^{(j)}(e_{y_{2}y_{2}', y_{1}y_{1}'}) = \lambda_{y_{2}y_{2}', y_{1}y_{1}'}^{(j)}, 
\\ 	\pi_{A}^{(j)}(e_{a_{1}a_{1}', a_{2}a_{2}'}) = \mu_{a_{1}a_{1}', a_{2}a_{2}'}^{(j)}, \;\;\;\; \pi_{A'}^{(j)}(e_{b_{1}b_{1}', b_{2}b_{2}'}) = \mu_{b_{1}b_{1}', b_{2}b_{2}'}^{(j)}
\end{gather*}
\noindent for $j = 1, \hdots, k$. Furthermore, let $\pi_{X}, \pi_{X'}: \cl{C}_{X_{2}, X_{1}}\rightarrow \cl{B}(\bb{C}^{k}), \pi_{A}, \pi_{A'}: \cl{C}_{A_{1}, A_{2}}\rightarrow \cl{B}(\bb{C}^{k})$ be the $*$-representations given by
\begin{gather*}
	\pi_{X}(u) = \sum\limits_{j=1}^{k}\pi_{X}^{(j)}(u)\epsilon_{jj}, \;\;\;\; \pi_{X'}(u) = \sum\limits_{j=1}^{k}\pi_{X'}^{(j)}(u)\epsilon_{jj}, 
\\ 	\pi_{A}(v) = \sum\limits_{j=1}^{k}\pi_{A}^{(j)}(v)\epsilon_{jj}, \;\;\;\; \pi_{A'}(v) = \sum\limits_{j=1}^{k}\pi_{A'}^{(j)}(v)\epsilon_{jj},
\end{gather*}
\noindent for $u \in \cl{C}_{X_{2}, X_{1}}, v \in \cl{C}_{A_{1}, A_{2}}$. Clearly, the images of $\pi_{X}, \pi_{X'}, \pi_{A}, \pi_{A'}$ are all abelian. If we then set $\xi = \sum_{j=1}^{k}\sqrt{\lambda_{j}}e_{j}\otimes e_{j}\otimes e_{j}\otimes e_{j} \in \bb{C}^{k}\otimes \bb{C}^{k} \otimes \bb{C}^{k}\otimes \bb{C}^{k}$, we have
\begin{gather*}
	\Gamma(\epsilon_{x_{2}x_{2}'}\otimes \epsilon_{y_{2}y_{2}'}\otimes \epsilon_{a_{1}a_{1}'}\otimes \epsilon_{b_{1}b_{1}'})  
	\\ = \bigg(\langle(\pi_{X}(e_{x_{2}x_{2}',x_{1}x_{1}'})\otimes \pi_{X'}(e_{y_{2}y_{2}', y_{1}y_{1}'})\otimes \pi_{A}(e_{a_{1}a_{1}', a_{2}a_{2}'})\otimes \pi_{A'}(e_{b_{1}b_{1}', b_{2}b_{2}'}))\xi, \xi\rangle\bigg)_{x_{1}x_{1}', y_{1}y_{1}'}^{a_{2}a_{2}', b_{2}b_{2}'},
\end{gather*}
\noindent with corresponding state $\tilde{s}$ given by 
\begin{gather*}
	\tilde{s}(u) = \langle (\pi_{X}\otimes \pi_{X'}\otimes \pi_{A}\otimes \pi_{A'})(u)\xi, \xi\rangle, 
\end{gather*}
\noindent for  $u \in \cl{C}_{X_{2}, X_{1}}\otimes_{\rm max} \cl{C}_{X_{2}, X_{1}}\otimes_{\rm max} \cl{C}_{A_{1}, A_{2}}\otimes_{\rm max} \cl{C}_{A_{1}, A_{2}}$. Argue now as in (iii) to conclude that tracial state $\mathsf{T}$ factors through an abelian $*$-representation of $\cl{C}_{X_{2}, X_{1}}\otimes_{\rm max} \cl{C}_{A_{1}, A_{2}}$.
\end{proof}

\begin{theorem}\label{simulated_tracial}
Let ${\rm t} \in \{\rm loc, q, qa, qc\}$. If $\Gamma \in \cl{Q}_{\rm st}$ is jointly tracial, and $\cl{E} \in \cl{Q}_{\rm t}$ is tracial, then $\Gamma[\cl{E}]$ is tracial. 
\end{theorem}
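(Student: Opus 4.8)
The plan is to manufacture a single tracial state on $\cl{C}_{X_2, A_2}$ out of the data attached to $\Gamma$ and $\cl{E}$, and to push it through the surjective $*$-homomorphism of Remark \ref{hom_from_max_tensor_prod}. Since the SQNS classes satisfy $\cl{Q}_{\rm sloc}\subseteq \cl{Q}_{\rm sq}\subseteq \cl{Q}_{\rm sqa}\subseteq \cl{Q}_{\rm sqc}$, any jointly tracial $\Gamma \in \cl{Q}_{\rm st}$ with ${\rm t}\in\{\rm loc,q,qa,qc\}$ is in particular a jointly tracial \emph{quantum commuting} SQNS correlation, so Theorem \ref{joint_tracial_strategies_thm}(i) supplies a tracial state $\mathsf{T}$ on $\cl{C}_{X_2, X_1}\otimes_{\rm max}\cl{C}_{A_1, A_2}$ with $\Gamma = \Gamma_{\mathsf{T}}$. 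Likewise, traciality of $\cl{E}$ gives a tracial state $\tau_{\cl{E}}$ on $\cl{C}_{X_1, A_1}$ with $\cl{E} = \Gamma_{\tau_{\cl{E}}}$.

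First I would form the product functional $\mathsf{T}\otimes \tau_{\cl{E}}$ on $\cl{C}_{X_2, X_1}\otimes_{\rm max}\cl{C}_{A_1, A_2}\otimes_{\rm max}\cl{C}_{X_1, A_1}$, identified via associativity of $\otimes_{\rm max}$ with $(\cl{C}_{X_2, X_1}\otimes_{\rm max}\cl{C}_{A_1, A_2})\otimes_{\rm max}\cl{C}_{X_1, A_1}$. The product of states extends to a state on the maximal tensor product, and traciality is inherited: on elementary tensors $(\mathsf{T}\otimes\tau_{\cl{E}})((u\otimes w)(u'\otimes w')) = \mathsf{T}(uu')\tau_{\cl{E}}(ww') = \mathsf{T}(u'u)\tau_{\cl{E}}(w'w)$, and since the algebraic tensor product is dense and the identity $\phi(ab)=\phi(ba)$ is closed, $\mathsf{T}\otimes\tau_{\cl{E}}$ is a trace. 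Composing with the $*$-homomorphism $\pi\colon \cl{C}_{X_2, A_2}\to \cl{C}_{X_2, X_1}\otimes_{\rm max}\cl{C}_{A_1, A_2}\otimes_{\rm max}\cl{C}_{X_1, A_1}$ of Remark \ref{hom_from_max_tensor_prod} yields a state $\tau := (\mathsf{T}\otimes\tau_{\cl{E}})\circ\pi$ on $\cl{C}_{X_2, A_2}$, which is again tracial since $\tau(ab) = (\mathsf{T}\otimes\tau_{\cl{E}})(\pi(a)\pi(b)) = (\mathsf{T}\otimes\tau_{\cl{E}})(\pi(b)\pi(a)) = \tau(ba)$.

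The heart of the matter is then to verify $\Gamma[\cl{E}] = \Gamma_{\tau}$. I would evaluate $\tau(e_{x_2, x_2', a_2, a_2'}\, e_{y_2', y_2, b_2', b_2})$ by inserting the explicit formula for $\pi$ from Remark \ref{hom_from_max_tensor_prod} into each factor, multiplying componentwise across the three maximal legs, and slicing. The first two legs are absorbed by $\mathsf{T}$, which by $\Gamma=\Gamma_{\mathsf{T}}$ reproduces the Choi entry $\Gamma(x_1x_1', y_1y_1', a_2a_2', b_2b_2' \mid x_2x_2', y_2y_2', a_1a_1', b_1b_1')$, while the third leg is absorbed by $\tau_{\cl{E}}$, which by $\cl{E}=\Gamma_{\tau_{\cl{E}}}$ reproduces $\cl{E}(a_1a_1', b_1b_1' \mid x_1x_1', y_1y_1')$. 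Summing over the internal indices $x_1,x_1',y_1,y_1',a_1,a_1',b_1,b_1'$ is exactly the right-hand side of the simulation formula (\ref{choi_simulated_eqn}), i.e. the Choi entry $\Gamma[\cl{E}](a_2a_2', b_2b_2' \mid x_2x_2', y_2y_2')$. Matching this against the defining formula $\Gamma_{\tau}(a_2a_2', b_2b_2' \mid x_2x_2', y_2y_2') = \tau(e_{x_2, x_2', a_2, a_2'}\, e_{y_2', y_2, b_2', b_2})$ proves $\Gamma[\cl{E}] = \Gamma_{\tau}$, and hence $\Gamma[\cl{E}]$ is tracial.

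I expect the main obstacle to be bookkeeping rather than conceptual: one must align the reversed index conventions built into the tracial map $\Gamma_{\tau}$ (the factor $e_{y_2', y_2, b_2', b_2}$) with those in the formulas for $\Gamma_{\mathsf{T}}$ and $\Gamma_{\tau_{\cl{E}}}$, so that after taking the componentwise product and slicing, the surviving summation is precisely (\ref{choi_simulated_eqn}) rather than a permuted variant. A secondary point to confirm carefully is that $\mathsf{T}\otimes\tau_{\cl{E}}$ is genuinely tracial on the \emph{maximal} tensor product (not merely on the algebraic one), which follows from density and the continuity argument indicated above.
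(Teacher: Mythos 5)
Your proposal is correct and follows essentially the same route as the paper: extract $\mathsf{T}$ on $\cl{C}_{X_{2},X_{1}}\otimes_{\rm max}\cl{C}_{A_{1},A_{2}}$ from Theorem \ref{joint_tracial_strategies_thm} and $\tau$ from the traciality of $\cl{E}$, form the (tracial) product state on the threefold maximal tensor product, pull it back along the $*$-homomorphism of Remark \ref{hom_from_max_tensor_prod}, and match Choi entries against (\ref{choi_simulated_eqn}). The only cosmetic difference is that you reduce all types to the quantum commuting case via the inclusions of the SQNS classes, which suffices since the conclusion only asserts traciality, whereas the paper invokes the type-specific parts (ii)--(iv) of Theorem \ref{joint_tracial_strategies_thm} for the remaining cases.
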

\begin{proof}
We handle the case when ${\rm t} = {\rm qc}$. By Theorem \ref{joint_tracial_strategies_thm} and \cite[Theorem 4.1]{synch_bhtt}, there exists tracial states $\mathsf{T}: \cl{C}_{X_{2}, X_{1}}\otimes_{\rm max} \cl{C}_{A_{1}, A_{2}}\rightarrow \bb{C}$ and $\tau: \cl{C}_{X_{1}, A_{1}}\rightarrow \bb{C}$ such that $\Gamma = \Gamma_{\mathsf{T}}$ and $\cl{E} = \cl{E}_{\tau}$. Let 
\begin{gather*}
	\mathsf{T}\odot \tau : \cl{C}_{X_{2}, X_{1}}\otimes_{\rm max} \cl{C}_{A_{1}, A_{2}}\otimes_{\rm max} \cl{C}_{X_{1}, A_{1}}\rightarrow \bb{C}
\end{gather*}
\noindent be the linear map given by
\begin{gather*}
	(\mathsf{T}\odot \tau)(u\otimes v\otimes w) = \mathsf{T}(u\otimes v)\tau(w), \;\;\;\; u \in \cl{C}_{X_{2}, X_{1}}, v\in \cl{C}_{A_{1}, A_{2}}, w \in \cl{C}_{X_{1}, A_{1}}.
\end{gather*}
That this is a state follows from \cite[Proposition 4.23]{takesaki}; that it is also tracial follows from a standard argument using the (automatic) continuity of the mapping on the maximal tensor product, in conjunction with calculations on the dense subset of finite linear combinations of simple tensors. Define tracial state $\tilde{\tau}$ on $\cl{C}_{X_{2}, A_{2}}$ in the following way: by Remark \ref{hom_from_max_tensor_prod}, there exists a surjective $*$-representation $\pi: \cl{C}_{X_{2}, A_{2}}\rightarrow \cl{C}_{X_{2}, X_{1}}\otimes_{\rm max} \cl{C}_{A_{1}, A_{2}}\otimes_{\rm max} \cl{C}_{X_{1}, A_{1}}$ sending $\pi(e_{x_{2}x_{2}', a_{2}a_{2}'}) = \tilde{e}_{x_{2}x_{2}', a_{2}a_{2}'}$ for $x_{2}, x_{2}' \in X_{2}, a_{2}, a_{2}' \in A_{2}$. Using $\pi$, we let
\begin{gather}\label{constructed_tracial_state}
	\tilde{\tau}(u) := ((\mathsf{T}\odot \tau) \circ \pi)(u), \;\;\;\; u \in \cl{C}_{X_{2}, A_{2}}.
\end{gather}
\noindent We note
\begin{eqnarray*}
& &
\Gamma[\cl{E}](a_{2}a_{2}', b_{2}b_{2}'|x_{2}x_{2}', y_{2}y_{2}') \\
& = &
\sum\limits_{\substack{x_{1}, x_{1}' \\ y_{1}, y_{1}'}}\sum\limits_{\substack{a_{1}, a_{1}' \\ b_{1}, b_{1}'}}\Gamma(x_{1}x_{1}', y_{1}y_{1}', a_{2}a_{2}', b_{2}b_{2}'|x_{2}x_{2}', y_{2}y_{2}', a_{1}a_{1}', b_{1}b_{1}')\cl{E}(a_{1}a_{1}', b_{1}b_{1}'|x_{1}x_{1}', y_{1}y_{1}') \\
& = &
\sum\limits_{\substack{x_{1}, x_{1}' \\ y_{1}, y_{1}'}}\sum\limits_{\substack{a_{1}, a_{1}' \\ b_{1}, b_{1}'}}(\mathsf{T}\odot \tau)(e_{x_{2}x_{2}', x_{1}x_{1}'}e_{y_{2}'y_{2}, y_{1}'y_{1}}\otimes e_{a_{1}a_{1}', a_{2}a_{2}'}e_{b_{1}'b_{1}, b_{2}'b_{2}}\otimes f_{x_{1}x_{1}', a_{1}a_{1}'}f_{y_{1}'y_{1}, b_{1}'b_{1}}) \\
& = &
(\mathsf{T}\odot \tau)(\pi(e_{x_{2}x_{2}', a_{2}a_{2}}e_{y_{2}'y_{2}, b_{2}'b_{2}})) \\
& = &
\tilde{\tau}(e_{x_{2}x_{2}', a_{2}a_{2}'}e_{y_{2}'y_{2}, b_{2}'b_{2}}),
\end{eqnarray*}
\noindent for $x_{2}, x_{2}', y_{2}, y_{2}' \in X_{2}, a_{2}, a_{2}', b_{2}, b_{2}' \in A_{2}$. This shows $\Gamma[\cl{E}] = \Gamma_{\tilde{\tau}}$, and so by \cite[Theorem 4.10]{synch_bhtt} once more we conclude that $\Gamma[\cl{E}]$ is tracial. For ${\rm t} \in \{\rm loc, q, qa\}$, the argument is similar once we use Theorem \ref{joint_tracial_strategies_thm} (ii)-(iv). 
\end{proof}


For finite sets $X, A$, let
\begin{gather*}
	E_{XA} = (e_{x, x', a, a'})_{x, x', a, a'}, \;\;\;\; E_{XA}^{\rm op} = (e_{x', x, a', a}^{\rm op})_{x, x', a, a'},
\end{gather*}
\noindent be considered as elements in $M_{XA}\otimes \mathfrak{C}_{X, A}$ and $M_{XA}\otimes \mathfrak{C}_{X, A}^{\rm op}$, respectively. If $P \in \cl{P}_{XX}, Q \in \cl{P}_{AA}$, define a linear map
\begin{gather}\label{alg_map}
	\gamma_{P, Q}: M_{XX}\otimes M_{AA}\otimes \mathfrak{C}_{X, A}\otimes \mathfrak{C}_{X, A}^{\rm op} \rightarrow \mathfrak{C}_{X, A}
\end{gather}
\noindent by setting
\begin{gather*}
	\gamma_{P, Q}(\omega\otimes u\otimes v^{\rm op}) = {\rm Tr}(\omega(P\otimes Q))uv, \;\;\;\; \omega \in M_{XX}\otimes M_{AA}, u, v \in \mathfrak{C}_{X, A}. 
\end{gather*}
\noindent If $\varphi: \cl{P}_{XX}\rightarrow \cl{P}_{AA}$ is any quantum non-local game, let
\begin{gather*}
	\mathfrak{J}(\varphi) = \bigg\langle \gamma_{P, \varphi(P)^{\perp}}(E_{XA}\otimes E_{XA}^{\rm op}): \; P \in \cl{P}_{XX}\bigg\rangle
\end{gather*}
\noindent be the generated $*$-ideal in $\mathfrak{C}_{X, A}$, and $J(\varphi)$ be the corresponding ideal considered in $\cl{C}_{X, A}$. Finally, write $\mathfrak{C}(\varphi) = \mathfrak{C}_{X, A}/\mathfrak{J}(\varphi)$ (resp. $\cl{C}(\varphi) = \cl{C}_{X, A}/J(\varphi)$) for the quotient $*$-algebra (resp. quotient ${\rm C}^{*}$-algebra). Perfect strategies for a concurrent quantum game $\varphi$ were shown to correspond to tracial states acting on $*$-representations of $\cl{C}_{X, A}$ which annihilate $\mathfrak{J}(\varphi)$ or $J(\varphi)$ in \cite{synch_bhtt, bhtt}. In light of previous results, we may conclude by giving an algebraic characterization of our transfer of perfect strategies between quantum games.
\begin{theorem}
Let $X_{i}, A_{i}, i = 1, 2$ be finite sets, $P_{i} \in M_{X_{i}X_{i}}, Q_{i} \in M_{A_{i}A_{i}}, i = 1, 2$ be projections, and ${\rm t} \in \{\rm loc, q, qa, qc\}$. If $\overline{\cl{U}}_{P_{1}, Q_{1}}\rightarrow_{\rm st} \cl{U}_{P_{2}, Q_{2}}$ via jointly tracial $\Gamma_{\mathsf{T}}$, then for any tracial state $\tau$ on $\cl{C}(\varphi_{P_{1}\rightarrow Q_{1}})$, the tracial state $\tilde{\tau}$ given in (\ref{constructed_tracial_state}) restricts to a tracial state on $\cl{C}(\varphi_{P_{2}\rightarrow Q_{2}})$.  
\end{theorem}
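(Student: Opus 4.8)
The plan is to reduce the statement to the two-way dictionary between perfect tracial strategies for concurrent (implication) games and tracial states annihilating the game ideal, as established in \cite{synch_bhtt, bhtt}, and then to invoke the transfer machinery already in place, namely Theorem \ref{perfect_strats_imp_games} and Theorem \ref{simulated_tracial}. The key observation is that the traciality of $\tilde{\tau}$ as a state on $\cl{C}_{X_{2}, A_{2}}$ has already been secured in Theorem \ref{simulated_tracial} (through Theorem \ref{joint_tracial_strategies_thm}), so the only genuinely new content is that $\tilde{\tau}$ annihilates the ideal $J(\varphi_{P_{2}\rightarrow Q_{2}})$ and hence descends to the quotient.

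First I would unwind the hypothesis on $\tau$. A tracial state $\tau$ on $\cl{C}(\varphi_{P_{1}\rightarrow Q_{1}})$ pulls back along the quotient map $\cl{C}_{X_{1}, A_{1}}\rightarrow \cl{C}(\varphi_{P_{1}\rightarrow Q_{1}})$ to a tracial state on $\cl{C}_{X_{1}, A_{1}}$ (still written $\tau$) annihilating $J(\varphi_{P_{1}\rightarrow Q_{1}})$. By the characterization of perfect strategies for concurrent games in \cite{synch_bhtt, bhtt}, the associated tracial QNS correlation $\cl{E} := \cl{E}_{\tau}$ is then a perfect ${\rm t}$-strategy for $\varphi_{P_{1}\rightarrow Q_{1}}$, with the type ${\rm t}$ dictated by the representation through which $\tau$ factors (finite-dimensional for ${\rm q}$, abelian for ${\rm loc}$, and so on).

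Next, with $\Gamma_{\mathsf{T}}\in \cl{Q}_{\rm st}$ jointly tracial and implementing $\overline{\cl{U}}_{P_{1}, Q_{1}}\rightarrow_{\rm st}\cl{U}_{P_{2}, Q_{2}}$, Theorem \ref{perfect_strats_imp_games} yields that $\Gamma_{\mathsf{T}}[\cl{E}]$ is a perfect ${\rm t}$-strategy for $\varphi_{P_{2}\rightarrow Q_{2}}$. Moreover, Theorem \ref{simulated_tracial} shows $\Gamma_{\mathsf{T}}[\cl{E}]$ is itself tracial; crucially, its proof identifies $\Gamma_{\mathsf{T}}[\cl{E}]$ explicitly as $\Gamma_{\tilde{\tau}}$, where $\tilde{\tau}$ is precisely the state constructed in (\ref{constructed_tracial_state}) from $\mathsf{T}\odot \tau$ and the surjection $\pi$ of Remark \ref{hom_from_max_tensor_prod}. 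Thus the analytic object $\Gamma_{\mathsf{T}}[\cl{E}]$ and the algebraic object $\tilde{\tau}$ are already matched, and no extra bookkeeping relating the two constructions is required. I would then close the loop by running the dictionary in reverse: since $\Gamma_{\tilde{\tau}} = \Gamma_{\mathsf{T}}[\cl{E}]$ is a \emph{perfect} tracial strategy for $\varphi_{P_{2}\rightarrow Q_{2}}$, the same characterization from \cite{synch_bhtt, bhtt} forces the tracial state $\tilde{\tau}$ on $\cl{C}_{X_{2}, A_{2}}$ to annihilate $J(\varphi_{P_{2}\rightarrow Q_{2}})$. Consequently $\tilde{\tau}$ factors through $\cl{C}_{X_{2}, A_{2}}\rightarrow \cl{C}(\varphi_{P_{2}\rightarrow Q_{2}})$ and restricts to a tracial state there, as desired.

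I expect the main obstacle to be a careful matching of the two invocations of the correspondence with the correct correlation type ${\rm t}$, and in particular the verification that perfectness of $\Gamma_{\tilde{\tau}}$ really is the vanishing of $\tilde{\tau}$ on the generators of $J(\varphi_{P_{2}\rightarrow Q_{2}})$. Concretely, one must check that $\tilde{\tau}\big(\gamma_{P, \varphi_{P_{2}\rightarrow Q_{2}}(P)^{\perp}}(E_{X_{2}A_{2}}\otimes E_{X_{2}A_{2}}^{\rm op})\big) = \langle \Gamma_{\tilde{\tau}}(P), \varphi_{P_{2}\rightarrow Q_{2}}(P)^{\perp}\rangle$ for every projection $P \in \cl{P}_{X_{2}X_{2}}$, which is a direct computation unwinding the definitions of $\gamma_{P, Q}$ and of a tracial correlation. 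This reduces to the distinguished generator $P = P_{2}$, since every other projection contributes either $\varphi_{P_{2}\rightarrow Q_{2}}(P)^{\perp} = 0$ or $P = 0$ and hence a trivial generator.
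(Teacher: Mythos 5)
Your proposal is correct and follows essentially the same route as the paper: pull $\tau$ back to a perfect tracial strategy $\cl{E}_{\tau}$ for $\varphi_{P_{1}\rightarrow Q_{1}}$ via the correspondence of \cite{synch_bhtt}, apply Theorem \ref{simulated_tracial} to identify $\Gamma_{\mathsf{T}}[\cl{E}_{\tau}]$ with $\Gamma_{\tilde{\tau}}$, and apply Theorem \ref{perfect_strats_imp_games} to get perfectness for $\varphi_{P_{2}\rightarrow Q_{2}}$, so that $\tilde{\tau}$ annihilates $J(\varphi_{P_{2}\rightarrow Q_{2}})$. Your closing remarks on unwinding the perfectness--annihilation dictionary and reducing to the generator at $P = P_{2}$ are sound additional detail that the paper simply delegates to the cited results.
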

\begin{proof}
As corresponding QNS correlation $\cl{E}_{\tau}$ is both tracial and a perfect strategy for $\varphi_{P_{1}\rightarrow Q_{1}}$ (via \cite[Corollary 4.4]{synch_bhtt}), by Theorem \ref{simulated_tracial} we know $\Gamma_{\mathsf{T}}[\cl{E}_{\tau}]$ is tracial, and corresponds to tracial state $\tilde{\tau}$ given in (\ref{constructed_tracial_state}). Furthermore, by Theorem \ref{perfect_strats_imp_games} $\Gamma_{\mathsf{T}}[\cl{E}_{\tau}]$ is a perfect strategy for $\varphi_{P_{2}\rightarrow Q_{2}}$; thus, $\tilde{\tau}$ annihilates $J(\varphi_{P_{2}\rightarrow Q_{2}})$ as claimed. 
\end{proof}

\begin{remark}
\rm We wish to comment on the specific case when both games are the quantum graph isomorphism game. Let $X_{i} = A_{i}, i = 1, 2$, let $\cl{U}_{i} \subseteq \bb{C}^{X_{i}}\otimes \bb{C}^{X_{i}}, i = 1, 2$ be quantum graphs, and set $P_{i} := P_{\cl{U}_{i}}, i = 1, 2$. 

Recall, that, for finite set $X$, the ${\rm C}^{*}$-algebra of the free unitary quantum group $C(\cl{U}_{X}^{+})$ is the universal unital ${\rm C}^{*}$-algebra generated by the entries $u_{x, a}$ of an $X\times X$ matrix $U = (u_{x, a})_{x, a \in X}$ under the condition that $U$ and $U^{\rm t}$ are unitary (known otherwise as a bi-unitary matrix). Subalgebra $C(\bb{P}\cl{U}_{X}^{+})$ is generated by length two words of the form $u_{x, a}^{*}u_{x', a'}, x, x', a, a' \in X$- and was shown in \cite[Corollary 4.1]{banica} to be the proper quantization of the automorphism group of $M_{X}$. In \cite[Theorem 6.7]{bhtt}, concurrent bicorrelations of types ${\rm t}\in \{\rm loc, q, qc\}$ (and therefore, strategies for the quantum graph isomorphism game) were shown to be in correspondence with different tracial states on $C(\bb{P}\cl{U}_{X}^{+})$.

Abusing notation, for $S, T \in M_{XX}$ let 
\begin{gather*}
	\gamma_{S, T} : M_{XX}\otimes C(\bb{P}\cl{U}_{X}^{+}) \otimes M_{XX}\otimes C(\bb{P}\cl{U}_{X}^{+})^{\rm op} \rightarrow C(\bb{P}\cl{U}_{X}^{+})
\end{gather*}
\noindent be defined as in (\ref{alg_map}). We let $\tilde{U} = (u_{x, x', a, a'})_{x, x', a, a'} \in M_{XX}(C(\bb{P}\cl{U}_{X}^{+}))$, and for $P \in {\rm Proj}(M_{XX})$ set
\begin{gather*}
	\cl{I}_{P, P} := \bigg\langle \gamma_{P, P^{\perp}}(\tilde{U}\otimes \tilde{U}^{\rm op}), \gamma_{P^{\perp}, P}(\tilde{U}\otimes \tilde{U}^{\rm op})\bigg\rangle.
\end{gather*}
\noindent Finally, set $\cl{A}_{P, P} := C(\bb{P}\cl{U}_{X}^{+})/\cl{I}_{P, P}$. In \cite[Remark 7.12]{bhtt}, it was shown that ${\rm C}^{*}$-algebra $\cl{A}_{P, P}$ can be endowed with a natural co-associative comultiplication $\Delta_{P}: \cl{A}_{P, P}\rightarrow \cl{A}_{P, P}\otimes \cl{A}_{P, P}$; this means $\cl{A}_{P, P}$ can be viewed as a compact quantum group. By \cite[Theorem 7.10]{bhtt} and Theorem \ref{graph_isomorphism_game}, we see that the existence of a quantum hypergraph isomorphism $\Gamma$ between $\overline{\cl{U}}_{P_{1}, P_{1}}$ and $\cl{U}_{P_{2}, P_{2}}$ ensures a way to construct $*$-representations of compact quantum group $\cl{A}_{P_{2}, P_{2}}$ from $*$-reps of $\cl{A}_{P_{1}, P_{1}}$; that is, a way to transfer quantum automorphisms between quantum graphs. 
\end{remark}


\end{document}